\theoremstyle{plain}
\newtheorem{thm}{\protect\theoremname}
\theoremstyle{plain}
\newtheorem{prop}[thm]{\protect\propositionname}
\theoremstyle{definition}
\theoremstyle{definition}
\newtheorem{ass}{\protect\assumptionname}[section]
\theoremstyle{definition}
\newtheorem*{rem}{\protect\remarkname}
\theoremstyle{plain}
\theoremstyle{plain}
\newtheorem{cor}[thm]{\protect\corollaryname}
\newtheorem{result}{Result}
\providecommand{\definitionname}{Definition}
\providecommand{\propositionname}{Proposition}
\providecommand{\theoremname}{Theorem}
\providecommand{\assumptionname}{Assumption}
\providecommand{\remarkname}{Remark}
\providecommand{\lemmaname}{Lemma}
\providecommand{\corollaryname}{Corollary}
\title{Quantifying the effectiveness of linear preconditioning in Markov chain Monte Carlo}
\author{ \href{https://maxhhird.github.io/}{Max Hird} \\
	Department of Statistical Science\\
	University College London\\
	1–19 Torrington Place, London WC1E 7HB \\
	\texttt{max.hird.19@ucl.ac.uk} \\
	\And
	\href{https://samueljlivingstone.wixsite.com/webpage}{Samuel Livingstone} \\
	Department of Statistical Science\\
	University College London\\
	1–19 Torrington Place, London WC1E 7HB \\
}
\date{}
\begin{document}
\maketitle

\begin{abstract}
	We study linear preconditioning in Markov chain Monte Carlo. We consider the class of well-conditioned distributions, for which several mixing time bounds depend on the condition number  $\kappa$. First we show that well-conditioned distributions exist for which $\kappa$ can be arbitrarily large and yet no linear preconditioner can reduce it. We then impose two sets of extra assumptions under which a linear preconditioner can significantly reduce $\kappa$.  For the random walk Metropolis we further provide upper and lower bounds on the spectral gap with tight $1/\kappa$ dependence. This allows us to give conditions under which linear preconditioning can provably increase the gap. We then study popular preconditioners such as the covariance, its diagonal approximation, the hessian at the mode, and the QR decomposition. We show conditions under which each of these reduce $\kappa$ to near its minimum. We also show that the diagonal approach can in fact \textit{increase} the condition number. This is of interest as diagonal preconditioning is the default choice in well-known software packages. We conclude with a numerical study comparing preconditioners in different models, and showing how proper preconditioning can greatly reduce compute time in Hamiltonian Monte Carlo.
\end{abstract}

\keywords{Markov chain Monte Carlo \and Preconditioning \and Bayesian inference \and Bayesian Computation \and Condition Number}

\tableofcontents

\section{Introduction}

Markov chain Monte Carlo (MCMC) is ubiquitous in the field of Bayesian computation. Decades of use has proven its effectiveness and longevity in the face of major changes to the structure of statistical models.
To keep pace with the increasingly complex problems faced by practitioners, the MCMC community has developed powerful and flexible algorithms within the framework of \cite{metropolis:53, hastings:70}.  Prominent examples include the Metropolis adjusted Langevin algorithm (MALA) and Hamiltonian Monte Carlo (HMC) \citep{roberts:96a, neal:11}.

Preconditioning is a technique that is now used throughout applied mathematics, including MCMC sampling, but was originally conceived to improve the stability of iterative solvers in linear algebra. The \emph{conditioning} of a particular problem corresponds to how easily it can be solved, and is quantified by a \emph{condition number}. The term `condition number' was first coined by \citet{turing:48}, although similar quantities are used in earlier works \citep{wittmeyer:36, vonneumann:47}. In its most basic form preconditioning can be thought of as transforming to a space in which the problem at hand is both easier to solve and more amenable to implementation on a computer. Assuming that the transformation is invertible, the idea is to solve the problem in the transformed space and then transform the solution back to the original space using the inverse.

Recent works by \citet{dalalyan:17, durmus2017nonasymptotic, chen:20, mangoubi2021mixing, chewi:21, lee:21, andrieu:22} and others establish non-asymptotic mixing time bounds for various MCMC samplers. The choice of distance metric $\mathcal{D}$ can vary, but excluding polylogarithmic terms each bound can be written  in the form
\[
\inf \{n>0: \mathcal{D}(P^n, \Pi) < \epsilon\} \leq C_\epsilon \kappa^\alpha d^\beta,
\]
where $P^n$ denotes the marginal distribution of the $n$th state, $\Pi$ is the equilibrium distribution, $d$ is the dimension of the state space, $\alpha$ and $\beta$ are positive constants and $C_\epsilon>0$ is an explicit constant depending on $\epsilon >0$ (see Table \ref{fig:bounds_table} for more detail).  The quantity $\kappa$, defined in equation \eqref{condition_number}, is a form of condition number. \cite{lee:21} show that for MALA and HMC there are strongly log-concave distributions for which $\alpha \geq 1$, and the best known bounds to our knowledge for these methods set $\alpha \geq 1$.  It is clear, therefore, that a smaller value for $\kappa$ results in a smaller mixing time upper bound.

In this paper we study how preconditioning an MCMC sampler affects the particular form of condition number $\kappa$ used by the MCMC community.  We focus on \emph{linear preconditioning}, meaning the chosen transformation is linear. We note that linear preconditioning is widely used in practice, but largely under-explored theoretically in MCMC (one notable exception is \cite{roberts:01}, who present asymptotic results in particular cases).  We consider common choices of linear preconditioner and explore the extent to which they will reduce the condition number of a well-conditioned probability distribution (formally defined in Section \ref{subsec:well-conditioned}).


\subsection{Notation}
\label{subsec:notation}

We use $\mathcal{X}$ to denote the Borel $\sigma$-field of $\mathbb{R}^d$, and denote the set of non-negative real numbers $\mathbb{R}^+$. Let $\Pi$ be a probability measure on $(\mathbb{R}^d,\mathcal{X})$ with Lebesgue density $\pi(x) \propto \exp(-U(x))$, where $U:\mathbb{R}^d \to \mathbb{R}$ is called the \emph{potential}. We will often call $\Pi$ \emph{the target}. The normal distribution with mean $\mu\in\mathbb{R}^d$ and covariance $\Sigma\in\mathbb{R}^{d \times d}$ will be written $\mathcal{N}(\mu, \Sigma)$ and its density at $x \in\mathbb{R}^d$ will be denoted $\mathcal{N}(x;\mu,\Sigma)$. Given a diffeomorphism $g:\mathbb{R}^d \to \mathbb{R}^d$ we define the pushforward of $\Pi$ under $g$ as the measure $g_{\#}\Pi$ such that $g_{\#}\Pi(A) := \Pi(g^{-1}(A))$ for all $A \in\mathcal{X}$. It follows that $g_{\#}\Pi$ has Lebesgue density $g_{\#}\pi(y):= \pi(g^{-1}(y))\left|\det J_{g^{-1}}(y)\right|$ where $J_{g^{-1}}(y)$ is the Jacobian of $g^{-1}$ at $y\in\mathbb{R}^d$.

We denote by $L^2(\Pi)$ the Hilbert space of functions that are square integrable with respect to $\Pi$ with inner product
\[
\langle f,g\rangle := \int f(x)g(x)\pi(x)dx,
\]
and write $L_0^2(\Pi) := \{ f \in L^2(\Pi) : \int f(x) \pi(x)dx = 0\}$. When $P$ is a $\Pi$-invariant Markov kernel we define an associated bounded linear operator $P:L^2(\Pi) \to L^2(\Pi)$ as $Pf(x) := \int P(x, dy)f(y)$. The action of $P$ on a probability measure $\nu$ is defined as $\nu P(A) := \int\int_{y\in A}\nu(dx)P(x, dy)$ for all $A\in\mathcal{X}$. We define the actions of $P^n$ for $n>1$ recursively using $P^nf(x) := P^{n - 1}(Pf(x))$ for all $x \in \mathbb{R}^d$ and $\nu P^n(A) := \int \int_{y\in A} \nu P^{n-1}(dx)P(x, dy)$ for all $A\in\mathcal{X}$.

We denote by $I$ the identity map on $L^2(\Pi)$.  For a given operator $P$ we define the \emph{Dirichlet form} $\mathcal{E}(P, f):= \langle(I - P)f,f \rangle$ for all $f \in L^2(\Pi)$. When $P$ is $\Pi$-reversible we define its \emph{right spectral gap} (from this point called \emph{spectral gap}) as
\begin{equation}\label{eqn:spectral_gap}
    \gamma := \inf_{f \in L_0^2(\Pi), f \not\equiv 0}\frac{\mathcal{E}(P,f)}{\text{Var}_\pi(f)}.
\end{equation}
The \emph{relaxation time} is simply the inverse of the spectral gap $\gamma ^ {-1}$.

Given a distance metric $\mathcal{D}$ on the space of probability measures and an initial measure $\nu_0$ the \emph{$\epsilon$-mixing time} of $P$ starting from $\nu_0$ is define for all $\epsilon > 0$ as
\begin{equation}\label{eqn:epsilon_mixing_time}
    t(\epsilon, \nu_0) := \inf \{n \in \mathbb{N} : \mathcal{D}(\nu_0P^n,\Pi)\leq \epsilon\}.
\end{equation}
We say the Markov chain has a \emph{$\beta$-warm start} if there exists a constant $\beta\in\mathbb{R}$ such that $\sup_{A\in\mathcal{X}} \nu_0(A)/\Pi(A)\leq \beta$.

The set $\{1,2,...,k\}$ is denoted as $[k]$ for any positive integer $k$. For a function $g \in C^2(\mathbb{R}^d)$ we define $\nabla g(y)\in\mathbb{R}^d$ and $\nabla^2g(y)\in\mathbb{R}^{d \times d}$ elementwise as
\[
\left(\nabla g(y)\right)_i:=\frac{\partial}{\partial y_i}g(y)\text{ }\left(\nabla^2g(y)\right)_{ij}:=\frac{\partial^2}{\partial y_i\partial y_j}g(y)
\]
for $i,j\in [d]$.

We denote by $\|.\|$ the $L_2$-norm on both $\mathbb{R}^d$ and $\mathbb{R}^{d \times d}$. For a given symmetric matrix $A$ we let $\lambda_i(A)$ be its $i$th largest eigenvalue. We define its spectral condition number as
\[
\kappa(A) := \frac{\max_{i \in [d]}|\lambda_i(A)|}{\min_{i \in [d]}|\lambda_i(A)|}.
\]
We overload the $\text{diag}$ function as follows: $\text{diag}(A)$ is the diagonal matrix that shares its diagonal with $A\in \mathbb{R}^{d \times d}$ and $\text{diag}\{f(i):i\in[d]\}\in\mathbb{R}^{d\times d}$ is the diagonal matrix whose $(i,i)$th element is $f(i)$. We denote by $GL_d(\mathbb{R})$ the set of invertible $d \times d$ matrices over $\mathbb{R}$. We use the `$\preceq$' and `$\succeq$' relations to take symmetric matrices in the following way: $A \preceq B$ (resp. $A \succeq B$) if and only if $B - A$ (resp. $A - B$) is positive semidefinite. The ordering generated by these relations is known as the Loewner order. The relations `$\prec$' and `$\succ$' are defined similarly, replacing the semidefiniteness condition with definiteness. We write $\mathbf{I}_d$ to denote the $d$-dimensional identity matrix.

For two real-valued functions $f(n)$ and $g(n)$ we say $f(n)=O(g(n))$ if there exists a universal constant $K > 0$ such that $f(n)\leq Kg(n)$. We say $f(n)=\Omega(g(n))$ if there exists a universal constant $K > 0$ such that $f(n) \geq Kg(n)$. We say $f(n) = \tilde{O}(g(n))$ if $f(n)\log^{-q}(n) = O(g(n))$ for some natural number $q$, and define $\tilde{\Omega}$ analogously.

\subsection{Main contributions}

The main contribution of our work is to illustrate the potential benefits and pitfalls of linear preconditioning of sampling algorithms. We focus attention on the class of well-conditioned target distributions, which briefly stated are probability measures $\Pi$ defined on $\mathbb{R}^d$ with density $\pi(x) \propto \exp(-U(x))$, for which the potential $U$ is twice continuously differentiable and for all $x \in \mathbb{R}^d$ satisfies
$$
m\mathbf{I}_d \preceq \nabla^2 U(x) \preceq M\mathbf{I}_d,
$$
for some finite constants $M \geq m > 0$. A more detailed description is given in Section \ref{subsec:well-conditioned}.  The associated condition number is defined as
$$
\kappa := \frac{M}{m},
$$
and it follows that $\kappa \geq 1$ with equality holding only when $\Pi$ is an isotropic Gaussian measure on $\mathbb{R}^d$.  If $X \sim \Pi$ has condition number $\kappa$, then linear preconditioning consists of applying a transformation $L \in GL_d(\mathbb{R})$ such that $Y := LX$ follows the distribution $\tilde{\Pi}$ with associated condition number $\tilde{\kappa}$.  The hope is that $\kappa \gg \tilde{\kappa} \approx 1$, as many mixing time upper bounds for sampling algorithms depend at least linearly on the condition number of the target distribution.

We informally state the main results of the paper here. Our first question concerns how broadly linear preconditioning can be effectively applied to distributions within the well-conditioned class. The below shows that further assumptions on the target distribution must be imposed to guarantee that the technique will provide benefits.

\begin{result}
    For any fixed choice of $\kappa \geq 1$, there are well-conditioned probability distributions with condition number $\kappa$ for which no choice of linear preconditioner will result in $\tilde{\kappa} \leq \kappa$.
\end{result}

\noindent This is formalised as Proposition \ref{prop:hard_target} of Section \ref{subsec:unpreconditionable}.  The result shows that standard assumptions made in log-concave sampling are not sufficient to guarantee that linear preconditioning can reduce the condition number of a given sampling problem.  In the next two results, however, we provide additional assumptions under which guarantees can be provided.

\begin{result}
    If $\Pi$ is such that $\nabla^2 U$ is of \emph{additive} form, meaning it can be written $\nabla^2 U(x) = A + B(x)$ in such a way that $\sup_x\|B(x)\| < \epsilon$ for some $\epsilon\geq 0$, then there are choices of linear preconditioner for which
    $$
    \tilde{\kappa} \leq 1 + f(\epsilon),
    $$
    where $f$ is explicit, monotonically increasing and $\lim_{\epsilon\to 0 }f(\epsilon) = 0$. If further conditions are imposed on $B(x)$ then tighter bounds can be applied to $\tilde{\kappa}$.  Valid choices of preconditioner are:
    \begin{enumerate}
        \item[(i)] $L = A^{1/2}$ if known
        \item[(ii)] The Fisher matrix $L = \mathbb{E}_\pi[\nabla^2 U(x)\nabla^2 U(x)^T]^{1/2}$
        \item[(iii)] The choice $L = \textup{Cov}_\pi[X]^{-1/2}$ (under additional technical conditions)
    \end{enumerate}
\end{result}

The first part of the result is formalized in Section \ref{subsec:additive_hessian} in Theorems \ref{thm:kappa_L_bound_1}-\ref{thm:kappa_L_bound_3}, and the second part is also contained in the same section.  In Theorem \ref{thm:kappa_L_bound_1} precise control on the eigenvalues and eigenvectors are required of $\nabla^2U(x)$, after which a result can be proved using elementary linear algebra.  In Theorems \ref{thm:kappa_L_bound_2} \& \ref{thm:kappa_L_bound_3}, we relax this requirement, controlling only the operator norm $\|B(x)\|$ together with an `eigengap' condition in the case of Theorem \ref{thm:kappa_L_bound_2}. This weakening relies on powerful results from matrix perturbation theory, such as a recent variant of the Davis--Kahan theorem proven in \cite{yu:15}. Our bounds are tight in the sense that there are examples for which they can be saturated.  The second part of the result is formalised in Sections \ref{subsubsec:fisher_matrix} and \ref{subsubsec:target_covariance}. It shows that popular strategies of preconditioner, which are often approximated using pilot runs or early samples from the Markov chain during the warm up/burn in phase, are effective when the Hessian is of additive form. We also give empirical examples of performance improvements for models with this structure in Section \ref{subsec:experiment_additive}.

When the additive condition of Result 2 does not hold we offer an alternative structure that is also common in probabilistic models, leading to the next result.

\begin{result}
    If $\Pi$ is such that $\nabla^2 U$ is of \emph{multiplicative} form, meaning it can be written $\nabla^2 U(x) = Z^T \Lambda(x) Z$ with $Z \in \mathbb{R}^{n \times d}$ and $\Lambda(x) \in \mathbb{R}^{n \times n}$, then
    \begin{enumerate}
        \item[(i)] 
        Choosing either $L = (Z^TZ)^{1/2}$ or $L = R$ where $QR$ is a reduced QR decomposition of $Z$ gives
        \[
        \tilde{\kappa} \leq\frac{\sup_{x\in\mathbb{R}^d}\lambda_1(\Lambda(x))}{\inf_{x\in\mathbb{R}^d}\lambda_d(\Lambda(x))} 
        \]
        \item[(ii)] Choosing $L = (Z^T \Lambda(x^*) Z)^{1/2}$ gives
        \[
        \tilde{\kappa} \leq \frac{\sup_{x\in\mathbb{R}^d}\lambda_1(\Lambda(x^*)^{-\frac{1}{2}}\Lambda(x)\Lambda(x^*)^{-\frac{1}{2}})}{\inf_{x\in\mathbb{R}^d}\lambda_d(\Lambda(x^*)^{-\frac{1}{2}}\Lambda(x)\Lambda(x^*)^{-\frac{1}{2}})}  
        \]
        for any $x^* \in \mathbb{R}^d$.
    \end{enumerate}
\end{result}

This result is formalised in Propositions \ref{prop:mult_dalalyan} and \ref{prop:mult_Hessian_at_point} in Section \ref{subsec:multiplicative_hessian}.  The proofs in this section rely on a generalisation of Ostrowski's theorem for congruence transformations of symmetric matrices to the rectangular case, which was proven in \cite{higham:98a} (see also \cite{ostrowski1959quantitative}).  In theory $n$ can be any natural number larger than $d$, but in practice it will often be the number of data points and $Z$ a design matrix, in which case the multiplicative structure will be familiar to practitioners from generalized linear models and related settings.  The QR decomposition is a popular strategy in applications and is discussed in Section \ref{subsubsec:QR}, while if $x^*$ is chosen to be the mode of $\pi$ then the second choice corresponds to the Hessian evaluated at the mode. More details are given in Section \ref{subsec:multiplicative_hessian}.  These strategies are known to be effective in practice, but in Section \ref{subsec:experiment_multiplicative} we show empirically that for binomial regression there are settings for which the Hessian at the mode should be preferred.

The aforementioned results show that appropriate linear preconditioning can reduce the condition number, which in turn reduces mixing time upper bounds. In the next result we go further, showing rigorously that under appropriate conditions the spectral gap of the random walk Metropolis can be explicitly improved through linear preconditioning.

\begin{result}
    If $\Pi$ is such that $\nabla^2 U$ is of \emph{additive} form, meaning it can be written $\nabla^2 U(x) = A + B(x)$ in such a way that $\sup_x\|B(x)\| < \epsilon$ for some $\epsilon\geq 0$, then the spectral gap $\gamma_\kappa$ of the random walk Metropolis satisfies
    \[
    C\xi\exp(-2\xi)\cdot\frac{1}{\kappa}\cdot\frac{1}{d}\leq \gamma_\kappa \leq (1 + 2\epsilon)\frac{\xi}{2}\cdot\frac{1}{\kappa}\cdot\frac{1}{d}
    \]
    where $C = 1.972 \times {10}^{-4}$, and $\xi > 0$ is a constant depending on the random walk step-size.  If further conditions are imposed on $B(x)$ then tighter bounds can be applied to $\kappa$.
\end{result}

The result is formalised in Section \ref{subsec:spectral_gap}.  The lower bound is taken directly from \cite{andrieu:22}, but the upper bound is new and relies on the tighter control on the spectrum of $\nabla^2U$ imposed by the additive structure.  Since the condition number dependence is the same in both the upper and lower bounds, we can as a direct consequence of Results 2 \& 4 show that if $\kappa$ is sufficiently large and $\nabla^2 U$ has additive structure then the spectral gap is guaranteed to increase under appropriate linear preconditioning.  See Section \ref{subsec:spectral_gap} and in particular Corollary \ref{cor:improved_gap} for more detail.




Our final theoretical result provides a word of caution for the popular strategy of \emph{diagonal} preconditioning, in which $L = \text{diag}(\text{Cov}_\pi[X])^{-1/2}$, or some approximation of this quantity estimated from samples.  The approach is popular because $L$ is now diagonal, meaning that operations involving it will be $O(d)$ instead of $O(d^2)$. Diagonal preconditioning is the default choice in some probabilistic programming software (e.g. \cite{carpenter2017stan}). In Hamiltonian Monte Carlo, for example, this strategy corresponds to the dominant choice of diagonal mass matrix. It is generally understood that if $\Pi$ exhibits large correlations then the diagonal strategy may not reduce the condition number by much if at all, but in fact we show below that it can explicitly be worse than doing nothing at all.

\begin{result}
    There are well-conditioned distributions for which applying the preconditioner $L = \textup{diag}(\textup{Cov}_\pi[X])^{1/2}$ gives
    $$
    \tilde{\kappa} > \kappa,
    $$
    while choosing $L = \textup{Cov}_\pi[X]^{-1/2}$ results in $\tilde{\kappa} \approx 1$.
\end{result}

The above highlights that when large correlations are present then the practitioner is better off applying no preconditioning than using the diagonal strategy.  We show that this strategy can result in demonstrably worse practical sampling performance in Section \ref{subsec:experiment_counterproductive} using a simple 5-dimensional Gaussian example.

A final empirical contribution of the paper is to demonstrate how linear preconditioning can reduce computational costs in Hamiltonian Monte Carlo. In Section \ref{subsec:experiment_hmc} we provide two examples showing that as the condition number increases then roughly $\kappa^{1/2}$ leapfrog steps must be taken in order to achieve a comparable effective sample size for the algorithm, and that effective linear preconditioning reduces the computational cost per step of the algorithm by this order of magnitude.  This experiment was suggested by a reviewer of an earlier version of the manuscript.

\section{Preconditioning in Markov chain Monte Carlo}

Suppose $\Pi$ is a measure on $(\mathbb{R}^d,\mathcal{X})$ with density $\pi(x) \propto e^{-U(x)}$ for some \emph{potential} $U:\mathbb{R}^d \to \mathbb{R}$. The aim of MCMC is to sample from $\Pi$ so as to infer its properties. In particular we are interested in estimating expectations $\mathbb{E}_\pi[f(X)]$ of some test function $f$. Given samples $\{X_i\}_{i = 1} ^ N$ we achieve this by forming the estimator $\hat{f}_N := N^{-1}\sum_{i = 1} ^ N f(X_i)$. In many situations of interest drawing independent samples from $\Pi$ is computationally infeasible. The lifeblood of MCMC is in creating and analyzing algorithms which produce Markov chains whose equilibrium distribution is $\Pi$. We can then use the samples from these Markov chains in estimators such as $\hat{f}_N$.

\subsection{Markov chain Monte Carlo}

Constructing a Markov chain with a chosen equilibrium distribution can be achieved using the \emph{Metropolis--Hastings} filter \citep{metropolis:53,hastings:70}. Given a Markov kernel $Q_\theta$ and its density $q_\theta(x \to .)$ we can construct a $\Pi$-invariant Markov chain using the pseudocode in Algorithm \ref{alg:one}. To apply step 2 to an existing Markov kernel $Q_\theta$ is to \emph{Metropolize} it. 

\begin{algorithm}
\caption{Metropolized Markov chain}\label{alg:one}
\SetKwInOut{Input}{input}\SetKwInOut{Output}{output}
\Input{Chain length $N$, Initial distribution $\mu$, Initial state $X_0 \sim \mu$, Proposal parameters $\theta$}
\Output{$\Pi$-invariant Markov chain $\{X_i\}_{i = 1} ^ N$}
\For{$i \in \{0,...,N - 1\}$}{
  \begin{enumerate}
      \item \textbf{Proposal}: Propose a new state $X'_i \sim Q_\theta(X_i \to .)$.
      \item \textbf{Metropolis--Hastings filter}: Draw $U \sim U[0, 1]$ and calculate
      \[
      \alpha(X_i \to X'_i) := \min\left\{1, \frac{\pi(X'_i)q_\theta(X'_i \to X_i)}{\pi(X_i)q_\theta(X_i \to X'_i)}\right\}
      \]
      If $U \leq \alpha(X_i \to X'_i)$ set $X_{i + 1} = X'_i$, otherwise set $X_{i + 1} = X_i$.
  \end{enumerate}
}
\end{algorithm}

When $q_\theta(x \to .) := \mathcal{N}(\hspace{0.1cm}.\hspace{0.1cm}; x, \sigma ^ 2 \mathbf{I}_d)$ Algorithm \ref{alg:one} defines the \emph{random walk Metropolis}  algorithm (RWM). Here $\sigma > 0$ is a \emph{step-size} and must be tuned by the practitioner. When $q_\theta(x \to .) := \mathcal{N}(\hspace{0.1cm}.\hspace{0.1cm}; x - \sigma ^ 2\nabla U(x)/2 , \sigma ^ 2 \mathbf{I}_d)$ Algorithm \ref{alg:one} defines the \emph{Metropolis-adjusted Langevin algorithm} (MALA). Note that this proposal mechanism follows a single step with step-size $\sigma ^ 2$ of the Euler--Maruyama discretisation of the overdamped \emph{Langevin diffusion}, described through the dynamics
\begin{equation}\label{eq:langevin_diffn}
    dX_t=-\frac{1}{2}\nabla U(X_t)dt + dB_t,
\end{equation}
where $(B_t)_{t\geq 0}$ is a standard Brownian motion on $\mathbb{R}^d$ (we drop the word \emph{overdamped} from this point forward). The Langevin diffusion associated with a given potential $U$ admits $\Pi$ as equilibrium distribution under mild conditions (e.g. \cite{roberts:96a}).

The idea underpinning HMC \citep{duane:87, neal:11} is to extend the state space to $\mathbb{R} ^ d \times \mathbb{R} ^ d$ to include a momentum vector, and then viewing the states in the Markov chain as the positions and momenta in physical space of a particle moving in the potential $U$. To propose a new state $X'_i$ we simply evolve a particle with initial position $X_i$ according to a discretisation of Hamiltonian dynamics with random initial momentum and potential $U$. Specifically we propose $X'_i = x_T$ using the final value $(x_T, p_T)$ of the trajectory described by the Hamiltonian system:
\begin{equation}\label{eq:hamiltonian_dynamics}
    \frac{dx_t}{dt} = \nabla_p H(x_t, p_t)\text{, }\frac{dp_t}{dt}=-\nabla_x H(x_t, p_t)
\end{equation}
with initial conditions $x_0 = X_i$, $p_0 \sim \mathcal{N}(0, \mathbf{I}_d)$. The function $H:\mathbb{R} ^ d \times \mathbb{R} ^ d \to \mathbb{R}$ is the \emph{Hamiltonian} $H(x, p) := U(x) + K(p)$ where $K(p) = p^Tp/2$ is the \emph{kinetic energy}. Note that if we marginalise the density proportional to $\exp(-H(x, p))$ over $p$ we are left with $\pi$.  See \citet{neal:11,betancourt:17} for a more thorough exposition, \citet{livingstone:19} for extensions to different forms of $K(p)$, and \citet{chen:14} for a stochastic gradient version of the algorithm.  Convergence properties are discussed in \cite{livingstone:19geometric,durmus:20}.  For a more general discussion of schemes involving augmentation of the state vector see \citet{andrieu:20,glatt:20}.

\subsection{Well-conditioned distributions and the condition number}
\label{subsec:well-conditioned}

\subsubsection{Background assumptions on $\Pi$}

Here we introduce a background assumption which is assumed for most theorems in the text.
\begin{ass}\label{ass:m sc and M smoothness} The potential $U$ is in $ C^2$, and there exist constants $M \geq m > 0$ such that 
\begin{equation}
    m\mathbf{I}_d\preceq\nabla^2U(x)\preceq M\mathbf{I}_d
\end{equation}
for all $x\in \mathbb{R}^d$.
\end{ass}

The lower bound on the Hessian $\nabla^2 U$ in Assumption \ref{ass:m sc and M smoothness} is known as \emph{$m$-strong convexity} of $U$ and the upper bound is commonly known as \emph{$M$-smoothness} of $U$. Note that $M$-smoothness holds if and only if $\nabla U$ is $M$-Lipschitz. 

\subsubsection{The condition number}

When $U$ satisfies Assumption \ref{ass:m sc and M smoothness} the \textit{condition number} is defined as
\begin{equation}\label{condition_number}
    \kappa := \frac{M}{m}.
\end{equation}
When $\Pi$ is Gaussian with covariance $\Sigma_\pi\in\mathbb{R}^{d\times d}$,
then $\kappa = \lambda_1(\Sigma_\pi)/\lambda_d(\Sigma_\pi)$, which is the spectral condition number  of $\Sigma_{\pi}$. When $U\in C^2$ the condition number can be equivalently defined as
\begin{equation}\label{condition_number_sup}
    \kappa := \sup_{x\in\mathbb{R}^d}\|\nabla^2 U(x)\|\sup_{x\in\mathbb{R}^d}\|\nabla^2 U(x)^{-1}\|.
\end{equation}
It is bounded below by 1, and a larger condition number traditionally signifies a harder problem that is less amenable to implementation on a computer. 

The significance of $\kappa$ in the context of sampling is demonstrated by its presence in bounds on quantities which govern the performance of MCMC algorithms, such as the relaxation time and the $\epsilon$-mixing time. The relaxation time is defined below equation (\ref{eqn:spectral_gap}) and the $\epsilon$-mixing time is defined in equation (\ref{eqn:epsilon_mixing_time}). The two quantities provide some measure of the length of time to run the MCMC algorithm. Note that the relaxation time is not defined with respect to the initialisation of the Markov chain, whereas the $\epsilon$-mixing time is. The two quantities are related since, for instance, a small relaxation time will imply a small $\epsilon$-mixing time if the chain is positive (i.e. the spectrum of $P$ is contained in $[0,1]$) and is well initialised, using e.g. a $\beta$-warm start as defined in Section \ref{subsec:notation}. We present a selection of bounds on these quantities in Table \ref{fig:bounds_table}. Each is polynomial in both the dimension \emph{and} the condition number. The selection is not exhaustive, we merely present it to highlight the ubiquity of the condition number.
\begin{table}
    \centering
    \renewcommand{\arraystretch}{2.2}
    \begin{tabular}{|c|c|c|}
\cline{2-3} \cline{3-3} 
\multicolumn{1}{c|}{} & Relaxation Time & $\epsilon$-Mixing Time\tabularnewline
\hline 
 &  & $\tilde{O}\left(\kappa d^{\frac{1}{2}}\right)^{\bigtriangledown}$ \citet{wu:22}\tabularnewline
\cline{3-3} 
Upper Bound & $O(\kappa d)^{\bigtriangleup}$ \citet{andrieu:22} & $\tilde{O}\left(\kappa d\log\frac{1}{\epsilon}\right)^{\bigtriangleup}$ \citet{andrieu:22}\tabularnewline
\cline{3-3} 
 &  & $\tilde{O}\left(\kappa d^{\frac{2}{3}}\log\frac{1}{\epsilon}\right)^{\bigcirc}$ \citet{chen:20}\tabularnewline
\hline 
Lower Bound & $\Omega\left(\frac{\kappa d}{\log d}\right)^{\bigtriangledown}$ \citet{lee:21}& $\Omega\left(\frac{\kappa d}{\log^{2}d}\right)^{\bigtriangledown}$ \citet{lee:21}\tabularnewline
\hline 
\end{tabular}
    \caption{Recently published bounds on the relaxation time and $\epsilon$-mixing time of various MCMC samplers. The $\tilde{O}$ denotes that the bound excludes poly-logarithmic terms. Superscript $\bigtriangleup$ denotes a bound for RWM, superscript $\bigtriangledown$ for MALA, and superscript $\bigcirc$ for HMC. The bound in \citet{chen:20} holds under the additional assumptions that $\nabla^2U$ is $L_H$-Lipschitz with $L_H^{2/3} = O(M)$, that $\kappa = O(d^{2/3})$, and that the chain is started from a warm start with constant $W = O(\exp(d^{2/3}))$.}
    \label{fig:bounds_table}
\end{table}
There also exist numerous bounds on the relaxation and $\epsilon$-mixing time of the unmetropolized counterparts of the algorithms in Table \ref{fig:bounds_table} (see e.g.  \citet{mangoubi:18,dalalyan:17}.

\subsection{Preconditioning}

Assuming that $X\sim\Pi$ we let $\tilde{E}:=(\mathbb{R}^d,\mathcal{X})$
be the space generated by the diffeomorphism $g:\mathbb{R}^d\to\mathbb{R}^d$
such that $Y:=g(X)$ is distributed according to the pushforward $\tilde{\Pi}:=g_{\#}\Pi$ which is a distribution on $\tilde{E}$ with density $\tilde{\pi}$.
Preconditioning is the act of identifying a transformation $g$ with the hope that
that $\tilde{\Pi}$ admits a faster mixing MCMC algorithm than $\Pi$. Linear preconditioning is the case in which $g$ is linear and therefore encoded in a matrix. We would like the condition number of $\tilde{\Pi}$ to be significantly lower than that of $\Pi$. We would also like that for two samples $Y$ and $X$
sampled approximately from $\tilde{\Pi}$ and $\Pi$, the increase
in sample quality from $X$ to $g^{-1}(Y)$ is not outweighed by the computational complexity of sampling $Y$ and calculating $g^{-1}(Y)$.

\subsection{Preconditioning in optimisation} 
\label{subsec:optimisation}

The method of preconditioning to reduce a condition number and improve algorithmic efficiency also exists in the field of optimisation, i.e. finding $x^* := \arg\min_x f(x)$ subject to a problem-specific set of constraints.  In this subsection we review relevant literature and make connections to preconditioning in sampling.

One example problem is to seek the minimiser $x^* \in \mathbb{R}^n$ of the function $f(x):= (1/2)x^T A x - b^T x$ for a positive definite matrix $A\in\mathbb{R}^n$ and a vector $b\in\mathbb{R}^n$. Setting $\nabla f(x)$ to 0 we can immediately see that $x^* = A^{-1}b$ and hence the problem is equivalent to solving the matrix inversion problem $Ax^* = b$. This problem has acquired the condition number $\kappa(A) = \|A\|\|A^{-1}\|$. Many iterative methods have been proposed for this problem, whose performance typically depends on $\kappa(A)$. One such method is the Conjugate Gradient method, which has the property of converging exactly in $\leq n$ steps \citep[Chapter 5]{nocedal:06}. In fact we can use the bound
\begin{equation*}
    \|x_{k+1} - x^*\|_A \leq 2\left(\frac{\sqrt{\kappa(A)} - 1}{\sqrt{\kappa(A)} + 1}\right)^k\|x_0 - x^*\|_A,
\end{equation*}
where $\|x\|_A:=\sqrt{x^TAx}$, to control the optimiser error of the iterates. Preconditioning methods for conjugate gradient have therefore been developed to reduce the condition number and accelerate convergence (see e.g. \cite{eisenstat1981efficient}). 

In the more general setting of minimising an $M$-smooth, $m$-strongly convex objective function $f$, the spectrum of the Hessian of $f$ is entirely contained within $[m, M]$ at every point in its domain. This spectral information is used to define a new condition number $\kappa := M / m$. Note that this is exactly the condition number defined for the problem of sampling from a density with potential $f$, see equation (\ref{condition_number}).  Popular general purpose iterative methods are used to execute this task.  These methods often use gradient evaluations of $f$ in both their operation and in stopping conditions such as $\|\nabla f (x)\|\leq \eta$ for some user-chosen $\eta \in [0, \infty)$. The constants $m$ and $M$ are informative, as for instance
\begin{equation*}
    \frac{1}{2}\frac{1}{M}\|\nabla f (x)\|^2 \leq f(x) - f(x^*) \leq \frac{1}{2}\frac{1}{m}\|\nabla f (x)\|^2
\end{equation*}
for all $x$ in the domain of $f$ \citep[Section 9.1.2]{boyd:04}. When $\kappa$ is closer to 1, therefore, the stopping condition $\|\nabla f (x)\|\leq \eta$ gives tighter control on distance to the optimum. The condition number can also be used to quantify the efficiency of the line-search methods used within these approaches, since it provides information on the regularity of the sub-level sets of $f$ \citep[Section 9.1.2]{boyd:04}. 

The time-complexity of general purpose, iterative optimisation routines is typically strongly dependent on the condition number. The complexity of gradient descent, for example, is $O(\kappa)$ (suppressing dependencies on initialisation and error) \citep[Section 9.3.1]{boyd:04}. \cite{bubeck:15} provides the convergence bound
\begin{equation} \label{eq:gradient_descent}
\|x_t - x^*\|^2 \leq \exp \left( -\frac{4t}{\kappa+1} \right)\|x_1 - x^*\|^2,
\end{equation}
where $x_t$ denotes the $t$th iterate of the algorithm \citep[Section 3.4]{bubeck:15}.  Nesterov's accelerated gradient descent, by contrast, has a time-complexity of $O(\sqrt{\kappa})$, which is known to be optimal among first order methods following the work of \cite{nemirovski:83} (see e.g. Theorems 3.15 \& 3.18 of \cite{bubeck:15}). 

Many linear preconditioning strategies have been proposed for optimisation. In the context of solving the linear system \cite{chen2005matrix} provides a book-level treatment. In that setting  the condition number can be reduced to 1 using matrix decompositions such as LU and QR approaches, but these are often too expensive.  Various approximate strategies to compute either $A^{-1/2}$ or $A^{-1}$ have therefore been devised, such as banded matrices \citep[Chapter 5] {chen2005matrix}, the incomplete LU and Cholesky factors \citep{meijerink1977iterative,jones1995improved}, the schur complement \citep{zhang2006schur}, sparse approximate inverses \citep{grote1997parallel}, hierarchical off-diagonal low rank matrices and hierarchical semi-separable matrices \citep{massei2020hm}, among others.  A common approach is the Jacobi preconditioner $\text{diag}(A)^{-1/2}$, whose effects on the conditioner number are studied in \cite{van1969condition} and extended by \cite{demmel2023nearly} to block diagonal preconditioners. The Jacobi approach generally works well if $A$ is diagonally dominant.  \cite{mandel1990block} gives conditions under which the Schur complement will more significantly reduce the condition number than the block-diagonal approach.  Typically some case-specific knowledge of the system is exploited to choose design parameters in these approaches, such as the number of bands/blocks to choose, the level of sparsity or the rank of off-diagonal sub-matrices in question.

In the more general setting many linear preconditioners have also been proposed (see e.g. \citet{giselsson:14, pock:11, wen:17}).  Effects on a condition number, however, are not always studied.  One exception is \cite{amelunxen:20}, who show that the lower bound on a variant of the condition number called `Renegar's condition number' changes under preconditioning with a random orthogonal matrix.  An alternative strategy in the general case is to use second order approaches such as Newton and Quasi-Newton methods, which use or approximate the Hessian of $f$ at the most recent iteration to inform the descent direction.  Samplers which use point-wise second order information also exist, see e.g. \cite{girolami:11}.  In the sampling context, however, the requirement to satisfy $\Pi$-invariance at least approximately often results in either the need for third order information or other additional expenses, which can incur significant computational cost.  In this work we focus on methods which do not require point-wise evaluation of second or higher order information at each iteration, but note this to be an interesting topic for future work (see Section \ref{subsec:nonlinear}).

This paper focuses on the problem of sampling, but we note that the results that only concern the condition number can be readily applied to convex optimisation with little translation, as the condition number is defined similarly in both cases.

\section{Linear preconditioning}\label{Linear preconditioning}

In linear preconditioning we seek a matrix $L\in GL_{d}(\mathbb{R})$
such that the preconditioner defined by $g(X) := LX$ produces a pushforward $\tilde{\Pi} = g_{\#}\Pi$ with small condition number $\tilde{\kappa}$. From the definition of the condition number in equation (\ref{condition_number_sup}) we see that
\begin{equation}\label{conditioned_number}
    \tilde{\kappa}=\sup_{y\in\mathbb{R}^{d}}\left\|\nabla^{2}\tilde{U}(y)\right\|\sup_{y\in\mathbb{R}^{d}}\left\|\nabla^{2}\tilde{U}(y)^{-1}\right\|=\sup_{x\in\mathbb{R}^{d}}\left\|L^{-T}\nabla^{2}U(x)L^{-1}\right\|\sup_{x\in\mathbb{R}^{d}}\left\|L\nabla^{2}U(x)^{-1}L^{T}\right\|
\end{equation}
 where $\tilde{U}:\mathbb{R}^{d}\to\mathbb{R}$ is the potential associated with $\tilde{\Pi}$. For a concrete example, Figure \ref{fig:preconditioning_demonstration} shows two contour plots. The red contour plot is generated by the posterior of a Bayesian logistic regression, and the green plot shows the distribution after preconditioning with an appropriate linear preconditioner. The red plot shows that most of the mass of the corresponding posterior is confined to a thin wedge, meaning that only a narrow range of tuning parameters will lead to a successful MCMC algorithm attempting to sample from this posterior. The green plot shows that the corresponding posterior mass is more evenly spread. This is one of the ways in which linear preconditioning can be seen to improve the ease with which we sample.

\begin{figure}
    \centering
    \includegraphics[scale = 0.15]{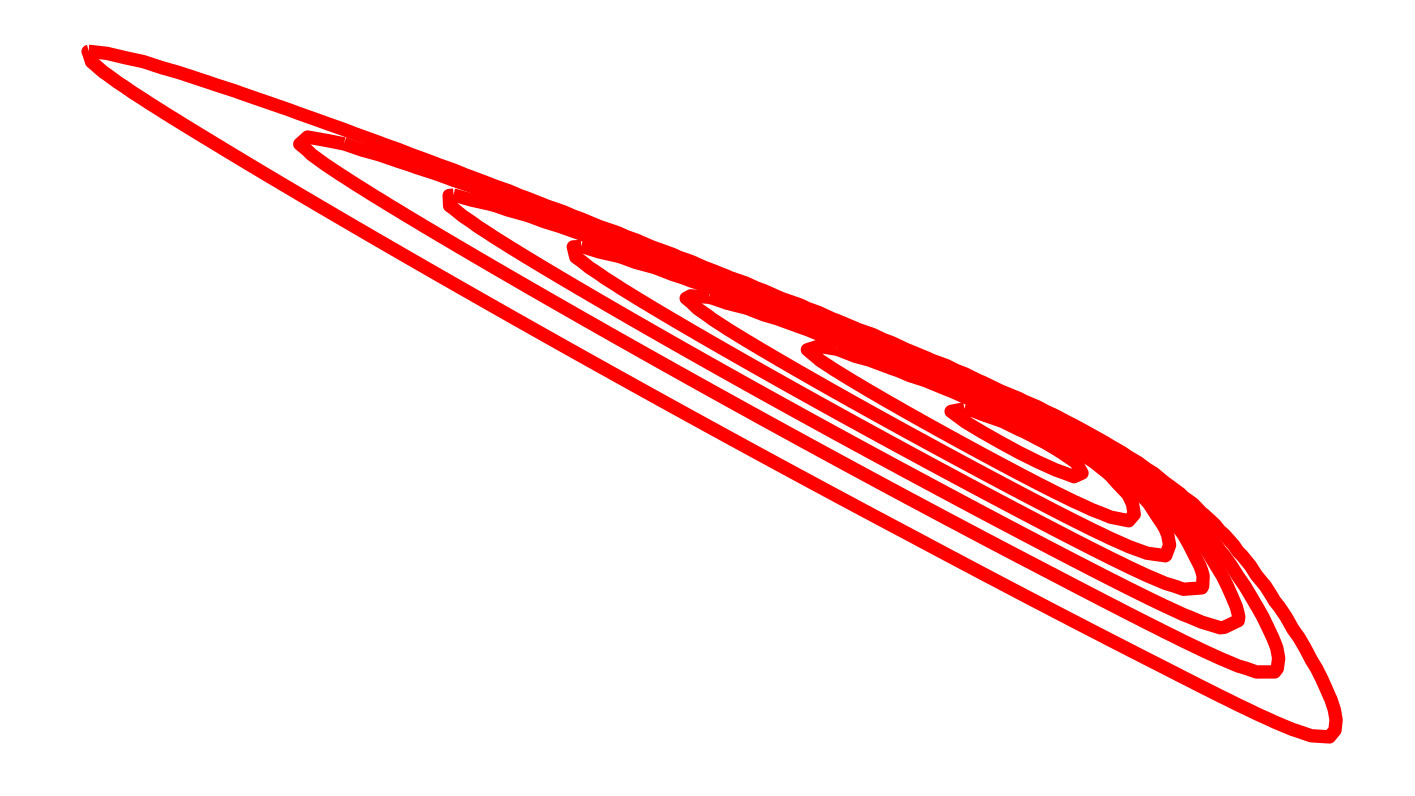}\includegraphics[scale = 0.15]{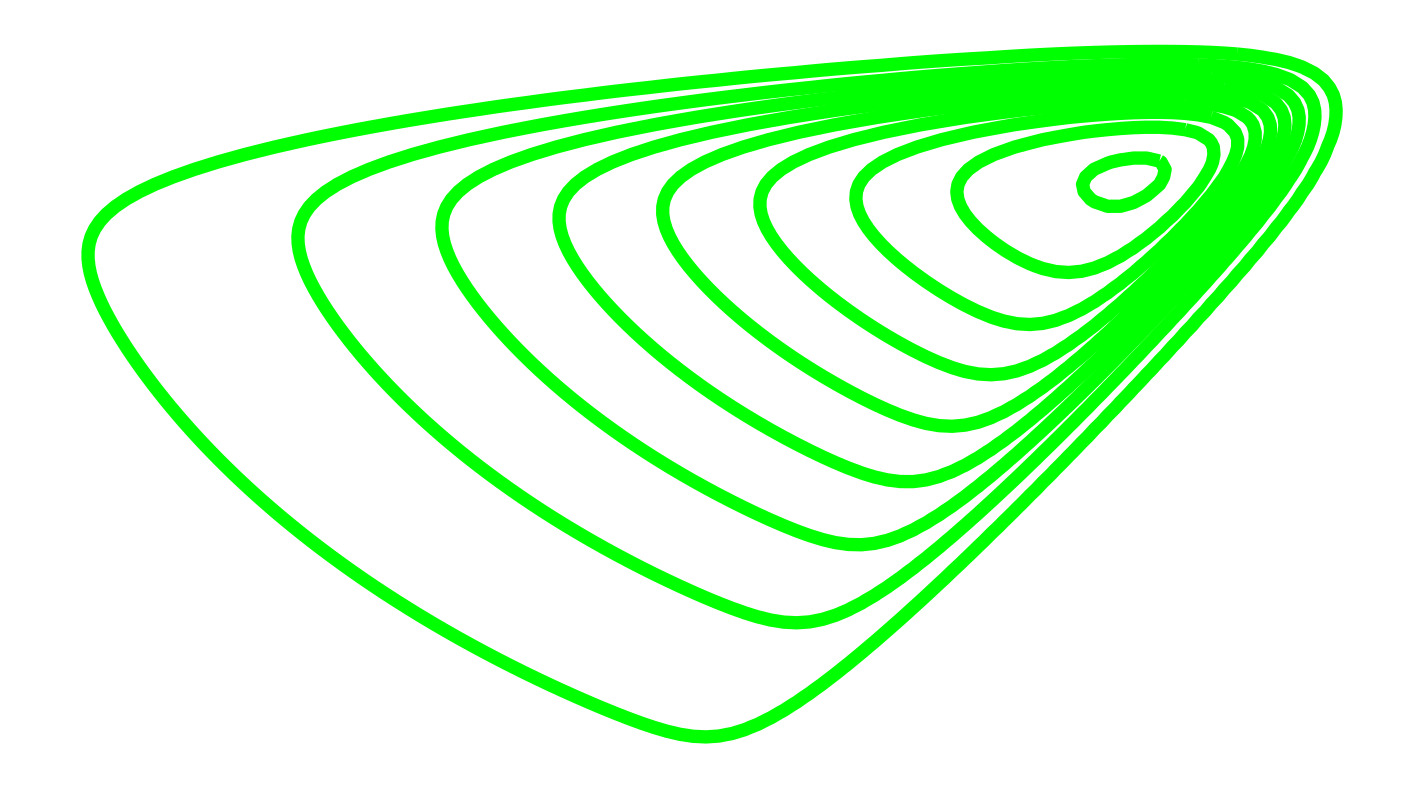}
    \caption{A red contour plot (left) showing the contours of a Bayesian logistic regression posterior, and a green contour plot (right) showing the contours of the posterior preconditioned with an appropriate matrix $L \in\mathbb{R}^{d \times d}$}
    \label{fig:preconditioning_demonstration}
\end{figure}

 Often linear preconditioning is encountered as a modification to the proposal distribution of a canonical MCMC algorithm and not as a transformation. In every case these modifications can be shown to be equivalent to making a linear transformation to $\Pi$ and then running the canonical algorithm \emph{without} preconditioning on the transformed distribution $\tilde{\Pi}$. For instance, preconditioned MALA has proposal
 \[
 Y' := Y + \sigma^2A\nabla_y\log\tilde{\pi}(Y)+\sqrt{2\sigma^2}A^\frac{1}{2}\xi
 \]
 where $\sigma^2>0$ is the step-size, $\tilde{\pi}$ is the target density, $\xi\sim\mathcal{N}(0, \mathbf{I}_d)$, and $A$ is a positive definite `preconditioning matrix'. This is equivalent to making a proposal
 \[
 X' = X +\sigma^2\nabla_x\log\pi(X)+\sqrt{2\sigma^2}\xi
 \]
 under the transformation $Y=LX$ where $L = A^{1/2}$ and $\pi(X)=\tilde{\pi}(Y)|\det A^{1/2}|$. That the acceptance probabilities are equal dictates that the chains $\{Y_i\}$ and $\{X_i\}$ are \emph{isomorphic} in the sense of \citet[Appendix A]{johnson:12} which means, amongst other things, that they have the same spectral gap. Similarly the preconditioned random walk Metropolis has proposal $Y' \sim N(Y, \sigma^2A)$ and in preconditioned HMC we choose the momentum distribution to be $K(p) = p^T A p/2$ for $M = A^{-1}$ is usually called the mass matrix.  We frame preconditioning in terms of a transformation because it suits our analysis, but the reader more familiar with preconditioning as changing the proposal distribution of the MCMC algorithm can readily do so, the conclusions we draw are naturally applicable to both ways of thinking about the problem.
 
 \subsection{Unpreconditionable Distributions} \label{subsec:unpreconditionable}
 Methods to adaptively seek linear
preconditioners are implemented within the MCMC samplers provided by the
major software packages. For instance the HMC sampler in the popular
statistical modelling platform Stan  offers the ability
to infer the target covariance $\Sigma_{\pi}\in\mathbb{R}^{d\times d}$
giving an estimator $\hat{\Sigma}_{\pi}\in\mathbb{R}^{d\times d}$
to use as the inverse mass matrix \citep{carpenter2017stan}. This is equivalent to linear preconditioning
with $L=\hat{\Sigma}_{\pi}^{-1/2}$. Since computational effort is required to infer $\hat{\Sigma}_\pi$ the idea is that preconditioning with $L=\hat{\Sigma}_{\pi}^{-1/2}$ is better than doing nothing. As we show in the below proposition, however, even within the well-conditioned class of models satisfying Assumption \ref{ass:m sc and M smoothness}, this will not always be the case, as shown in the below proposition.

\begin{prop}\label{prop:hard_target}
There exist distributions satisfying Assumption \ref{ass:m sc and M smoothness} for which any non-orthogonal linear preconditioner will cause the condition number to increase.
\end{prop}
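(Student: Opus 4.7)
The plan is to give an explicit counterexample, choosing a distribution that is already as well-conditioned as possible so that no non-trivial linear transformation can help. The natural candidate is the standard normal $\Pi = \mathcal{N}(0,\mathbf{I}_d)$, whose potential $U(x) = \tfrac{1}{2}\|x\|^2 + \mathrm{const}$ has constant Hessian $\nabla^2 U(x)\equiv \mathbf{I}_d$. Assumption \ref{ass:m sc and M smoothness} then holds with $m = M = 1$, giving $\kappa = 1$.

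The first step is to simplify the general formula (\ref{conditioned_number}) in this special case. Because the Hessian is constant, the two suprema over $x$ trivialise, and using $(LL^T)^{-1} = L^{-T}L^{-1}$ one obtains
\[
\kappa_L = \|L^{-T}L^{-1}\|\cdot\|LL^T\| = \kappa(LL^T).
\]
So the entire proposition reduces to a statement about positive definite matrices: when can $\kappa(LL^T)$ fail to exceed $1$?

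The second step is to invoke the elementary fact that the spectral condition number of a positive definite matrix $A$ satisfies $\kappa(A)\geq 1$, with equality if and only if $A = c\mathbf{I}_d$ for some $c > 0$. Applied to $A = LL^T$, this yields $\kappa_L = 1$ exactly when $L = \sqrt{c}\,Q$ for some $c > 0$ and orthogonal $Q$; any other $L$ produces $\kappa_L > 1 = \kappa$, so the condition number strictly increases. I anticipate no real technical obstacle, since the whole argument rests on the single identity $\kappa_L = \kappa(LL^T)$ for this particular target. The one subtlety worth flagging is linguistic: because $\kappa_L$ is invariant under rescaling of $L$, the word ``orthogonal'' in the proposition should be read projectively, so as to include all positive scalar multiples of orthogonal matrices, and a short remark to that effect is probably warranted.
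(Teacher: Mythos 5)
Your argument is logically sound, and the identity $\kappa_L = \kappa(LL^T)$ for a Gaussian target with identity covariance is correct, as is the characterization of when $\kappa(LL^T)=1$. However, you have taken a genuinely different (and much weaker) route than the paper, and it is worth being clear about what is lost. The paper's example is a two--dimensional non-Gaussian target whose Hessian is $\operatorname{diag}\{f(x),f(y)\}$ with $f$ ranging freely over $[m,M]$; the key point is that each eigenvalue of $\nabla^2 U$ can independently attain both $m$ and $M$, which yields the lower bound $\kappa_L \geq \kappa(LL^T)\,\kappa$ for \emph{any} $\kappa = M/m \geq 1$. Your example instead specializes to $m=M=1$, i.e.\ $\kappa=1$, which is the degenerate endpoint of that construction. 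The proposition sits in a subsection titled ``Unpreconditionable Distributions'' whose motivating discussion concerns whether the computational effort of estimating a preconditioner (such as $\hat\Sigma_\pi^{-1/2}$) is ever wasted; the paper's example shows this can happen even when $\kappa$ is large, which is the practically relevant regime, whereas the standard normal already has the best possible condition number and so nothing was ever on the table to improve. In short: your example discharges the literal existential claim, but the paper's example carries the actual message, namely that there are genuinely ill-conditioned targets ($\kappa\gg1$) for which every non-orthogonal $L$ makes matters strictly worse, by a factor at least $\kappa(LL^T)$. Your remark about reading ``non-orthogonal'' projectively (i.e.\ modulo positive scalars) is well taken and applies equally to the paper's own example, where $L=sQ$ with $Q$ orthogonal also leaves $\kappa_L$ unchanged.
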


One example for which no linear preconditioner can reduce the condition number is the distribution with potential
\begin{equation}\label{eq:hard_potential}
    U(x)=\frac{m-M}{2}\left(\cos x_1+\cos x_2\right)+\frac{M+m}{2}\left(\frac{x_1^{2}}{2}+\frac{x_2^{2}}{2}\right)
\end{equation}
for $x\in\mathbb{R}^2$. For such a target the condition number $\tilde{\kappa}$ is bounded below by $\kappa(LL^T)\kappa$, where $\kappa(LL^T)$ is the spectral condition number of $LL^T$. The Hessian of $U$ is in the form $\text{diag}\{f(x),f(y)\}$
where $f$ ranges freely in $[m,M]$. The lower bound $\kappa(LL^T)\kappa$ highlights that any preconditioning causes the target to be more ill-conditioned by an amount exactly proportional to $\kappa(LL^T)$. The full proof of Proposition \ref{prop:hard_target} can be found in Appendix \ref{proof:hard_target}.


The conclusion of Proposition \ref{prop:hard_target} is that to establish positive results about the benefits of linear preconditioning further conditions must be imposed on $U$. Specifically we seek conditions under which the behaviour demonstrated by the example above is precluded.  In the following subsections we impose two different sets of conditions under which positive results can be established, and provide examples of commonly used preconditioning strategies which can then be shown to reduce the condition number.

Throughout Sections \ref{subsec:additive_hessian} and \ref{subsec:multiplicative_hessian} $\lambda_i(x)$ denotes the $i$-th largest eigenvalue of $\nabla ^ 2 U(x)$ and $v_i(x)$ the corresponding normalised eigenvector. Since the Hessian is everywhere symmetric its eigenvectors are orthogonal for a fixed $x \in \mathbb{R}^d$. At some points it will be useful to assume that $L$ is symmetric. The following proposition shows that this can be done without loss of generality.
\begin{prop}\label{prop:symmetry_invariance}
    There exists an $\tilde{L}\in GL_{d}(\mathbb{R})$ such that the condition number after linear preconditioning with a matrix $L\in GL_{d}(\mathbb{R})$
is equal to the condition number after linear preconditioning
with $\tilde{L}$. Moreover such a $\tilde{L}$ is
symmetric, positive definite, and has eigenvalues equal to the singular
values of $L$.
\end{prop}
A proof can be found in Appendix \ref{proof:symmetry_invariance}. If $L$ is assumed to be symmetric we can then denote its $i$-th eigenvalue as $\sigma_i$, with associated eigenvector $u_i$.

\subsection{Linear preconditioning for additive Hessians}\label{subsec:additive_hessian}

We call a Hessian \emph{additive} if it has the form
\begin{equation}\label{additive hessian definition}
    \nabla^{2}U(x)=A+B(x)
\end{equation} where $A,B(x)\in\mathbb{R}^{d\times d}$
are symmetric and $\|B(x)\|$ is `small' for all $x \in\mathbb{R}^d$. Examples of models whose potentials have Hessians in
this form include: Gaussians ($B(x)\equiv0$), strongly log-concave
mixture of Gaussians \citep[Section~6.1]{dalalyan:17}, and Bayesian Huberized regressions
with strongly log-concave and smooth priors \citep{rossett:04}. The results in this section are presented in generality, but the assumptions under which they hold are particularly appropriate for models with an additive Hessian.

We first present a general result under the following assumptions on the eigenstructure of $\nabla^2 U$ and $L$.

\begin{ass}\label{ass:explicit eigenvalue} There exists an $\epsilon \geq 0$ such that
    \[
    (1 + \epsilon) ^ {-1} \leq \frac{\lambda_i(x)}{\sigma_i^2} \leq 1 + \epsilon
    \]
    for all $i \in [d]$ and $x \in \mathbb{R}^d$, where $\sigma_i > 0$ is the $i$th eigenvalue of the preconditioner $L$.
\end{ass}

\begin{ass}\label{ass:explicit eigenvector} There exists a $\delta \geq 0$ such that $v_i(x)^Tu_i \geq 1 - (1-\sqrt{1-\delta}) ^ 2$ for all $i \in [d]$ and $x \in \mathbb{R}^d$.
\end{ass}

\begin{thm}\label{thm:kappa_L_bound_1}
Let $\Pi$ have potential $U$ satisfying Assumption \ref{ass:m sc and M smoothness}.
For a given preconditioner $L\in GL_{d}(\mathbb{R})$ for which Assumptions \ref{ass:explicit eigenvalue} and \ref{ass:explicit eigenvector} hold, the condition number
after preconditioning satisfies
\[
\tilde{\kappa}\leq(1+\epsilon)^{2}\left(1+\delta\sqrt{\sum_{i=1}^{d}\sigma_{i}^{2}\sum_{i=1}^{d}\sigma_{i}^{-2}}\right)^{4}
\]
\end{thm}

Proof can be found in Appendix \ref{proof:kappa_L_bound_1}. Assumption \ref{ass:explicit eigenvalue} states that the eigenvalues of $\nabla^2U(x)$ do not change much over $\mathbb{R}^d$. Assumption \ref{ass:explicit eigenvector} implies that  $v_i(x)^Tu_i \geq 1 - \delta$ for $i \in [d]$, $x, y \in \mathbb{R}^d$ and $v_i(x)^Tu_j \leq \delta$ for $i \in [d]$, $x, y \in \mathbb{R}^d$ where $i \neq j$, meaning that the eigenvectors also do not change much (see Appendix \ref{proof:v_control} for details).  In the Gaussian case $\epsilon = \delta = 0$ and so when $L=\Sigma_\pi^{-1/2}$ the bound becomes $\tilde{\kappa} \leq 1$, as expected.

\begin{rem}
    In the case of the additive Hessian, the eigenvalue stability inequality implies that
$\lambda_{i}(A)-\|B(x)\|\leq\lambda_{i}(A+B(x))\leq\lambda_{i}(A)+\|B(x)\|$
for all $i\in[d]$. Therefore choosing $L = A ^ {1/2}$ gives us that
\[
1-\frac{\|B(x)\|}{\lambda_d(A)} \leq \frac{\lambda_i(x)}{\sigma_i^2} \leq 1 + \frac{\|B(x)\|}{\lambda_d(A)}
\]
for all $i\in[d]$. So if $\|B(x)\|$ is small, $L = A ^ {1/2}$ will yield a smaller $\epsilon$. If $\|B(x)\|$ is large but $B(x)$ does not exhibit large variations we can simply restate the Hessian as $\nabla^2U(x)=\tilde{A}+\tilde{B}(x)$ where $\tilde{A}:=A + B(x^*)$ and $\tilde{B}(x):=B(x) - B(x^*)$ where $x^*\in\mathbb{R}^d$ is some point in the state space, and use $L = \tilde{A}^{1/2}$.
\end{rem}

\begin{rem}
Note that the quantity $\sqrt{\sum_{i}\sigma_{i}^{2}\sum_{i}\sigma_{i}^{-2}}=\sqrt{\text{Tr}(LL^{T})\text{Tr}((LL^{T})^{-1})}$ is upper bounded by $d\sqrt{\kappa(LL^T)}$. It can be viewed as an alternative `condition number' of $LL^T$, similar to that proposed in \citet{langmore:20}.
\end{rem}

\begin{rem}
The fact that $\sqrt{\sum_{i}\sigma_{i}^{2}\sum_{i}\sigma_{i}^{-2}}$
multiplies $\delta$ shows that when sampling from highly anisotropic distributions the penalty for misaligned eigenvectors of $LL^T$ relative to $\nabla^2 U$ is larger. As an example take $\Pi=\mathcal{N}(0,\Sigma_\pi)$ where $\Sigma_\pi\in\mathbb{R}^{2 \times 2}$ has eigendecomposition $\Sigma_\pi = Q_\pi D_\pi Q_\pi^T$ with $D_\pi=\text{diag}\{\lambda_1,\lambda_2\}$. We assume that $\Sigma_\pi$ is not a multiple of the identity. Construct the preconditioner $L = Q_\pi G D_\pi^{-1/2} G^T Q_\pi^T$ with the correct eigenvalues but whose eigenvectors have been perturbed by an orthogonal matrix $G$ from those of $\Sigma_\pi$ by the angle $\arccos(1-\delta)$. It can be shown that the coefficient of $\delta^4$ in $\tilde{\kappa}$ is $(1/4) \times (l - 2) ^ 2$ where $l := \lambda_1\lambda_2^{-1}+\lambda_1^{-1}\lambda_2$. So the more anisotropic $\Sigma_\pi$ is the more we are punished for having misaligned eigenvectors, as stated in the remark above. Figure \ref{fig:punished_misalignment} illustrates this fact. Each plot contains two contours: a blue one representing $\mathcal{N}(0, \Sigma_\pi)$ and an orange one representing $\mathcal{N}(0, (LL^T)^{-1})$. In both cases $G$ perturbs the eigenvectors by $\pi / 4$, with the angle between the semi-major axes of the contours shown in the red arrows. In the first case we have $(\lambda_1, \lambda_2) = (2, 1)$, in the second we have $(\lambda_1, \lambda_2) = (50, 1)$ engendering a far smaller `overlap' than in the first. The fact that $\tilde{\kappa} = O(\delta^4)$ also shows that the $\delta$ dependency in Theorem \ref{thm:kappa_L_bound_1} is tight.
\end{rem}

\begin{figure}[h]
    \centering
    \includegraphics[scale = 0.15]{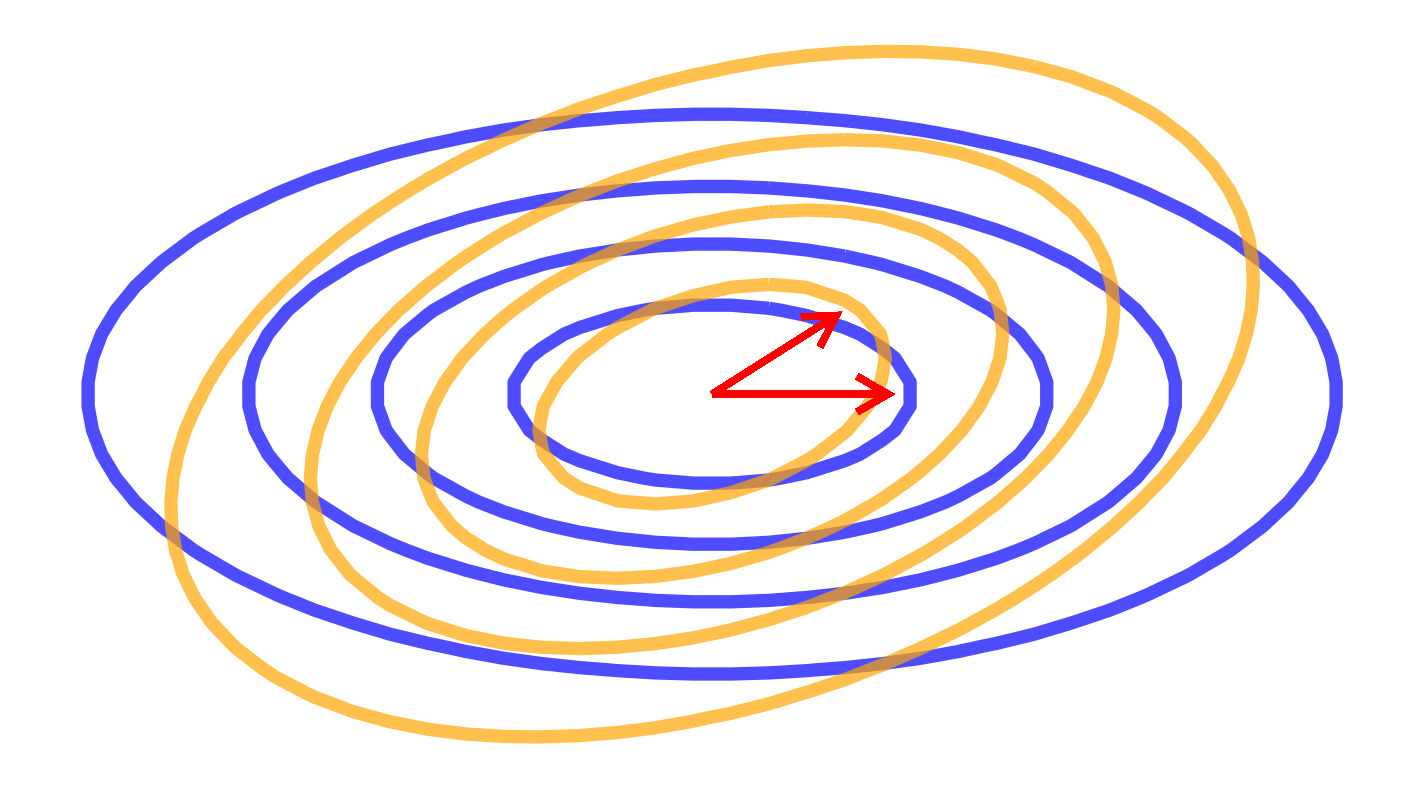}\includegraphics[scale = 0.15]{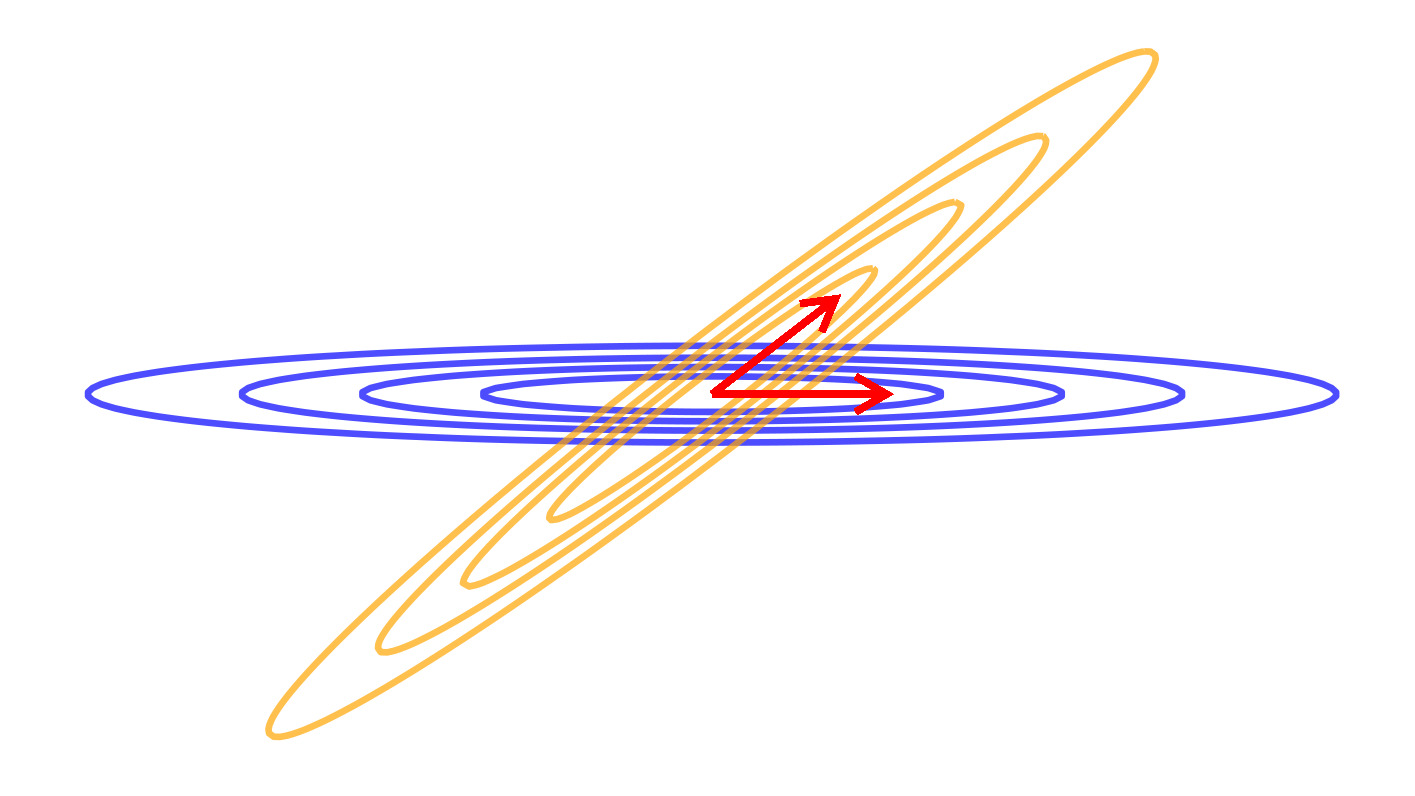}
    \caption{Two pairs of contour plots, each representing $\mathcal{N}(0, \Sigma_\pi)$ (blue) and $\mathcal{N}(0, (LL^T)^{-1})$ (orange). The angle between the semi-major axes (red) is the same in either case, but the preconditioner in the right hand plot is worse due to the anisotropy of $\mathcal{N}(0, \Sigma_\pi)$.}
    \label{fig:punished_misalignment}
\end{figure}

Assumption \ref{ass:explicit eigenvalue} and Assumption \ref{ass:explicit eigenvector} require knowledge of each individual eigenvalue and eigenvector of $\nabla^2U$ across the entire state space, which may not always be available. We next provide more easily verifiable assumptions under which a similar result holds.  This is achieved using results from matrix perturbation theory.  The eigenvalues of $\nabla^2 U$ can be controlled using only knowledge of the spectral norm by Weyl's inequality (see Assumption \ref{ass:hessian localisation}).  Similarly, using the Davis--Kahan theorem (e.g. \cite{yu:15}) eigenvectors can be controlled by the spectral norm provided that an `eigengap' condition holds (see Assumption \ref{ass:eigengap}).  We therefore provide a second result under Assumptions \ref{ass:hessian localisation} and \ref{ass:eigengap} below.

\begin{ass}\label{ass:hessian localisation} There exists an $\epsilon \geq 0$ such that $\|\nabla^2 U(x) - LL^T\| \leq \sigma_d^2\epsilon$ for all $x \in \mathbb{R}^d$.
\end{ass}

\begin{ass}\label{ass:eigengap} That $\gamma > 0$ where
    \[
    \gamma := \inf_{\substack{i, j \in [d] \\ |i - j| = 1}} |\sigma_i^2 - \sigma_j^2|
    \]
    is the \emph{eigengap} of $LL^T$.
\end{ass}

\begin{thm}\label{thm:kappa_L_bound_2}
Let $\Pi$ have potential $U$ satisfying Assumption \ref{ass:m sc and M smoothness}.
For a given preconditioner $L\in GL_{d}(\mathbb{R})$ for which Assumptions \ref{ass:hessian localisation} and \ref{ass:eigengap} hold, the condition number
after preconditioning satisfies
    \[
\tilde{\kappa}\leq(1+\epsilon)^{2}\left(1+\delta\sqrt{\sum_{i=1}^{d}\sigma_{i}^{2}\sum_{i=1}^{d}\sigma_{i}^{-2}}\right)^{4}
\]
where $\delta := 1 - (1 - 2\gamma ^ {-1}\sigma_d ^ {-2}\epsilon) ^ 2$.
\end{thm}

For a proof see Appendix \ref{proof:kappa_L_bound_2}.  Based on this result, it might be tempting to arbitrarily increase the eigengap $\gamma$ or the least eigenvalue $\sigma_d$ of $L$. Note, however, that this will also cause $\epsilon$ to increase.

\begin{rem}
    Appendix \ref{proof:A3 implies A1} shows that Assumption \ref{ass:hessian localisation} implies Assumption \ref{ass:explicit eigenvalue}.  Similarly Assumption \ref{ass:hessian localisation} and Assumption \ref{ass:eigengap} combined imply Assumption \ref{ass:explicit eigenvector} \citep[Corollary~1]{yu:15}.  If the norm $\|\nabla^2U(x)-LL^T\|$ is difficult to compute the Frobenius norm can be used to form the upper bound since $\|\nabla^2U(x)-LL^T\| \leq \|\nabla^2U(x)-LL^T\|_F$.
\end{rem}

The eigengap condition Assumption \ref{ass:eigengap} is always satisfiable as we are free to choose $L$.  It may not, however, be desirable.  We therefore present a final bound on $\tilde{\kappa}$ that requires only Assumption \ref{ass:hessian localisation}.

\begin{thm}\label{thm:kappa_L_bound_3}
    Let $\Pi$ have potential $U$ satisfying Assumption \ref{ass:m sc and M smoothness}. For a given preconditioner $L\in GL_d(\mathbb{R})$ for which Assumption \ref{ass:hessian localisation} holds, the condition number after preconditioning satisfies
    \[
    \tilde{\kappa} \leq (1 + \epsilon)\left(1 + \frac{\sigma_1^2}{m}\epsilon\right)
    \]
\end{thm}

For a proof see Appendix \ref{proof:kappa_L_bound_3}.  In the next two subsections we consider some popular choices of linear preconditioner that can be shown to reduce the condition number when the Hessian of $\Pi$ is of additive form, as described above.




\subsubsection{The Fisher matrix}
\label{subsubsec:fisher_matrix}

\citet{titsias:23}
suggests using $L\propto\mathcal{I}^{1/2}$ where $\mathcal{I}:=\mathbb{E}_{\Pi}[\nabla U(x)\nabla U(x)^{T}]$
is called the \emph{Fisher matrix}. This choice of $L$ maximises
the expected squared jump distance of the unadjusted Langevin algorithm,
which is simply the Euler-Maruyama discretisation of the Langevin diffusion given in equation (\ref{eq:langevin_diffn}). A straightforward integration by parts shows that $\mathcal{I}$ can also be written $\mathbb{E}_{\Pi}[\nabla^{2}U(x)]$, which
highlights its relationship with $\nabla^2 U$. Proposition \ref{prop:fisher_bound} shows that if a choice of $L$ satisfying Assumption \ref{ass:hessian localisation} exists, then the alternative choice of preconditioner $\mathcal{I}^{1/2}$ will also be valid.

\begin{prop}\label{prop:fisher_bound}
Let $\Pi$ have potential $U\in C^2$ and assume there exists a preconditioner $L\in GL_d(\mathbb{R})$ satisfying Assumption \ref{ass:hessian localisation} for some $\epsilon > 0$. Then $\|\nabla^2U(x) - \mathcal{I}\|\leq 2\sigma_d^2\epsilon$ for all $x \in \mathbb{R}^d$ where $\sigma_d^2$ is the least eigenvalue of $LL^T$.
\end{prop}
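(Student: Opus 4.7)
The plan is to use a triangle-inequality decomposition that inserts $LL^T$ as an intermediate matrix and then exploit the fact that $\mathcal{I}$ is itself an average of Hessians.

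First I would write
\[
\nabla^2 U(x) - \mathcal{I} = \bigl(\nabla^2 U(x) - LL^T\bigr) + \bigl(LL^T - \mathcal{I}\bigr)
\]
and apply the triangle inequality in the spectral norm. The first summand is bounded by $\sigma_d^2 \epsilon$ directly from Assumption \ref{ass:hessian localisation}, so the whole task reduces to bounding $\|LL^T - \mathcal{I}\|$ by the same quantity.

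For the second summand I would use the identity $\mathcal{I} = \mathbb{E}_\Pi[\nabla^2 U(X)]$ (obtained by integration by parts from the defining expression $\mathbb{E}_\Pi[\nabla U \nabla U^T]$, as noted just before the statement). Since $LL^T$ is constant in $x$, one may write
\[
LL^T - \mathcal{I} = \mathbb{E}_\Pi\bigl[LL^T - \nabla^2 U(X)\bigr].
\]
Then by the operator-norm version of Jensen's inequality (equivalently, the triangle inequality for Bochner integrals of symmetric matrices, applied componentwise after moving the norm inside the expectation),
\[
\|LL^T - \mathcal{I}\| \leq \mathbb{E}_\Pi\bigl[\|LL^T - \nabla^2 U(X)\|\bigr] \leq \sigma_d^2 \epsilon,
\]
where the last inequality uses Assumption \ref{ass:hessian localisation} pointwise in $x$ (which trivially gives a pointwise bound on the integrand).

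Combining the two bounds gives $\|\nabla^2 U(x) - \mathcal{I}\| \leq 2\sigma_d^2 \epsilon$ as desired. There is no real obstacle here; the only subtlety is justifying the move of the spectral norm inside the expectation, which is standard (by duality $\|A\|=\sup_{\|u\|=\|v\|=1} u^T A v$, one reduces to the scalar triangle inequality for $u^T\mathbb{E}_\Pi[LL^T-\nabla^2 U(X)]v$).
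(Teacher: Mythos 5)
Your proof is correct and follows essentially the same route as the paper: insert $LL^T$, apply the triangle inequality, recognise $\mathcal{I}=\mathbb{E}_\Pi[\nabla^2 U(X)]$, and move the norm inside the expectation via Jensen (which the paper invokes exactly at that step). The added duality justification for moving the norm inside is a nice touch but not a different method.
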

\begin{proof}
The assumption has that $\|\nabla^2U(x)-LL^T\|\leq \sigma_d^2\epsilon$ for all $x \in \mathbb{R}^d$. Therefore
\begin{align*}
\|\nabla^2U(x)-\mathcal{I}\| &= \|\nabla^2U(x)-LL^T + LL^T - \mathcal{I}\|\\
&\leq \sigma_d^2\epsilon + \left\|\mathbb{E}_\pi[\nabla^2U(X)-LL^T]\right\|\\
&\leq \sigma_d^2\epsilon + \mathbb{E}_\pi[\|\nabla^2U(X)-LL^T\|]\\
&\leq 2\sigma_d^2\epsilon
\end{align*}
where the penultimate line is due to Jensen's inequality. \qedsymbol
\end{proof}
\begin{cor}\label{cor:fisher_bound}
    Consider $\Pi$ satisfying Assumption \ref{ass:m sc and M smoothness} with a potential $U\in C^2$ and an $L\in GL_d(\mathbb{R})$ satisfying Assumption \ref{ass:hessian localisation} for some $\epsilon > 0$. Then choosing the preconditioner $\mathcal{I}^{1/2}$ gives
    \[
    \tilde{\kappa} \leq (1 + 2\epsilon)\left(1 + \frac{\sigma_1^2}{m}2\epsilon\right)
    \]
    where $\sigma_1^2$ is the greatest eigenvalue of $LL^T$.
\end{cor}
The proof of the above is a direct application of Theorem \ref{thm:kappa_L_bound_3}. 
Another commonly used choice of linear preconditioner is to set $L=\nabla^{2}U(x^{*})^{1/2}$
where $x^{*}$ is chosen to make $\nabla^{2}U(x^{*})$ sufficiently
representative of $\nabla^{2}U(x)$ everywhere (e.g. the mode). This choice avoids computing expectations with respect to $\Pi$. We test this choice of preconditioner empirically in Section \ref{subsec:experiment_multiplicative}.

\subsubsection{The target covariance}
\label{subsubsec:target_covariance}

Preconditioning with an estimate of the target covariance
is a popular strategy. To give some intuition for why this strategy is sensible we study
the spectral gap of the \emph{Ornstein--Uhlenbeck} (O-U) process, and show how this changes
under preconditioning. The preconditioned O-U process is an instance
of the preconditioned Langevin diffusion which is driven by the stochastic differential equation:
\begin{equation}\label{preconditioned langevin}
    dX_{t}=-\frac{1}{2}(LL^T)^{-1}\nabla U(X_t)dt+L^{-1}dB_{t},
\end{equation}
where $(B_{t})_{t\geq 0}$ is a Brownian motion. The preconditioned
O-U process is found by setting $\Pi=\mathcal{N}(0,\Sigma_{\pi})$, giving
\[
dX_{t}=-\frac{1}{2}(LL^T)^{-1}\Sigma_{\pi}^{-1}X_{t}dt+L^{-1}dB_{t}.
\]
Since our practical goal is to simulate this process on a computer,
we must take care when interpreting results about its continuous time
formulation. We therefore subject ourselves to the condition $\left|\det(-L^{-1}L^{-T}\Sigma_{\pi}^{-1})\right|=1$, which precludes the choice $L' := s^{-1}L$ for increasingly large $s>0$, which would arbitrarily increase the rate of convergence of
the continuous time process but destabilise the discretised process for any fixed numerical integrator step-size.  The below result is most likely well-known, but we were unable to find a reference, and so provide it here for completeness.

\begin{prop}\label{prop:optimal_OU}
The preconditioner $L=\Sigma_{\pi}^{-1/2}$ maximises the spectral
gap of the preconditioned O-U process subject to $\left|\det(-L^{-1}L^{-T}\Sigma_{\pi}^{-1})\right|=1$.
\end{prop}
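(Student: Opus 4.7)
The plan is to reduce this to a finite-dimensional optimization over $M := LL^{T}$ by (i) obtaining a closed-form expression for the spectral gap in terms of $M$, and (ii) resolving the determinant constraint with an AM--GM argument.

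First, I would observe that for $U(x)=\tfrac{1}{2}x^{T}\Sigma_{\pi}^{-1}x$ the preconditioned Langevin dynamics \eqref{preconditioned langevin} is reversible with respect to $\pi=\mathcal{N}(0,\Sigma_{\pi})$, so its generator is self-adjoint on $L^{2}(\pi)$ with Dirichlet form
\[
\mathcal{E}(f,f) \;=\; \tfrac{1}{2}\int(\nabla f)^{T}M^{-1}\nabla f\,d\pi,\qquad M := LL^{T}.
\]
Because $\pi$ is Gaussian, the generator is diagonalized by multivariate Hermite polynomials (with respect to the natural basis diagonalizing $M^{-1}$ and $\Sigma_{\pi}$ simultaneously), and the smallest non-zero eigenvalue is attained on the degree-one polynomials. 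Evaluating the Rayleigh quotient on linear test functions $f(x)=v^{T}x$ and then changing variables $w=\Sigma_{\pi}^{1/2}v$ yields
\[
\gamma_{L} \;=\; \tfrac{1}{2}\inf_{v\neq 0}\frac{v^{T}M^{-1}v}{v^{T}\Sigma_{\pi}v} \;=\; \tfrac{1}{2}\lambda_{\min}\!\left(\Sigma_{\pi}^{-1/2}M^{-1}\Sigma_{\pi}^{-1/2}\right).
\]

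Next I would handle the constraint. Because $\det(M)$ and $\det(\Sigma_{\pi})$ are positive, $\left|\det(-L^{-1}L^{-T}\Sigma_{\pi}^{-1})\right|=1$ collapses to $\det(M)=\det(\Sigma_{\pi}^{-1})$. Writing $N := \Sigma_{\pi}^{-1/2}M^{-1}\Sigma_{\pi}^{-1/2}\succ 0$, the constraint becomes $\det(N)=1$ and the spectral gap is $\gamma_{L}=\tfrac{1}{2}\lambda_{\min}(N)$. Applying AM--GM to the eigenvalues of $N$,
\[
\lambda_{\min}(N) \;\le\; \Bigl(\prod_{i=1}^{d}\lambda_{i}(N)\Bigr)^{1/d} \;=\; (\det N)^{1/d} \;=\; 1,
\]
with equality iff all eigenvalues coincide, i.e.\ $N=I_{d}$. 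Unwinding, this forces $M=\Sigma_{\pi}^{-1}$, which is realized by the choice $L=\Sigma_{\pi}^{-1/2}$, giving the maximum spectral gap $\gamma_{L}=\tfrac{1}{2}$.

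The main obstacle I anticipate is in step (i): justifying rigorously that the Poincaré constant for the preconditioned O--U process is attained on linear test functions. This is standard but worth stating carefully, either by appealing to the Hermite-polynomial spectral decomposition of the generator, or by performing a linear change of variables $Y=M^{1/2}X$ to reduce to a standard multivariate O--U process on $\mathbb{R}^{d}$ whose eigenfunctions are explicitly known. Everything downstream -- identifying the spectral gap with $\lambda_{\min}$ of a symmetric matrix, translating the determinant constraint, and applying AM--GM -- is elementary linear algebra.
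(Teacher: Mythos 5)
Your argument is correct and reaches the same conclusion, but the key step is handled differently from the paper. The paper does not use reversibility or a variational argument at all: it cites Metafune--Pallara--Priola for the full spectrum of the (generally non-symmetric) O--U generator, namely $\{\sum_i n_i\lambda_i : n_i\in\mathbb{N}\}$ with $\lambda_i$ the eigenvalues of the drift matrix $-L^{-1}L^{-T}\Sigma_\pi^{-1}$, so the gap is immediately $\min_i|\lambda_i|$; the constraint is then dispatched by noting that if some $|\lambda_i|>1$ the unit-determinant condition forces some $|\lambda_j|<1$ --- which is exactly your AM--GM step in disguise. You instead identify the gap via the Dirichlet form $\tfrac12\int \nabla f^T M^{-1}\nabla f\,d\pi$ and attainment on linear test functions, then optimise $\lambda_{\min}(N)$ subject to $\det N=1$. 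This is a valid and more self-contained route, but it carries two obligations the paper's citation avoids: (i) reversibility of the preconditioned diffusion with respect to $\mathcal{N}(0,\Sigma_\pi)$ requires the diffusion matrix $L^{-1}L^{-T}$ to agree with the matrix $(LL^T)^{-1}$ appearing in the drift, i.e.\ effectively $L$ symmetric (harmless, by Proposition \ref{symmetry_invariance}, but worth saying explicitly since the proposition is stated for general $L$); and (ii) the claim that the Poincar\'e constant is attained on degree-one polynomials, which you rightly flag --- your suggested change of variables $Z=\Sigma_\pi^{-1/2}X$, reducing to a product of independent one-dimensional O--U processes with rates $\tfrac12\lambda_i(N)$, closes this cleanly and is essentially an elementary replacement for the cited spectrum theorem. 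The residual difference in the optimal gap value ($\tfrac12$ versus $1$) is only a normalisation of the generator and does not affect the argmax.
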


Proof of Proposition \ref{prop:optimal_OU} can be found in Appendix \ref{proof:optimal_OU}. \citet{alrachid:18} also identify the spectral gap of the preconditioned
O-U process, although since they focus on the continuous time dynamics,
they do not identify an optimal preconditioner (with respect to the
spectral gap).

In Section \ref{subsec:additive_hessian} we characterised the performance
of a preconditioner $L$ by how much the Hessian $\nabla^{2}U(x)$
varies around $LL^{T}$. Here we do the same for $L=\Sigma_{\pi}^{-1/2}$.

\begin{prop}\label{prop:localise_covariance}
Let $\Pi$ be a measure with potential $U$, covariance $\Sigma_{\pi}\in\mathbb{R}^{d\times d}$,
mode $x^{*}\in\mathbb{R}^{d}$, and expectation $\mu_{\pi}\in\mathbb{R}^{d}$.
Assume that there exist positive definite matrices $\Delta_{+},\Delta_{-}\in\mathbb{R}^{d\times d}$
such that $\Delta_{-}\preceq\nabla^{2}U(x)\preceq\Delta_{+}$ for all $x \in \mathbb{R}^d$ and that $1-(x^*-\mu_\pi)^T\Delta_+(x^*-\mu_\pi)>0$. Then
$P_{-}\preceq\Sigma_{\pi}^{-1}\preceq P_{+}$ where
\begin{align*}
P_{+} & =c^{-1}\left(\mathbf{I}_d+\left(1-\text{\emph{Tr}}(D_+)\right)^{-1}D_+\right)\Delta_+\\
P_{-} & =c\left(\mathbf{I}_d+\left(1-\text{\emph{Tr}}(D_-)\right)^{-1}D_{-}\right)\Delta_-
\end{align*}
with $D_\pm = \Delta_\pm(x^*-\mu_\pi)(x^*-\mu_\pi)^T$ and $c:=\sqrt{\det\Delta_{-}{\det\Delta_{+}^{-1}}}\leq1$, and in addition
$$
\|\nabla^2U(x)-\Sigma_\pi^{-1}\|\leq\max\{\|\Delta_+ - P_-\|,\|P_+ - \Delta_-\|\}
$$
for all $x\in\mathbb{R}^d$.
\end{prop}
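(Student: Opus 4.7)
The plan is to obtain Loewner sandwich bounds on $\Sigma_\pi$ by comparing $\pi$ to Gaussians centred at the mode, then to invert these via the Sherman--Morrison identity, and finally to use the resulting Loewner bounds on $\Sigma_\pi^{-1}$ to control the spectral norm. First, since $\nabla U(x^*)=0$ (mode condition) and $\Delta_-\preceq \nabla^2U\preceq \Delta_+$, the second-order Taylor expansion of $U$ around $x^*$ yields the quadratic sandwich
\[
U(x^*)+\tfrac{1}{2}(x-x^*)^T\Delta_-(x-x^*)\leq U(x)\leq U(x^*)+\tfrac{1}{2}(x-x^*)^T\Delta_+(x-x^*).
\]
Exponentiating and integrating squeezes the normalising constant $Z$ between $Z_\pm = e^{-U(x^*)}(2\pi)^{d/2}(\det\Delta_\pm)^{-1/2}$, and so dividing by $Z$ produces the pointwise density sandwich
\[
c\,\mathcal{N}(x;x^*,\Delta_+^{-1})\leq \pi(x)\leq c^{-1}\mathcal{N}(x;x^*,\Delta_-^{-1})
\]
with $c=\sqrt{\det\Delta_-/\det\Delta_+}\leq 1$ as defined.

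Second, integrating the positive-semidefinite matrix $(x-x^*)(x-x^*)^T$ against these density bounds (and using the standard Gaussian second-moment computation) gives
\[
c\,\Delta_+^{-1}\preceq \mathbb{E}_\pi[(X-x^*)(X-x^*)^T]\preceq c^{-1}\Delta_-^{-1}.
\]
Then, using the identity $\mathbb{E}_\pi[(X-x^*)(X-x^*)^T]=\Sigma_\pi+(x^*-\mu_\pi)(x^*-\mu_\pi)^T$ (which follows from $\mathbb{E}_\pi[X-\mu_\pi]=0$) and writing $u:=x^*-\mu_\pi$, I obtain sandwich bounds on the covariance itself of the form
\[
c\Delta_+^{-1}-uu^T\preceq \Sigma_\pi\preceq c^{-1}\Delta_-^{-1}-uu^T.
\]
The stated assumption $1-u^T\Delta_+u>0$, combined with $c\leq 1$ and $\Delta_-\preceq\Delta_+$, ensures both sides are positive definite so that the inequalities can be inverted.

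Third, each bound is a rank-one perturbation $(\text{PD matrix})-uu^T$, so the Sherman--Morrison identity $(A-uu^T)^{-1}=A^{-1}+(1-u^TA^{-1}u)^{-1}A^{-1}uu^TA^{-1}$ produces explicit inverses. Tidying these using $\mathrm{Tr}(D_\pm)=u^T\Delta_\pm u$ and the identity $(1+\alpha D_\pm)^{-1}\Delta_\pm^{-1}=\Delta_\pm^{-1}-uu^T$ (where $\alpha=(1-\mathrm{Tr}(D_\pm))^{-1}$) matches the stated forms of $P_\pm$, yielding $P_-\preceq \Sigma_\pi^{-1}\preceq P_+$.

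Finally, combining the assumed $\Delta_-\preceq\nabla^2U(x)\preceq\Delta_+$ with the derived $P_-\preceq\Sigma_\pi^{-1}\preceq P_+$ and subtracting gives
\[
\Delta_--P_+\preceq \nabla^2U(x)-\Sigma_\pi^{-1}\preceq \Delta_+-P_-,
\]
and since for a symmetric matrix $M$ with $A\preceq M\preceq B$ the spectral norm satisfies $\|M\|\leq \max\{\|A\|,\|B\|\}$, the claimed bound follows. The main obstacle is the Sherman--Morrison bookkeeping in the third step: the algebra must produce the exact $P_\pm$ after simplification, and one must verify that the single scalar condition $1-u^T\Delta_+u>0$ is indeed sufficient for all the inverses and Loewner inversions to be well defined.
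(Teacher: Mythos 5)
Your first two steps (Taylor sandwich on $U$, density sandwich, integrating second moments) are exactly the paper's approach. The gap you anticipated in your third step is real, and there is a second, related issue.

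Your Loewner sandwich on $\Sigma_\pi$ places the scalar $c^{\pm 1}$ only on the $\Delta_\mp^{-1}$ term:
\[
c\,\Delta_+^{-1}-uu^T\preceq \Sigma_\pi\preceq c^{-1}\Delta_-^{-1}-uu^T,
\]
with $u:=x^*-\mu_\pi$. But the stated $P_\pm$ are the inverses of $c(\Delta_+^{-1}-uu^T)$ and $c^{-1}(\Delta_-^{-1}-uu^T)$, i.e. with $c^{\pm1}$ applied to the \emph{whole} rank-one-corrected matrix. Indeed, by Sherman--Morrison
\[
c^{-1}\bigl(\Delta_+^{-1}-uu^T\bigr)^{-1}=c^{-1}\Bigl(\Delta_++\tfrac{\Delta_+uu^T\Delta_+}{1-u^T\Delta_+u}\Bigr)=c^{-1}\Bigl(\mathbf{I}_d+\tfrac{D_+}{1-\mathrm{Tr}(D_+)}\Bigr)\Delta_+=P_+,
\]
whereas applying Sherman--Morrison to your lower bound with $A=c\Delta_+^{-1}$, $A^{-1}=c^{-1}\Delta_+$, gives
\[
\bigl(c\Delta_+^{-1}-uu^T\bigr)^{-1}=c^{-1}\Delta_++\frac{c^{-2}\,\Delta_+uu^T\Delta_+}{1-c^{-1}u^T\Delta_+u},
\]
which equals $P_+$ only when $c=1$. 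One can check $\bigl(c\Delta_+^{-1}-uu^T\bigr)^{-1}\succeq P_+$, so your route yields a strictly weaker upper bound on $\Sigma_\pi^{-1}$ than the one claimed; the analogous comparison holds for $P_-$. The reason your factoring differs from the paper's is instructive: the paper bounds $v^T\Sigma_\pi v$ directly for each $v$ and pulls the ratio $Z_{\Delta_-}/Z\leq Z_{\Delta_-}/Z_{\Delta_+}=c^{-1}$ out in front of the \emph{entire} parenthesised expression $v^T\Delta_-^{-1}v-(v^Tu)^2$, so the scalar $c^{\pm1}$ multiplies both the second-moment term and the rank-one correction. Your decomposition $\Sigma_\pi=\mathbb{E}_\pi[(X-x^*)(X-x^*)^T]-uu^T$ applies the density sandwich only to the first summand and leaves $-uu^T$ un-scaled, which is what breaks the match.

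There is a second gap concerning well-posedness. The inversion of your lower bound $c\Delta_+^{-1}-uu^T=c\bigl(\Delta_+^{-1}-c^{-1}uu^T\bigr)$ requires $1-c^{-1}u^T\Delta_+u>0$, i.e.\ $c>u^T\Delta_+u$. Since $c\leq1$ this is a \emph{strictly stronger} condition than the stated hypothesis $1-u^T\Delta_+u>0$, so the stated assumption does not license the inversion in your argument. By contrast $c(\Delta_+^{-1}-uu^T)$ is positive definite precisely when $1-u^T\Delta_+u>0$, so the paper's form of the bound inverts under exactly the stated hypothesis (and $\Delta_-\preceq\Delta_+$ then handles $P_-$). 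To repair your proof you should mirror the paper's quadratic-form argument and show the sandwich $c\bigl(\Delta_+^{-1}-uu^T\bigr)\preceq\Sigma_\pi\preceq c^{-1}\bigl(\Delta_-^{-1}-uu^T\bigr)$ before inverting; the final spectral-norm step you give is then correct as written.
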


For a proof see Appendix \ref{proof:localise_covariance}.  Proposition \ref{prop:localise_covariance} allows us to localise the covariance in terms
of the parameters of the target distribution. 

The intuition from Section \ref{subsec:additive_hessian} suggests that Hessians which exhibit small variations across the state space are preconditionable. In this scenario we would have $\Delta_+\approx\Delta_-$ and hence that $c \approx 1$. Proposition \ref{prop:localise_covariance} then suggests that so long as the distance between the mean and the mode is not too great, $\|\nabla^2U(x)-\Sigma_\pi^{-1}\|$ is small. In summary, if $\Pi$ is preconditionable, and if the mean is close to the mode, then preconditioning with $L=\Sigma_\pi^{-1/2}$ is sensible.

Recall that a potential with additive Hessian satisfies $\nabla^2U(x) = A+B(x)$ where $A,B(x)\in\mathbb{R}^{d\times d}$ are symmetric.
In this case $\Delta_{-}$ and $\Delta_{+}$ will be generated by
variations in $B(x)$. Therefore, given Proposition \ref{prop:localise_covariance},
a tighter localisation of $B(x)$ gives a tighter localisation of
$\nabla^{2}U(x)$ around the inverse covariance, leading to the following result.

\begin{cor}\label{cor:localise_covariance_additive}
Let $\Pi$ be a measure with potential $U$, covariance $\Sigma_{\pi}\in\mathbb{R}^{d\times d}$,
mode $x^{*}\in\mathbb{R}^{d}$, and expectation $\mu_{\pi}\in\mathbb{R}^{d}$.
If the Hessian of $U$ is of the form $\nabla^{2}U(x)=A+B(x)$
with $\|B(x)\|\leq\epsilon$ and $\epsilon\mathbf{I}_{d}\prec A$ for some
$\epsilon>0$, then if $1-(x^*-\mu_\pi)^T(A+\epsilon\mathbf{I}_d)(x^*-\mu_\pi)>0$ it follows that
\[
\|\nabla^{2}U(x)-\Sigma_{\pi}^{-1}\|\leq(c^{-1}+1)\epsilon+(c^{-1}-1)\|A\|+\max\{\|\tilde{P}_{-}\|,\|\tilde{P}_{+}\|\},
\]
where
\begin{align*}
\tilde{P}_{+} & =c^{-1}\left(1-\text{\emph{Tr}}(\tilde{D}_+)\right)^{-1}\tilde{D}_+(A+\epsilon\mathbf{I}_{d}),\\
\tilde{P}_{-} & =c\left(1-\text{\emph{Tr}}(\tilde{D}_-)\right)^{-1}\tilde{D}_-(A-\epsilon\mathbf{I}_{d}),
\end{align*}
with $\tilde{D}_\pm = (A \pm \epsilon \mathbf{I}_d)(x^*-\mu_\pi)(x^*-\mu_\pi)^T$ and c:=$\sqrt{{\det(A-\epsilon\mathbf{I}_{d})}{\det(A+\epsilon\mathbf{I}_{d})^{-1}}}\leq1$.
\end{cor}

A proof can be found
in Appendix \ref{proof:localise_covariance_additive}. In Section \ref{subsec:experiment_additive} we look at the difference in performance between the preconditioners $L = A^{1/2}$, $L = \Sigma_\pi^{-1/2}$ and $L = \mathbf{I}_d$.

\subsection{Linear preconditioning for multiplicative Hessians}\label{subsec:multiplicative_hessian}

A multiplicative Hessian has the form
\begin{equation}\label{multiplicative hessian definition}
    \nabla^{2}U(x)=Z^{T}\Lambda(x)Z
\end{equation}
where $Z\in\mathbb{R}^{n\times d}$ for $n\geq d$ is a matrix
whose rows are usually the rows of some dataset and $\Lambda(x)\in\mathbb{R}^{n\times n}$.
An example of a model with a multiplicative Hessian is as follows
\begin{equation}\label{Bayesian_regression}
    \pi(\theta)\propto\exp\left(-\sum_{k=1}^{n}l_{y_{k}}(z_{k}^{T}\theta)-\frac{\lambda}{2}(\theta-\mu)^{T}Z^T \Lambda Z(\theta-\mu)\right)
\end{equation}
where $\{(y_{k},z_{k})\}_{k=1}^{n}$ are observations with $y_{k}\in\mathbb{R}$ and $z_{k}\in\mathbb{R}^{d}$ for
$k\in[n]$, $Z$ is a matrix with element in row $i$ and column $j$ equal to the $j$th element of $z_i$, and $\Lambda \in\mathbb{R}^{n \times n}$ is
positive definite.  Here $\ell_{y_k}$ denotes some loss associated with observation $k$. In the case $\ell_{y_k}$ is a negative log-likelihood, equation \eqref{Bayesian_regression} therefore describes the posterior associated with a typical generalised linear model using the generalised g-prior of \citet{hanson:14}.


Without further assumptions we have the following:
\begin{prop}\label{prop:ostrowski_rectangular}
A measure $\Pi$ whose potential $U$ has a multiplicative Hessian satisfies
\[
\frac{\sup_{x\in\mathbb{R}^d}\lambda_{n-d+1}(\Lambda(x))}{\inf_{x\in\mathbb{R}^d}\lambda_d(\Lambda(x)\kappa(Z^TZ)} \leq \kappa \leq \kappa(Z^TZ)\frac{\sup_{x\in\mathbb{R}^d}\lambda_1(\Lambda(x))}{\inf_{x\in\mathbb{R}^d}\lambda_d(\Lambda(x))}
\]
\end{prop}

The proof in Appendix \ref{proof:ostrowski_rectangular} relies on an extension of  Ostrowski's theorem to rectangular matrices \citep[Theorem~3.2]{higham:98a}. In many cases $\Lambda(x)$ will be diagonal and each eigenvalue $\lambda_i(\Lambda(x))$ will range between the same possible values $c$ and $C$ (this is the case for binary logistic regression using the g-prior, for example). In this instance a more precise statement about $\kappa$ can straightforwardly be made.

\begin{ass}\label{ass:Lambda bound} The Hessian of $\Pi$ is of multiplicative form with $\Lambda(x)$ diagonal, and there exists $c, C > 0$ such that $\sup_{x\in\mathbb{R}^d}\lambda_i(\Lambda(x)) = C$ and $\inf_{x\in\mathbb{R}^d}\lambda_i(\Lambda(x)) = c$ for all $i \in [n]$.
\end{ass}

\begin{prop}\label{prop:condition_number_multiplicative}
    A measure $\Pi$ for which Assumption \ref{ass:Lambda bound}  holds has a condition number
    \[
    \kappa=\frac{C}{c}\kappa(Z^TZ)
    \]
\end{prop}

\subsubsection{The choice $L = (Z^TZ)^{1/2}$}

A natural choice of preconditioner here is $L = (Z^TZ)^{1/2}$, which is proportional to that suggested by \citet[Section~6.2]{dalalyan:17}. For this choice we have the following result.

\begin{prop} \label{prop:mult_dalalyan}
    Consider a measure $\Pi$ whose potential has a multiplicative Hessian. Then preconditioning with $L = (Z^TZ)^{1/2}$ gives
    \[
    \frac{\sup_{x\in\mathbb{R}^d}\lambda_{n-d+1}(\Lambda(x))}{\inf_{x\in\mathbb{R}^d}\lambda_d(\Lambda(x))} \leq \tilde{\kappa} \leq\frac{\sup_{x\in\mathbb{R}^d}\lambda_1(\Lambda(x))}{\inf_{x\in\mathbb{R}^d}\lambda_d(\Lambda(x))} 
    \]
\end{prop}

For a proof see Appendix \ref{proof:mult_dalalyan}. Under Assumption \ref{ass:Lambda bound} the upper and lower bounds of Proposition \ref{prop:mult_dalalyan} are equal, giving the following.

\begin{cor}\label{cor:mult_dalalyan}
    Consider a measure $\Pi$ for which Assumption \ref{ass:Lambda bound} holds. Choosing $L = (Z^TZ)^{1/2}$ gives
    \[
    \tilde{\kappa} = \frac{C}{c}
    \]
\end{cor}

The condition number is therefore clearly reduced under preconditioning with $L = (Z^TZ)^{1/2}$, since $\kappa/\tilde{\kappa} = \kappa(Z^TZ)\geq 1$.  The level of reduction will be determined by $\kappa(Z^TZ)$, which characterises how far from orthogonal $Z^TZ$ is.  In the context of regression or classification problems $\kappa(Z^TZ)$ will be smallest under an orthogonal design and larger when there is more collinearity among features or when different features have very different variances.

\subsubsection{The QR decomposition}
\label{subsubsec:QR}

A popular strategy for regression and classification models in which $Z$ is a design matrix is to perform a (reduced) QR decomposition, setting $Z = QR$ where $Q \in \mathbb{R}^{n \times d}$ is orthogonal and $R \in \mathbb{R}^{d \times d}$ is upper triangular (e.g. \cite[Section~1.2]{stanuserguide}).  In this case the preconditioner is $L = R$, from which it readily follows by setting $Z = Q$ in Proposition \ref{prop:ostrowski_rectangular} that
$$
\tilde{\kappa} \leq \frac{\sup_{x\in\mathbb{R}^d}\lambda_1(\Lambda(x))}{\inf_{x\in\mathbb{R}^d}\lambda_d(\Lambda(x))}.
$$
The QR strategy therefore gives the same upper bound as the choice $(Z^TZ)^{1/2}$ of the previous subsection.

\subsubsection{The Hessian at the mode}

Another natural choice of preconditioner is $L = \nabla^2 U(x^*)^{1/2} = (Z^T\Lambda(x^*)Z)^{1/2}$ for some $x^* \in \mathbb{R}^d$. For instance $x^*$ could be the mode of $\Pi$. Here we have the following

\begin{prop}\label{prop:mult_Hessian_at_point}
    Consider a measure $\Pi$ whose potential has a multiplicative Hessian. Then choosing $L = (Z^T\Lambda(x^*)Z)^{1/2}$ gives
    \[
    \tilde{\kappa} \leq \frac{\sup_{x\in\mathbb{R}^d}\lambda_1(\Lambda(x^*)^{-1/2}\Lambda(x)\Lambda(x^*)^{-1/2})}{\inf_{x\in\mathbb{R}^d}\lambda_d(\Lambda(x^*)^{-1/2}\Lambda(x)\Lambda(x^*)^{-1/2})}  \leq \left(\frac{\sup_{x\in\mathbb{R}^d}\lambda_1(\Lambda(x))}{\inf_{x\in\mathbb{R}^d}\lambda_d(\Lambda(x))}\right)^2
    \]
\end{prop}

A proof can be found in Appendix \ref{proof:mult_Hessian_at_point}.  This strategy might be preferable to the previous two options in the setting where $\Lambda(x)$ still has some variation between eigenvalues, but when it does not change much for different values of $x$, meaning it is an almost constant matrix.  In that case $\Lambda(x^*)^{-1/2}\Lambda(x)\Lambda(x^*)^{-1/2}$ should be close to the identity matrix, meaning the condition number after preconditioning with the Hessian at the mode will be $\approx 1$, whereas $\sup_{x\in\mathbb{R}^d}\lambda_1(\Lambda(x))/\inf_{x\in\mathbb{R}^d}\lambda_d(\Lambda(x))$ might still be much larger than 1.  We compare the Hessian at the mode to the choice $L = (Z^TZ)^{1/2}$ empirically in Section \ref{subsec:experiment_multiplicative}.

\subsection{Tight condition number dependence of the spectral gap of random walk Metropolis}
\label{subsec:spectral_gap}

\citet{andrieu:22} give upper and lower bounds on the spectral gap of the random walk Metropolis that are tight in the dimension, but only the lower bound is explicit in its dependence on the condition number. Here we show that if some additional conditions are imposed on $U$ akin to an additive Hessian structure then the dependence on $\kappa$ can also be made explicit in the upper bound.

\begin{ass}\label{ass:Hessian only localisation} $\Pi$ satisfies Assumption \ref{ass:m sc and M smoothness} and in addition there is an $\epsilon > 0$ such that $\|\nabla^2U(x)-\nabla^2 U(y)\|\leq m\epsilon$ for all $x,y\in\mathbb{R}^d$.  
\end{ass} 

\begin{rem}
    Assumption \ref{ass:Hessian only localisation} will hold when $\nabla^2U(x) = A + B(x)$ with $\|B(x)\|$ suitably small, since $\sup_x\|B(x)\|<m\epsilon/2 \implies \|\nabla^2U(x) - \nabla^2U(y)\| \leq 2\sup_x\|B(x)\| \leq m\epsilon$ as required.  Assumption \ref{ass:Hessian only localisation} is therefore simply one way to formalize the notion of `additive' Hessian structure.
\end{rem}  

\begin{thm}\label{RWM_mixing_bounds}
    Let $\Pi$ have potential $U$ satisfying Assumption \ref{ass:m sc and M smoothness} with condition number $\kappa=M/m\geq 1$. If $\Pi$ also satisfies Assumption \ref{ass:Hessian only localisation} then the spectral gap $\gamma_{\kappa}$ of the random walk Metropolis with proposal variance $\sigma^2I$ such that $\sigma ^ 2 := \xi/(Md)$ for any $\xi > 0$ satisfies
    \[
    C\xi\exp(-2\xi) \cdot \frac{1}{\kappa}\cdot \frac{1}{d}\leq\gamma_{\kappa}\leq(1 + 2\epsilon)\frac{\xi}{2} \cdot \frac{1}{\kappa} \cdot \frac{1}{d}
    \]
    where $C = 1.972 \times 10 ^ {-4}$
\end{thm}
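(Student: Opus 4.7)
The lower bound is essentially already established by \citet{andrieu:22} under Assumption \ref{ass:m sc and M smoothness} alone: their spectral gap lower bound for RWM is explicit in $1/(\kappa d)$, and tracking constants with the stepsize $\sigma^2 = \xi/(Md)$ recovers the $C\xi e^{-2\xi}$ prefactor. So my plan is to invoke their lower bound directly and focus the remainder of the proof on making their upper bound explicit in $\kappa$ under the additional Assumption \ref{ass:Hessian only localisation}.

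For the upper bound, I would use the Rayleigh-quotient characterisation $\gamma_\kappa \leq \mathcal{E}(P,f)/\mathrm{Var}_\Pi(f)$ applied to a single linear test function $f(x) := v^\top x$, where $v \in \mathbb{R}^d$ is a unit eigenvector of $\nabla^2 U(x_0)$ corresponding to the smallest eigenvalue $\lambda_d(\nabla^2 U(x_0))$, and $x_0$ is chosen so that $\lambda_d(\nabla^2 U(x_0))$ is within $m\eta$ of the infimum $m$ (for arbitrary $\eta > 0$). Since the RWM proposal is a symmetric Gaussian with variance $\sigma^2 I$ and the acceptance probability satisfies $\alpha \leq 1$, the Dirichlet form is bounded above by
\[
\mathcal{E}(P,f) = \tfrac{\sigma^2}{2}\,\mathbb{E}_{X\sim\Pi,\,\zeta\sim\mathcal{N}(0,I)}\bigl[\alpha(X,X+\sigma\zeta)(v^\top\zeta)^2\bigr] \leq \tfrac{\sigma^2}{2}.
\]

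The crux of the argument — and the main technical obstacle — is lower-bounding $\mathrm{Var}_\Pi(v^\top X) = v^\top\Sigma_\pi v$ in a way that properly captures the dependence on $m$. The key ingredient is the Cramér--Rao-type inequality $\Sigma_\pi \succeq (\mathbb{E}_\Pi[\nabla^2 U(X)])^{-1}$, valid for any smooth log-concave target (it follows from the translation-family Cramér--Rao bound combined with the integration-by-parts identity $\mathbb{E}_\Pi[\nabla U\,\nabla U^\top] = \mathbb{E}_\Pi[\nabla^2 U]$). Pairing this with the Cauchy--Schwarz inequality $v^\top A^{-1} v \cdot v^\top A v \geq 1$ (for unit $v$ and positive definite $A$) gives $v^\top\Sigma_\pi v \geq 1/(v^\top\mathbb{E}_\Pi[\nabla^2 U]v)$, and then Assumption \ref{ass:Hessian only localisation} together with Weyl's inequality yields $v^\top\mathbb{E}_\Pi[\nabla^2 U]v \leq \lambda_d(\nabla^2 U(x_0)) + m\epsilon \leq m(1+\epsilon+\eta)$. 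Substituting $\sigma^2 = \xi/(Md)$ and letting $\eta \downarrow 0$ then produces an upper bound of the claimed order, with the stated $(1+2\epsilon)$ factor emerging from careful bookkeeping of the Hessian-variation errors at the reference point $x_0$ and in the expectation $\mathbb{E}_\Pi[\nabla^2 U]$. Without Assumption \ref{ass:Hessian only localisation} one could only bound $\lambda_d(\mathbb{E}_\Pi[\nabla^2 U])$ by the uninformative $M$, which collapses the $\kappa^{-1}$ scaling — and this is precisely why the Andrieu et al.\ upper bound was not previously explicit in $\kappa$.
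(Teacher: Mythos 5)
Your proposal is correct and follows essentially the same route as the paper: the lower bound is quoted from \citet{andrieu:22}, and the upper bound comes from a linear test function in the Rayleigh quotient, the Dirichlet-form bound $\mathcal{E}(P,f)\leq\sigma^2/2$, and the Cram\'er--Rao inequality $\Sigma_\pi\succeq(\mathbb{E}_\Pi[\nabla^2U])^{-1}$ combined with Assumption \ref{ass:Hessian only localisation} to control $\lambda_d(\mathbb{E}_\Pi[\nabla^2U])$. The only (harmless) difference is cosmetic: you take the eigenvector of $\nabla^2U(x_0)$ at a near-infimal point $x_0$ and use Cauchy--Schwarz, which in fact yields the slightly sharper factor $(1+\epsilon)$ in the limit $\eta\downarrow 0$, whereas the paper picks the top eigenvector of $\mathbb{E}_\Pi[\nabla^2U]^{-1}$ and arrives at $(1+2\epsilon)$; both give the stated bound.
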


For a proof see Appendix \ref{proof:RWM_mixing_bounds}.  Both bounds presented above are $O(\kappa^{-1})$, which implies that the relaxation time $1/\gamma_{\kappa}$ is precisely linear in $\kappa$. The lower bound is as originally presented by \citet[Theorem~1]{andrieu:22}, and the choice $\sigma^2 = \xi/(Md)$ is as recommended in that work to ensure tight $O(d^{-1})$ dependence. As the authors remark, the constant $C$ can possibly be made a few orders of magnitude larger.

Theorem \ref{RWM_mixing_bounds} above can be combined with the condition number reduction results of Section \ref{subsec:additive_hessian} to guarantee that under the additive Hessian assumption the spectral gap increases under appropriate linear preconditioning, as shown in the below corollary.

\begin{cor}\label{cor:improved_gap}
    Let $\Pi$ have potential $U$ satisfying Assumption \ref{ass:m sc and M smoothness} with condition number $\kappa=M/m\geq 1$. Assume that $\Pi$ also satisfies Assumption \ref{ass:Hessian only localisation} with constant $\epsilon'>0$. Using a preconditioner $L\in GL_d(\mathbb{R})$ satisfying Assumption \ref{ass:hessian localisation} with constant $\epsilon>0$ ensures that the spectral gap $\gamma_{\kappa}$ of the RWM with proposal variance $\sigma^2I$ such that $\sigma ^ 2 := \xi/(Md)$ for any $\xi > 0$ increases under preconditioning whenever
    \[
        \kappa \geq \frac{1}{2}C ^ {-1}\exp(2\xi)(1+2\epsilon')(1+\epsilon)\left(1+\frac{\sigma_1(L)^2}{m}\epsilon\right)
    \]
    where $C = 1.972 \times 10 ^ {-4}$, and $\sigma_d(L)$ is the least singular value of $L$.
\end{cor}

Corollary \ref{cor:improved_gap} uses Theorem \ref{RWM_mixing_bounds} along with Theorem \ref{thm:kappa_L_bound_3}, which provides an upper bound on the condition number after preconditioning in the additive Hessian setting. This bound leads to a lower bound on the relaxation time, which can be compared with the upper bound from Theorem \ref{RWM_mixing_bounds} before preconditioning. This comparison establishes that preconditioning ensures an improvement in relaxation time, provided the initial condition number $\kappa$ is sufficiently large.  A proof can be found in \ref{proof:improved_gap}. Note that a similar result could be stated by applying the bounds of Theorem \ref{thm:kappa_L_bound_1} or Theorem \ref{thm:kappa_L_bound_2} in place of Theorem \ref{thm:kappa_L_bound_3}, which would then modify the necessary lower bound on $\kappa$ in the above.  We leave the details of this as a simple exercise for the reader.

\subsection{Counterproductive diagonal preconditioning}\label{subsubsec:diagonal_counterproductive}

The appeal of diagonal preconditioning is motivated
by the promise of reducing the condition number in $O(d)$ computational cost. The practice has a strong
tradition in numerical linear algebra, where diagonal preconditioning
is commonplace as discussed in Section \ref{subsec:optimisation}. Therefore choosing $L=\text{diag}(\hat{\Sigma}_{\pi})^{-1/2}$ for some estimate $\hat{\Sigma}_\pi$ of the target covariance has become a common practice since it is viewed as computationally cheap, and it is assumed that it will offer an improvement on no preconditioning at all. The developers of Stan, for instance, offer the option of diagonal preconditioning with the target covariance as using a diagonal mass matrix \citep{carpenter2017stan}.
Such an option is also offered in the TensorFlow Probability library
\citep{abadi:15}. 

In fact, we show here that there are examples of distributions for which diagonal preconditioning in this manner can actually \emph{increase} the condition number, and therefore be explicitly \emph{worse} than performing no pre-conditioning at all.  This phenomenon can be observed even when the target is Gaussian. Noting that $\tilde{\kappa}=\|\text{diag}(\Sigma_{\pi})^{1/2}\Sigma_{\pi}^{-1}\text{diag}(\Sigma_{\pi})^{1/2}\| \cdot \|\text{diag}(\Sigma_{\pi})^{-1/2}\Sigma_{\pi}\text{diag}(\Sigma_{\pi})^{-1/2}\|=\|C_{\pi}^{-1}\|\|C_{\pi}\|$
where $C_{\pi}\in\mathbb{R}^{d\times d}$ is the correlation matrix
associated with $\Sigma_{\pi}$, it suffices to find a
$\Sigma_{\pi}$ for which $\tilde{\kappa} = \kappa(C_\pi) > \kappa(\Sigma_\pi) = \kappa$. The matrix below is such an example.
\begin{equation}\label{gaussian_covariance}
    \Sigma_{\pi}=
\begin{pmatrix}21.5 & 5.7 & 18.7 & 4.5 & 6.9\\
\star & 2.0 & 4.9 & 1.2 & 2.1\\
\star & \star & 16.3 & 3.9 & 5.7\\
\star & \star & \star & 1.4 & 1.4\\
\star & \star & \star & \star & 2.9
\end{pmatrix}
\implies\kappa\approx4.4\times10^{3},\tilde{\kappa}\approx8.1\times10^{3}
\end{equation}
In the above we have truncated the entries to a single decimal place: see Appendix \ref{appendix:B} for the full matrix. The condition number increases by a substantial amount, even though
we have perfect knowledge of the target covariance. See Section \ref{subsec:experiment_counterproductive} for an empirical analysis of the random walk Metropolis on a $\mathcal{N}(0, \Sigma_\pi)$ target (with $\Sigma_\pi$ as above) after diagonal and dense preconditioning.

In a practical
scenario we would have to expend computational effort to construct $\text{diag}(\hat{\Sigma}_\pi)^{-1/2}$ and so for a target such as the one described here,
this effort would be wasted as it actually reduces sample quality. In general we can conjecture that targets with covariance matrices whose
associated correlations are far from being diagonally dominant will
be least amenable to diagonal preconditioning.

\section{Experiments}\label{sec:experiments}

\subsection{Counterproductive Diagonal Preconditioning}\label{subsec:experiment_counterproductive}

This experiment illustrates the phenomenon described in section \ref{subsubsec:diagonal_counterproductive}: namely that there exist Gaussian targets in the form $\mathcal{N}(0, \Sigma_\pi)$ such that preconditioning with $L = \text{diag}(\Sigma_\pi) ^ {-1/2}$ \emph{increases} the condition number.

We compare the performance of three RWM algorithms: one with no preconditioning, one with dense preconditioning ($L = \Sigma_\pi^{-1/2}$), and one with diagonal preconditioning ($L = \text{diag}(\Sigma_\pi) ^ {-1/2}$). Each chain targets $\mathcal{N}(0, \Sigma_\pi)$ with $\Sigma_\pi$ as in (\ref{gaussian_covariance}) and is initialised at equilibrium. Proposals take the standard RWM form $X'=X+\sigma L ^ {-1} \xi$ with $\xi\sim\mathcal{N}(0, \mathbf{I}_5)$ and $\sigma = 2.38 / \sqrt{d}$ as recommended by \citet{roberts:01}. We run each chain from each algorithm 100 times at 10,000 iterations per chain. For each chain we compute the \emph{effective sample size} (ESS) in each dimension using the \texttt{effectiveSize} function from the \texttt{coda} package \citep{plummer:06}. The ESS is an estimator which measures the amount of \emph{independent} samples one would need to achieve an empirical average with equivalent variance to the one computed from the chain.


\begin{figure}
    \centering
\includegraphics[scale = 0.25]{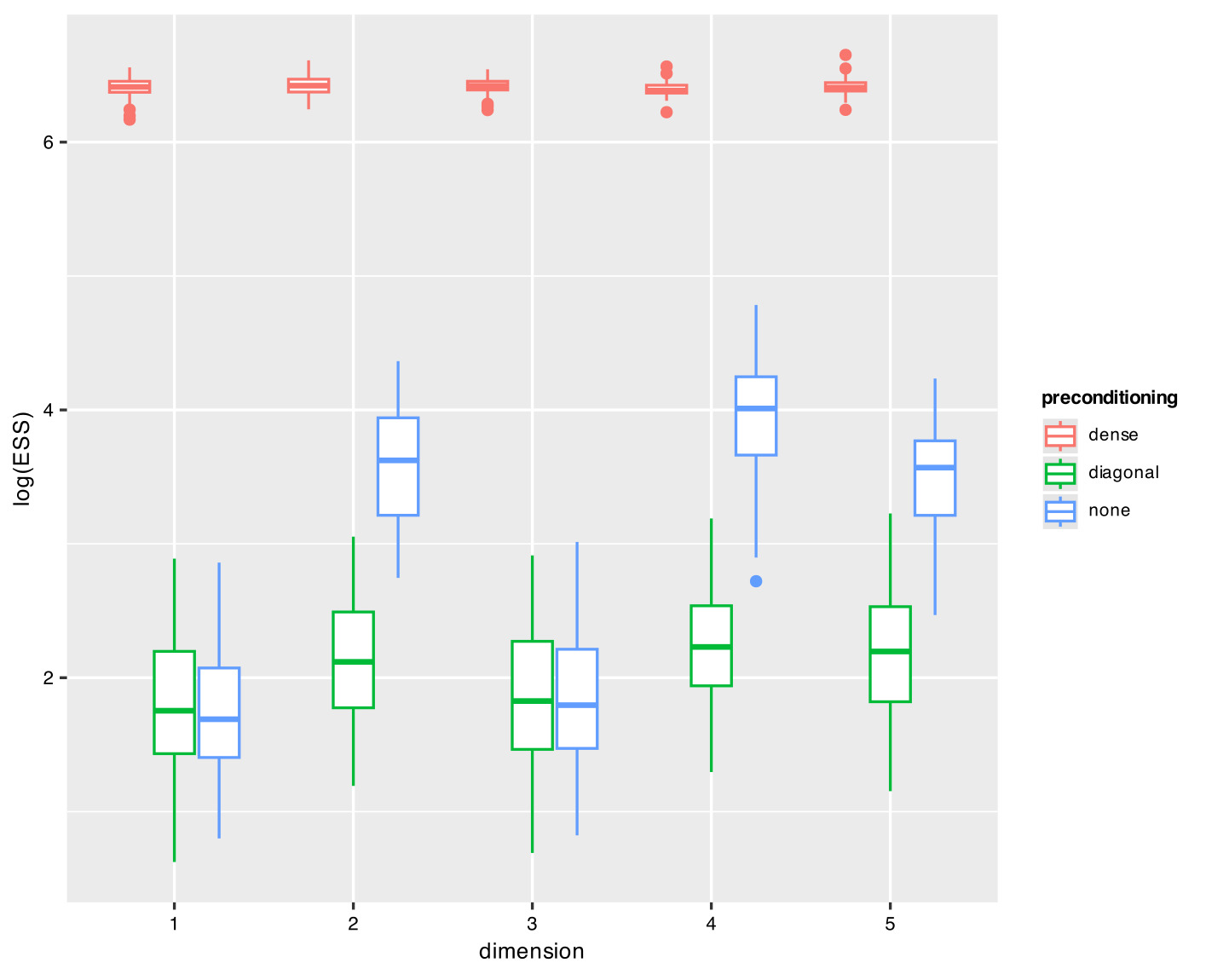}
    \caption{$\log\text{ESS}$ of 100 runs at 10,000 iterations per run of RWM with dense, diagonal, and zero preconditioning. Each algorithm is started in equilibrium and targets $\mathcal{N}(0, \Sigma_\pi)$ with $\Sigma_\pi$ as in (\ref{gaussian_covariance}).}
    \label{fig:counterproductive_diagonal}
\end{figure}

As can be seen in Figure \ref{fig:counterproductive_diagonal} the ESS's of the RWM chain with no preconditioning are clearly larger in dimensions 2, 4, and 5 from the diagonally preconditioned chain. This is despite the diagonal preconditioner being formed with perfect knowledge of the target covariance. As is expected the dense preconditioner performs the best since the target effectively becomes a standard Gaussian.

\subsection{Preconditioning the additive Hessian}\label{subsec:experiment_additive}

One probabilistic model with additive Hessian structure
is a Bayesian regression with hyperbolic prior. It is well
known that the Laplace prior $\beta\mapsto2^{-1}\lambda\exp(-\lambda\|\beta\|_{1})$
for $\beta\in\mathbb{R}^{d}$ and $\lambda>0$ imposes the same sparsity
in the maximum a posteriori estimates as would regularisation with
the LASSO \citep{tibshirani:96} since it concentrates sharply
around $\beta=0$. More generally, priors with exponential tails can be motivated by results concerning the contraction of regression parameter posteriors around their true values. As demonstrated in the discussion following Theorem 8 in \cite{castillo:15}, heavy-tailed priors such as the Laplace distribution achieve good rates of contraction. Simply using a Laplace prior would violate the $M$-smoothness assumption due to the behaviour at $\beta=0$. The hyperbolic
prior, however, is a smooth distribution with exponential tails, and so we will use this as a prior for the regression parameters $\beta \in \mathbb{R}^d$. We assume that $Y=X\beta+\epsilon$
where $\epsilon\sim\mathcal{N}(0,\sigma^{2}\mathbf{I}_{n})$ for $\sigma^{2}>0$
known and $n>d$. We also assume that the columns of $X\in\mathbb{R}^{n\times d}$
are standardised to have variance 1. Because of this it is reasonable to use the same scale
in each dimension of the prior. The resulting posterior has a potential
of the form
\begin{equation}\label{eq:regression_hyperbolic_Hessian}
        U(\beta) =\frac{1}{2\sigma^{2}}\|Y-X\beta\|^{2}+\lambda\sum_{i=1}^{d}\sqrt{1+\beta_{i}^{2}},
\end{equation}
which implies
\begin{equation} \label{eq:grad_hyperbolic_hessian}
\nabla^{2}U(\beta) =\frac{1}{\sigma^{2}}X^{T}X+\lambda D(\beta),
\end{equation}
where $D(\beta)=\text{diag}\{(1+\beta_{i}^{2})^{-3/2}:i\in[d]\}$.
The Hessian is therefore additive with $A=\sigma^{-2}X^{T}X$ and $B(\beta)=\lambda D(\beta)$.

Equations \eqref{eq:regression_hyperbolic_Hessian}-\eqref{eq:grad_hyperbolic_hessian} show that the posterior is a well-conditioned distribution, meaning that $U$ satisfies Assumption \ref{ass:m sc and M smoothness}
with $m=\sigma^{-2}\sigma_{d}(X^{T}X)$ and $M=\sigma^{-2}\|X^{T}X\|+\lambda$.
Preconditioning with $L=(\sigma^{-2}X^{T}X)^{1/2}$ gives
$\tilde{\kappa}=1+\sigma_{d}(X^{T}X)^{-1}\lambda\sigma^{2}\leq\kappa$.
In this case the distance between $LL^{T}$ and the Hessian can be
bounded using $\|\nabla^{2}U(\beta)-LL^{T}\|\leq\lambda$, so we can also apply the results of Section \ref{subsubsec:target_covariance} to justify the use of the target covariance by setting $\epsilon=\lambda$ in Corollary \ref{cor:localise_covariance_additive}.
 These results imply that when $\lambda$ is small then preconditioning with either $L=(\sigma^{-2}X^{T}X)^{1/2}$
or $L=\Sigma_{\pi}^{-1/2}$ should
improve the efficiency of the sampler, so long as the distance between
the mean and the mode is not too large.

In the following experiment we run MALA chains on target distributions with $L=(\sigma^{-2}X^{T}X)^{1/2}$,
$L=\Sigma_{\pi}^{-1/2}$, and $L=\mathbf{I}_{d}$. We set
$d\in\{2,5,10,20,100\}$ and $n=\{1,5,20\}\times d$ for each value of $d$.
At each combination of $n$ and $d$ we run 15 chains for each preconditioner.
Each chain is composed by initialising at $\beta=(X^{T}X)^{-1}X^{T}Y$
and taking $10^{4}$ samples to equilibrate. We initialise the step size at $d^{-1/6}$ and adapt it
along the course of the chain seeking an optimal acceptance rate
of $0.574$ according to the results of \citet{roberts:01}. We then continue the
chain with preconditioning and a fixed step size of $d ^ {-1/6}$ for a further $10^{4}$ samples, over which
we measure the ESS of each dimension. To construct $L=\Sigma_{\pi}^{-1/2}$
we simply use the empirical covariance of the first $10^{4}$ samples.

For the model parameters we set $\lambda=\sqrt{n}/d$ using the lower
bound of \citet{castillo:15}. Every element of $X$ is an independent
standard normal random variable, and $Y$ is generated by sampling
$\beta_{0}$ from the prior and setting $Y=X\beta_{0}+\epsilon$ with
$\epsilon\sim\mathcal{N}(0,\mathbf{I}_{d})$, meaning $\sigma=1$.

The boxplots in Figure \ref{fig:covariance_preconditioning_additive_Hessian} show the median ESSs for each run. The figure demonstrates that in the $n/d \in \{5,20\}$ cases preconditioning with $L = \Sigma_\pi^{-1/2}$ is just as good as preconditioning with $L = A^{1/2}$ where $A=\sigma^{-1}X^TX$. For $n/d=1$ the results are mixed: for instance in the $(d, n) = (100, 100)$ configuration the first $10^4$ iterations of the MALA chain mixed poorly, offering a poor estimate of $\Sigma_\pi$. The performance suffered heavily if no preconditioner was applied.

\begin{figure}
    \centering
\includegraphics[scale = 0.27]{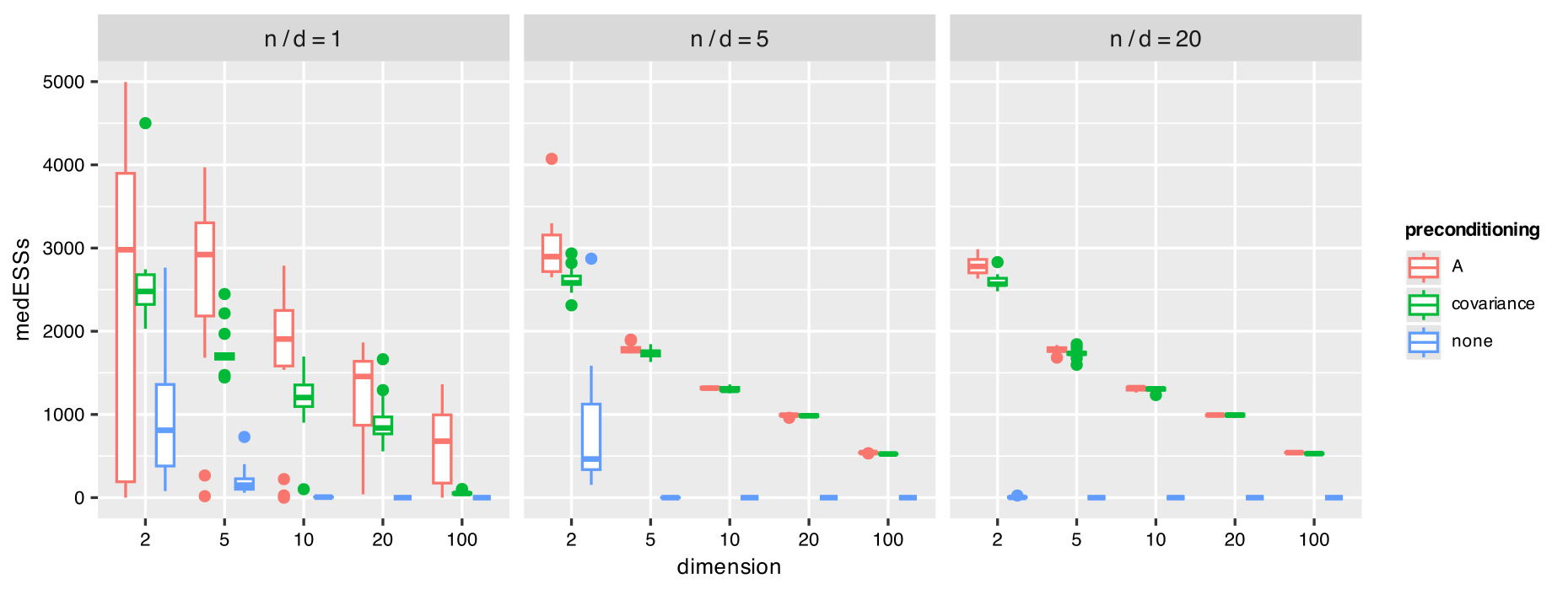}
    \caption{Boxplots of the medians of the ESSs across configurations of $(n, d)$ with different preconditioners on the Bayesian linear regression with a Hyperbolic prior. The leftmost boxplot in each grouping corresponds to preconditioning with $L = \sigma (X^TX)^{1/2}$ (`A' in the legend), the middle boxplot has $L = \Sigma_\pi^{-1/2}$ (`covariance' in the legend), the rightmost has $L=\mathbf{I}_d$ (`none' in the legend).}
    \label{fig:covariance_preconditioning_additive_Hessian}
\end{figure}

\subsection{Preconditioning the multiplicative Hessian}\label{subsec:experiment_multiplicative}

To study preconditioning under the multiplicative Hessian structure we consider
a Bayesian binomial regression with a generalised $g$-prior \citep[Section~2.1]{bové:10} \citep[Section~2.2]{held:17}.
The generalised $g$-prior is an extension of the classical
$g$-prior to generalised linear models that have dispersion parameters of the form
$\phi_{i}:=\phi w_{i}^{-1}$ for $i\in[n]$ and known weights
$w_{i}\in\mathbb{R}^{+}$. It is motivated by constructing an `imaginary
sample' of responses $y_{0}=h(0)\mathds{1}_{n}\in\mathbb{R}^{n}$
from a generalised linear model with inverse link function $h(.)$ and design matrix $X\in\mathbb{R}^{n\times d}$.
Assigning the parameter vector $\beta\in\mathbb{R}^{d}$ a flat prior,
it is observed that as $n\to\infty$ the posterior distribution of
$\beta$ in this construction tends to $\mathcal{N}_{d}(0,g\phi c(X^{T}WX)^{-1})$
where $W=\text{diag}\{w_{i}:i\in[n]\}$, $g$ and $\phi$ are hyperparameters,
and $c$ is a model-specific constant. See \citet[Section~2.1]{bové:10} for a more thorough
exposition.

We follow the
advice given by \citet[section~2.1]{bové:10} and \citet[section~2.2]{held:17} by setting $w_{i}=m_{i}$ for all $i\in[n]$. Using a logistic link gives
a posterior with potential
\begin{equation}\label{eqn:binomial_potential}
    U(\beta)=\sum_{i=1}^{n}\left(w_{i}\left((1-Y_{i})X_{i}^{T}\beta+\log(1+\exp(-X_{i}^{T}\beta))\right)\right)+(g\phi c)^{-1}\beta^{T}X^{T}WX\beta
\end{equation}
with Hessian $\nabla^{2}U(\beta)=X^{T}\Lambda(\beta)X$, where
\[
\Lambda(\beta):=W\text{diag}\{\exp(X_{i}^{T}\beta)(1+\exp(X_{i}^{T}\beta))^{-2}+(g\phi c)^{-1}:i\in[n]\}
\]
The potential $U$ therefore satisfies Assumption \ref{ass:m sc and M smoothness} with $M=(0.25+(g\phi c)^{-1})w_{\text{max}}\|X^{T}X\|$
and $m=(g\phi c)^{-1}w_{\text{min}}\sigma_{d}(X^{T}X)$, where $w_{\text{max}}:=\max_{i}w_{i}$
and $w_{\text{min}}:=\min_{i}w_{i}$. We
choose $g$ and $\phi$ such that $(g\phi c)^{-1}=\lambda n^{-1}$,
where $\lambda=0.01$.

We examine the effectiveness of preconditioning
with $L\in\{\Sigma_{\pi}^{-1/2},\mathcal{I}^{1/2},\nabla^{2}U(\beta^{*})^{1/2},\mathbf{I}_{d},(n^{-1}X^{T}X)^{1/2}\}$
where $\Sigma_{\pi}$ is the covariance of the posterior, $\mathcal{I}$
is the `Fisher matrix' of \citet{titsias:23}, $\beta^{*}$
is the mode, and $L=(n^{-1}X^{T}X)^{1/2}$ is the preconditioner used in \citet[Section~6.2]{dalalyan:17}. When $L\in\{\mathbf{I}_{d},(n^{-1}X^{T}X)^{1/2},\nabla^{2}U(\beta^{*})^{1/2}\}$
the condition numbers are given by
\begin{align*}
L=\mathbf{I}_{d} & \Rightarrow\tilde{\kappa}=\kappa=\frac{\frac{n}{4}+\lambda}{\lambda}\frac{w_{\text{max}}}{w_{\text{min}}}\kappa(X^{T}X)\\
L=(n^{-1}X^{T}X)^{\frac{1}{2}} & \Rightarrow\tilde{\kappa}=\frac{\frac{n}{4}+\lambda}{\lambda}\frac{w_{\text{max}}}{w_{\text{min}}}\\
L=\nabla^{2}U(\beta^{*})^{\frac{1}{2}} & \Rightarrow\tilde{\kappa}=\frac{\frac{n}{4}+\lambda}{\lambda}\frac{n\max_{i}p_{i}^{*}(1-p_{i}^{*})+\lambda}{n\min_{i}p_{i}^{*}(1-p_{i}^{*})+\lambda}
\end{align*}
where $p_{i}^{*}:=(1+\exp(-X_{i}^{T}\beta^{*}))^{-1}$. This suggests
that $L=\nabla^{2}U(\beta^{*})^{1/2}$ offers an increase
in efficiency over $L=(n^{-1}X^{T}X)^{1/2}$ for $w_{\text{max}}/w_{\text{min}}$
large.

\subsubsection{Experimental setup and results}
We run RWM chains with the preconditioners described above for $d\in\{2,5,10,20\}$
and $n=5d$. We generate the design matrix $X\in\mathbb{R}^{n\times d}$
with $X=G+M$ where $G_{ij}\sim\mathcal{N}(0,1)$ independently and
$M_{ij}=\mu$ for all $i\in[n],j\in[d]$. We set $\mu \in \{0,5,50,200\}$
to arbitrarily worsen the conditioning of the model, as it can be shown
that 
\[
\kappa(X^{T}X)\geq\frac{\sum_{k=1}^{n}(G_{k1}+\mu)^{2}}{\frac{1}{2}\sum_{k=1}^{n}(G_{k1}-G_{k2})^{2}}.
\]
We set $w_{i}=i^{2}$ for $i\in[n]$ and generate the responses
using $Y_{i}=S_{i}/w_{i}$ with $S_{i}\sim\text{Bin}(w_{i},(1+\exp(-X_{i}^{T}\beta_{0}))^{-1})$
for $\beta_{0}\sim\mathcal{N}(0,\mathbf{I}_{d})$. We use gradient descent on $U$ which we precondition with $L=(n^{-1}X^{T}X)^{1/2}$ to find the
mode $\beta^{*}$.

We approximate $\Sigma_{\pi}$ and $\mathcal{I}$ in two different ways.
We either construct them using ergodic averages generated by unpreconditioned
RWM for $10^{4}$ iterations, or we run an $L=\nabla^{2}U(\beta^{*})^{1/2}$
preconditioned RWM for $10^{5}$ iterations, from which we calculate
the same ergodic averages. At each combination of $d$ and $\mu$
we run 15 chains for each preconditioner. Each chain is composed by
initialising at $\beta\sim\mathcal{N}(0,(n^{-1}X^{T}X)^{-1})$ and
taking $10^{4}$ samples to equilibrate. In each of these initial chains we initialise the step
size at $2.38/d^{1/2}$ and adapt it along the course of the
trajectory seeking an optimal acceptance rate of $0.234$ according
to the results of \citet{roberts:01}. We then continue the chain
with preconditioning and a fixed step size of $2.38/d^{1/2}$ for a further $10^{4}$ samples, over which we
measure the ESS of each dimension. The median
ESSs  in the $\mu \in \{0,200\}$ cases are plotted in Figure \ref{fig:binomial_regression}.

\begin{figure}
\centering
\includegraphics[scale = 0.3]{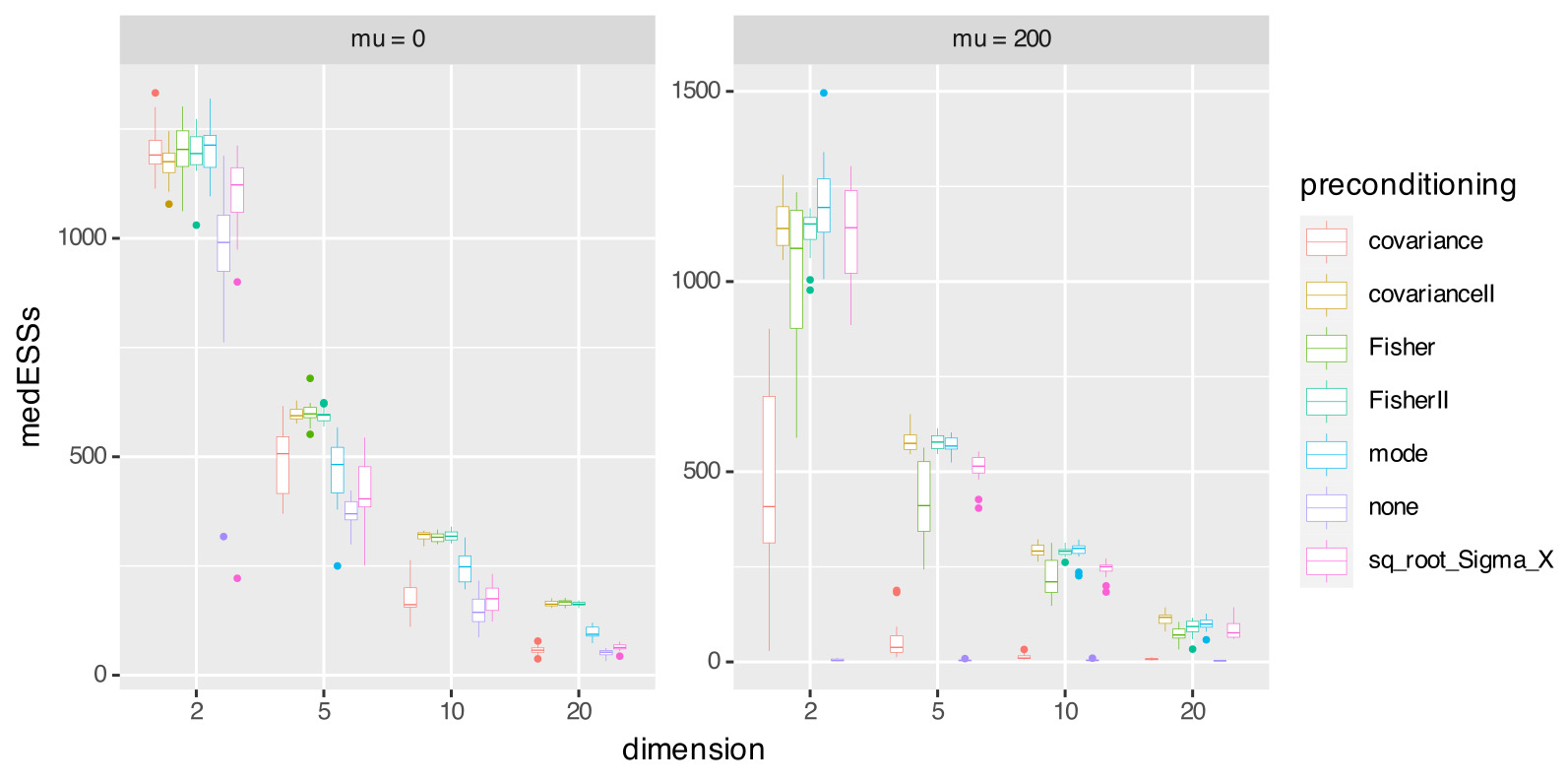}
    \caption{Boxplots of the logarithms of the medians of the ESSs across combinations of $(d, \mu)$. The ESSs are taken from RWM runs on a binomial regression target with the generalised $g$-prior. `covariance' and `covarianceII' correspond to runs preconditioned with $L = \Sigma^{-1/2}$ where $\Sigma_\pi$ is estimated over ${10}^4$ and ${10}^5$ runs respectively. `Fisher' and `FisherII' correspond to runs preconditioned with $L = \mathbb{E}_\pi[\nabla^2U(\beta)]^{1/2}$ where $\mathbb{E}_\pi[\nabla^2U(\beta)]$ is estimated over ${10}^4$ and ${10}^5$ runs respectively. `mode' refers to runs preconditioned with $L = \nabla^2U(\beta^*)^{1/2}$ where $\beta^*$ is an estimate of the mode found using preconditioned gradient descent. `sq\textunderscore root\textunderscore Sigma\textunderscore X' corresponds to runs preconditioned with $L = (n^{-1}X^TX)^{1/2}$.}
    \label{fig:binomial_regression}
\end{figure}

The preconditioning strategies are detailed in the figure caption. `covariance' and `covarianceII' refer to the runs preconditioned with $L = \Sigma_\pi^{-1/2}$ where the covariance is estimated over $10^4$ and $10^5$ samples respectively. The same is the case for `Fisher' and `FisherII'. `sq\textunderscore root\textunderscore Sigma\textunderscore X' refers to the runs made with $L=(n^{-1}X^TX)^{1/2}$. `mode' refers to preconditioning with $L=\nabla^{2}U(\beta^{*})^{1/2}$.

As predicted, preconditioning with the Hessian at the mode does offer a benefit over preconditioning with $L=(n^{-1}X^TX)^{1/2}$. Preconditioning with the covariance when it is estimated over a larger, better quality sample (`covarianceII') is one of the best performing strategies, whereas preconditioning with the covariance estimated over the smaller sample (`covariance') suffers with dimension and ill-conditioning of the model. This is clearly due to the reduction in quality of the covariance estimate. This disparity in performance is contrasted with the difference between the `Fisher' and `FisherII' cases, which is very slight. 

\subsection{Reducing computational cost in Hamiltonian Monte Carlo}\label{subsec:experiment_hmc}

In the specific context of Hamiltonian Monte Carlo the cost of a large condition number can be somewhat compensated for by choosing the trajectory length of the Hamiltonian flow at each iteration appropriately.  In the case of a Gaussian target with diagonal covariance \cite{bou2017randomized} show that if this trajectory length is randomly sampled as $T \sim \text{Exp}(\lambda^{-1})$ then $\lambda \propto \sqrt{1/m}$ achieves optimal $\lambda$-adjusted mean square displacement between successive states of the Markov chain.  Given that the numerical integrator step-size must be set $\propto \sqrt{1/M}$ for numerical stability then this implies that the number of leap-frog integration steps employed per iteration must be chosen as $L = T/\epsilon \propto \sqrt{\kappa}$.  An appropriately chosen mass matrix should therefore reduce the amount of computation needed per iteration, and therefore the overall cost of the sampler presuming that the per iteration quality of samples is comparable.

We illustrate this phenomenon empirically in two settings in Figure \ref{fig:HMC_comparisons}. The left hand plot shows the results of running HMC samplers on a 50 dimensional Gaussian $N(0, \Sigma_\pi)$ target where $\Sigma_\pi$ is a diagonal matrix. These HMC chains are started in equilibrium and run for ${10}^4$ iterations. In the unpreconditioned case, we set $\lambda$ according to the guidance of \citet[equation (32)]{bou2017randomized}. The step size is adapted to achieve an average acceptance rate of 0.65 during the first half of each simulation, and effective sample sizes are then measured over the second half. The preconditioner used is $L = \Sigma_\pi^{-1/2}$. The right hand plot shows the results of running HMC samplers on a 50 dimensional Bayesian logistic posterior with a $g$-prior. The potential of this posterior is equivalent to the binomial regression potential in (\ref{eqn:binomial_potential}) with $w_i = 1$ for all $i \in [n]$. For each target we randomly sample $X \in\mathbb{R}^{n \times d}$ in order to control the condition number of the resulting posterior. In each case we set $n = d$. Each chain is initialised at the mode and run for ${10}^4$ iterations. In the unpreconditioned case we set $\lambda = \sqrt{1 / m^*}$ where $m^*$ is the least eigenvalue of $\nabla^2U(x^*)$. The step size is again adapted to achieve an average acceptance rate of 0.65 over the first half of each simulation, and effective sample sizes are then measured over the second half. The preconditioner used is $L = (n^{-1}X^TX)^{-1/2}$, as recommended by \cite{dalalyan:17}.

In each plot within Figure \ref{fig:HMC_comparisons} the vertical axis shows the median number of leapfrog steps (across the dimensions of the Markov chain) needed to achieve an independent sample as a multiple of the preconditioned performance, hence the number of leapfrog steps in the preconditioned cases is always 1. The horizontal axis shows the condition number of the target. An approximately square root relation between these two quantities can be seen in both plots. It is clear that in both cases the preconditioner vastly reduces the amount of computation needed to achieve an independent sample.  Additional plots in Figure \ref{fig:HMC_comparisons} below replacing number of leapfrog steps with wall clock time on the vertical axes show analogous behaviour.

\begin{figure}
    \centering
    \includegraphics[scale = 0.25]{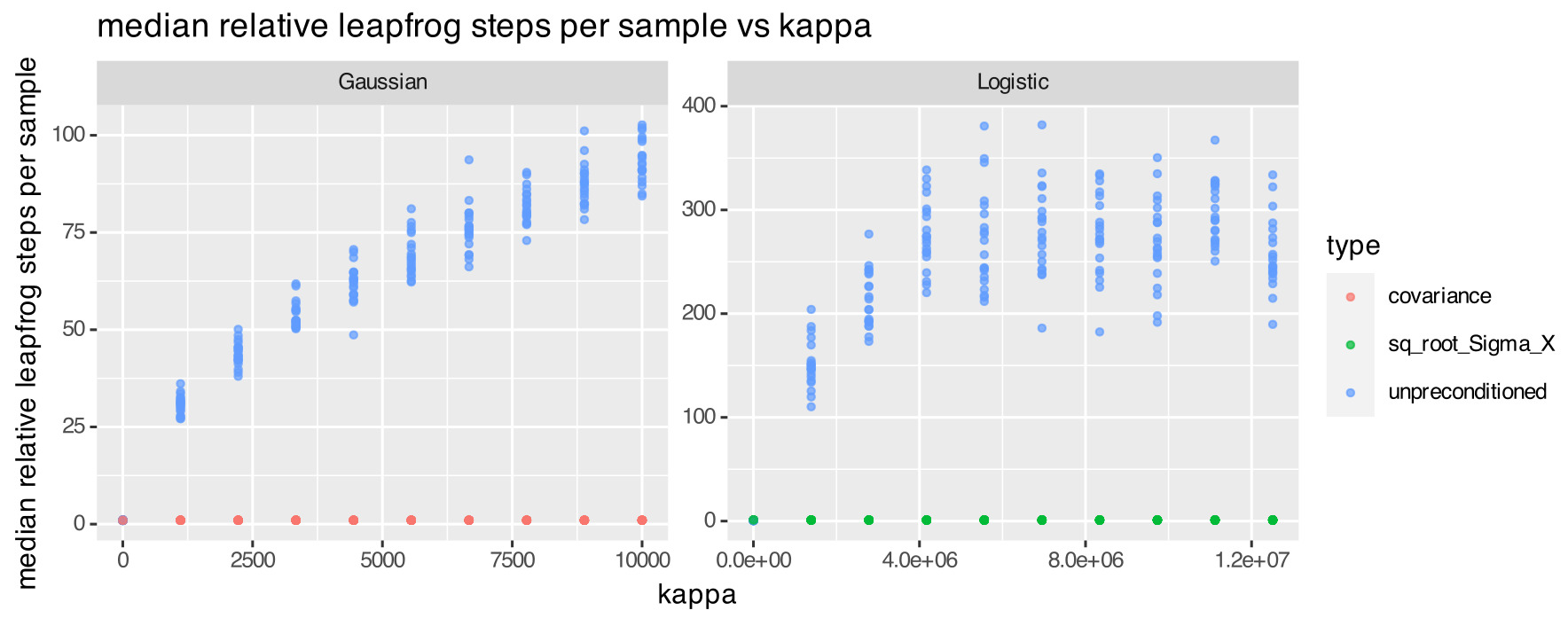}
    \includegraphics[scale = 0.25]{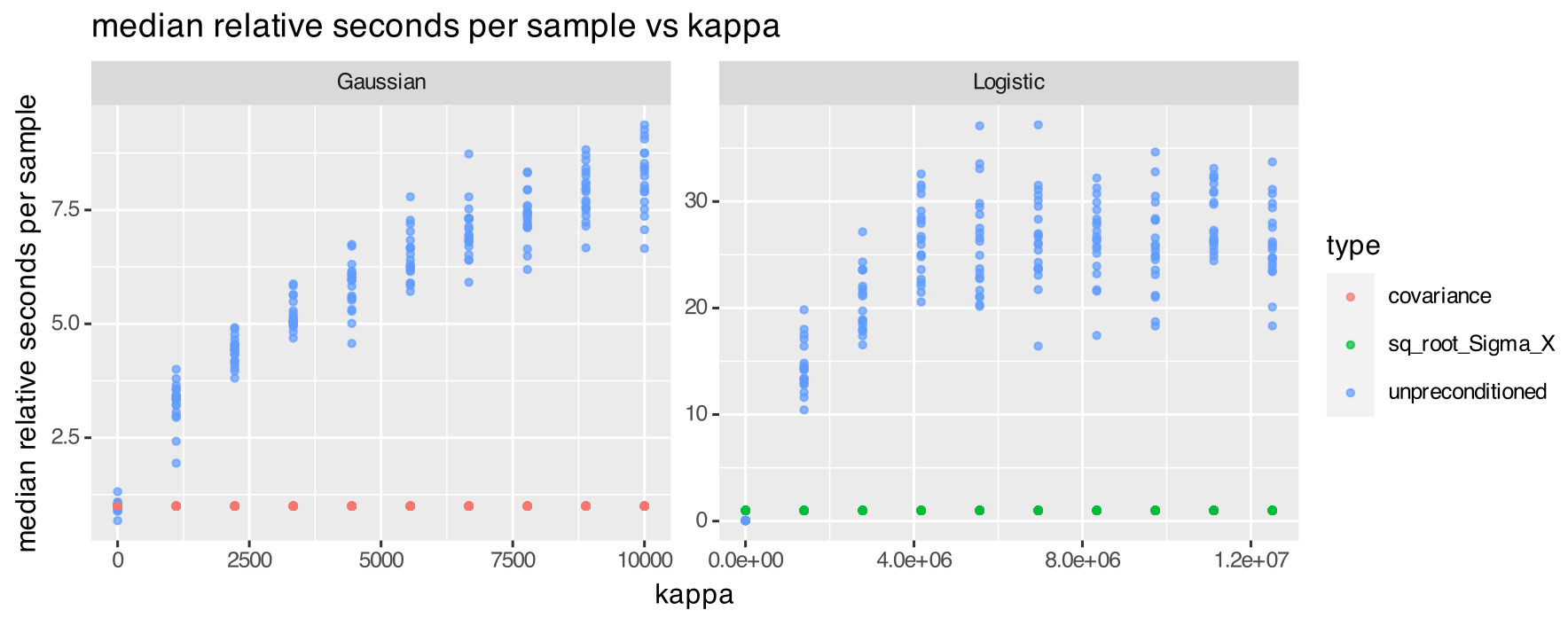}
    \caption{Left plots: preconditioned and non-preconditioned HMC chains for a Gaussian target with diagonal covariance. Right plots: Bayesian logistic regression posterior with a $g$-prior. The top vertical axes denote the median number of leapfrog steps (across the dimensions of the Markov chain) needed to achieve an independent sample, the bottom vertical axes denote the median time (across dimensions) needed to achieve an independent sample. The horizontal axes denote the condition number of the target in all plots. `covariance' refers to runs preconditioned with $L = \Sigma_\pi^{-1/2}$ and `sq\_root\_Sigma\_X' refers to runs preconditioned with $L = (n^{-1}X^TX)^{1/2}$.}
    \label{fig:HMC_comparisons}
\end{figure}

\section{Discussion}




\subsection{Advice to practitioners}


Our first cautionary tale for practitioners is that linear preconditioning will not always be beneficial, there are some target distributions for which it will not improve the quality of MCMC sampling, and can even reduce it.  We have shown, however, that under conditions that intuitively insist that the contours of $\pi(x)$ look `not too far from elliptical', as characterised by the additive and multiplicative Hessian structures, then appropriate linear preconditioning should improve a sampling algorithm, and can do so considerably if the target distribution is far from isotropic.

The most generically applicable preconditioner is certainly the inverse square root of the empirical covariance. This can always be computed using either pilot runs or online within an adaptive MCMC algorithm, even when the Hessian of $U(x)$ is not available, provided that the target has finite variance. We have shown that if enough samples are used in its estimation and the target satisfies the conditions we impose then this strategy works well both in theory and in practice. Alternative choices can also improve sampling, but we did not observe any cases in which an alternative strategy is clearly better than the target covariance, in fact most preconditioners offered very similar levels of improvement in our empirical study.  One exception to this is in the case of binomial regression, where the Hessian at the mode strategy slightly outperformed the choice $(Z^TZ)^{1/2}$, something which is supported by our theory in Section \ref{subsec:multiplicative_hessian}.

We have shown that, contrary to the received wisdom, diagonal preconditioning can in fact be damaging to the quality of MCMC samples, even in cases when a dense preconditioner can significantly improve the algorithm.  This seems to occur when the correlations present in the target distribution are particularly strong. If it is suspected that the target distribution is of this form it is important to compare diagonal preconditioning to none at all, as well as considering if the computational budget is available to perform dense preconditioning, which might be much more beneficial.

Finally we have highlighted that appropriate preconditioning can significantly reduce computational cost in Hamiltonian Monte Carlo, by reducing the number of numerical integrator steps needed at each iteration to solve Hamilton's equations. It is usually considered that preconditioning is a very expensive operation, but in fact it is the trade off between the likely $O(d^2)$ cost of full linear preconditioning and the typically at least $O(\sqrt{\kappa})$ cost of not doing so which must be evaluated for a given problem. We have given two clear examples in Section \ref{subsec:experiment_hmc} in which preconditioning can improve sample quality despite the additional matrix algebra that might be needed for its implementation.

\subsection{Extensions}

Our work offers many interesting extensions, which we now discuss in turn.

\subsubsection{Alternative condition numbers and refinements}

Alternative, problem specific condition numbers have been defined by various parties. For instance where $\Sigma_\pi$ is the covariance of $\Pi$ with spectrum $\sigma^2_1 \geq \sigma^2_2 \geq ... \geq \sigma^2_d$ \citet{langmore:20} suggest using
\[
\left(\sum_{i=1}^d \left(\frac{\sigma_1}{\sigma_i}\right)^4\right)^\frac{1}{4}
\]
as a condition number. Under specifications on the step size of the algorithm, such a quantity is shown to be proportional to the number of leapfrog steps needed to achieve a stable acceptance rate in HMC.

The condition number as defined in (\ref{condition_number}) encodes the difficulty of sampling from $\Pi$, but it does not capture additional information we might have about $\Pi$ which might ameliorate the sampling efficiency. For instance, if we knew that there existed positive definite $A_-,A_+\in\mathbb{R}^{d \times d}$ such that 
$$
A_- \preceq \nabla^2U(x)\preceq A_+,
$$
then we could precondition with $L = A_-^{1/2}$ achieving $\tilde{\kappa} = \lambda_1(A_-^{-1}A_+)$ over $\kappa = \lambda_1(A_+) / \lambda_d(A_-)$. Therefore defining the condition number as $\lambda_1(A_-^{-1}A_+)$ encodes the difficulty of sampling \emph{given all the information at hand}. See, for instance, \citet[Section~2.3]{safaryan:21}, \citet[Definition~2.9]{saumard:14} or \citet[Definition~1]{hillion:19} for similarly motivated definitions.

\subsubsection{Nonlinear preconditioning} \label{subsec:nonlinear}

A natural extension is to broaden the class of preconditioners to include nonlinear transformations. At present nonlinear preconditioning can be seen in the form of normalizing flows \citep{gabrie:22, hoffman:19} and measure transport \citep{parno:18}. Less computationally intensive transformations are considered by \citet{johnson:12} and \citet{yang:22} in order to sample from heavy-tailed distributions.

We note that to identify a transformation $g:\mathbb{R}^d\to\mathbb{R}^d$ such that the pushforward of $\Pi$ under $g$ has condition number 1 is to solve the equation
\[
U(x)+\log|\det J(g(x))|=\frac{1}{2}\|g(x)\|^{2}
\]
where $J(g(x))$ is the Jacobian of $g$ at $x\in\mathbb{R}^d$. This is an instance of the \emph{Monge-Ampère} equation, which is well studied in optimal transport \citep[remark~2.25]{peyré:20}. Solvers of the Monge-Ampère exist in the literature, see \citet{benamou:14, benamou:16b}. Contextualising the existing analysis of the Monge-Ampère and its solvers within MCMC is a potentially fruitful line of inquiry.

There exist classes of algorithms that are equivalent to transforming existing sampling algorithms under nonlinear transformations. These include the \emph{Riemannian manifold} algorithms of \citet{girolami:11} (see also \citet{patterson:13,lan:15,livingstone:21}) and the algorithms derived from \emph{mirror descent} \citep{nemirovski:83} such as those seen in \citet{hsieh:18, zhang:20wasserstein, chewi:20}. That the algorithms derived from mirror descent are equivalent to a nonlinearly preconditioned sampling scheme is evident in their construction. For the Riemannian manifold samplers, one can show, for instance, that the Langevin diffusion
\[
dY_{t}=\frac{1}{2}\nabla\log\tilde{\pi}(Y_{t})dt+dB_{t}
\]
under diffeomorphism $f(Y)=X$ transforms into the following SDE
\begin{equation}\label{manifold langevin}
    \begin{split}
        dX_{t} & =\frac{1}{2}G(X_{t})^{-1}\nabla\log\pi(X_t)dt+\Gamma(X_{t})dt+G(X_{t})^{-\frac{1}{2}}dB_{t}\\
    \Gamma_{i}(X_{t}) & =\frac{1}{2}\sum_{j=1}^{d}\frac{\partial}{\partial x_{j}}\left(G(X_{t})_{ij}^{-1}\right)
    \end{split}
\end{equation}
with $G(x)^{-1}=J(g(x))^{-1}J(g(x))^{-T}$ where $g$ is the inverse of $f$ and $\pi(x)=\tilde{\pi}(y)|\det J(f(y))^{-1}|$, see \citet{zhang:23transport} for a formal statement and proof. \citet{xifara:14,livingstone:14} show that the SDE in \ref{manifold langevin} is the Langevin diffusion on the Riemannian manifold with metric $G(x)\in\mathbb{R}^{d \times d}$, and therefore the same diffusion underlying the Riemannian manifold MALA algorithm of \citet{girolami:11} is equivalent to an instance of nonlinear preconditioning. One can make a similar equivalence in the case of Riemannian manifold HMC, whereby we make a nonlinear transformation to the momentum variable used in (\ref{eq:hamiltonian_dynamics}), see \citet{hoffman:19} for an explanation.

These equivalences provide motivation for further study. For instance if one can identify a $g$ such that the metric $J(g(x))^{-1}J(g(x))^{-T}$ matches that used by \citet{girolami:11} one can bypass the computationally costly operations inherent in the Riemannian manifold methods. One can also evaluate the benefits of using Riemannian schemes with arbitrary metrics by evaluating the change in the condition number under transformations which achieve those metrics.

\subsubsection{Beyond well-conditioned distributions}

The condition number as defined in (\ref{condition_number}) is restrictive in the class of models it applies to, namely distributions satisfying Assumption \ref{ass:m sc and M smoothness}. Where $\Pi$ satisfies $M$-smoothness and a \emph{Poincaré inequality}: for all $f\in L^1(\Pi)$
\[
\text{Var}_\pi(f)\leq C_\mathsf{PI}\mathbb{E}_\pi[\|\nabla f\|^2]
\]
with constant $C_\mathsf{PI}\geq0$ \citet[footnote, page~3]{zhang:23improved} define it as $\kappa:=C_{\mathsf{PI}}M$. They are motivated by its presence in the mixing time bounds they derive for the unadjusted Langevin sampler. An application of the Brascamp-Lieb inequality shows that $C_\mathsf{PI}=m^{-1}$ in the case that $\Pi$ also has an $m$-strongly convex potential. \citet{chen:23} also derive mixing time bounds under a more general constraint than $m$-strong convexity. One could alternatively use the quantities involved in their constraints and therefore the mixing time bounds to redefine the condition number.

\citet{altmeyer:22} constructs a \emph{surrogate posterior} whose potential satisfies Assumption \ref{ass:m sc and M smoothness} and coincides with the potential of the target posterior on a region in which the target concentrates. Under assumptions, they provide polynomial time mixing bounds for unadjusted Langevin Monte Carlo using the fact that the chain will stay in the aforementioned region for exponentially long with high probability. The ability to identify such behaviour allows one to quantify the conditioning of a posterior whose potential violates Assumption \ref{ass:m sc and M smoothness}.


\subsubsection*{Acknowledgements}

MH is funded by an EPSRC DTP. SL acknowledges support from EPSRC grant EP/V055380/1. The authors thank Dootika Vats, Sam Power, Giacomo Zanella and Max Goplerud for useful discussions.


\newpage

\appendix
\section{}
\label{app:theorem}



\subsection{Proof of Proposition \ref{prop:hard_target}}\label{proof:hard_target}

The Hessian of the model with potential
\[
U(x,y)=\frac{m-M}{2}\left(\cos x+\cos y\right)+\frac{M+m}{2}\left(\frac{x^{2}}{2}+\frac{y^{2}}{2}\right)
\]
is in the form $\nabla^{2}U(x,y)=\text{diag}\{f(x),f(y)\}$
where $f(x):=(1/2)(M - m)\cos x+(1/2)(M + m)\in[m,M]$. As detailed
in Proposition \ref{prop:symmetry_invariance}, the condition number is ignorant
as to whether the preconditioner $L$ is symmetric or not, so we
assume it is. Therefore we can perform an eigendecomposition $L=QDQ^{T}$
where $D=\text{diag}\{\lambda_{1},\lambda_{2}\}$ is the matrix of
eigenvalues (not necessarily ordered) and, since we are in two dimensions,
$Q$ can be represented as the two dimensional Givens matrix 
\[
Q=\begin{pmatrix}\cos\theta & -\sin\theta\\
\sin\theta & \cos\theta
\end{pmatrix}.
\]
The matrix enclosed by the first operator norm in (\ref{conditioned_number}) has trace and determinant
\begin{align*}
\text{Tr}(x,y) & :=\text{Tr}(L^{-T}\nabla^{2}U(x,y)L^{-1})=c^{2}(\lambda_{1}^{-2}f(x)+\lambda_{2}^{-2}f(y))+s^{2}(\lambda_{2}^{-2}f(x)+\lambda_{1}^{-2}f(y))\\
\text{Det}(x,y) & :=\text{Det}(L^{-T}\nabla^{2}U(x,y)L^{-1})=\lambda_{1}^{-2}\lambda_{2}^{-2}f(x)f(y)
\end{align*}
where we have abbreviated $c:=\cos\theta$, $s:=\sin\theta$ for notational
simplicity. The matrix enclosed by the second operator norm in (\ref{conditioned_number}) has trace and determinant
\begin{align*}
\text{Tr}^{*}(x^{*},y^{*}) & :=\text{Tr}(L\nabla^{2}U(x^{*},y^{*})^{-1}L^{T})=c^{2}(\lambda_{1}^{2}f(x^{*})^{-1}+\lambda_{2}^{2}f(y^{*})^{-1})+s^{2}(\lambda_{2}^{-2}f(x^{*})^{-1}+\lambda_{1}^{-2}f(x^{*})^{-1})\\
\text{Det}^{*}(x^{*},y^{*}) & :=\text{Det}(L\nabla^{2}U(x^{*},y^{*})^{-1}L^{T})=\lambda_{1}^{2}\lambda_{2}^{2}f(x^{*})^{-1}f(y^{*})^{-1}
\end{align*}
Using the fact that the operator norm of a positive definite matrix
is simply the largest eigenvalue, we are able to lower bound
\[
\tilde{\kappa}\geq\frac{1}{2}\left(\text{Tr}(x,y)+\sqrt{\text{Tr}(x,y)^{2}-4\text{Det}(x,y)}\right)\frac{1}{2}\left(\text{Tr}^{*}(x^{*},y^{*})+\sqrt{\text{Tr}^{*}(x^{*},y^{*})^{2}-4\text{Det}^{*}(x^{*},y^{*})}\right)
\]
Choosing $(x,y)$ such that $f(x)=f(y)=M$ and $(x^{*},y^{*})$ such
that $f(x^{*})=f(y^{*})=m$ we have
\begin{align*}
\tilde{\kappa} & \geq\frac{1}{2}\left((\lambda_{1}^{-2}+\lambda_{2}^{-2})M+\left|\lambda_{1}^{-2}-\lambda_{2}^{-2}\right|M\right)\frac{1}{2}\left((\lambda_{1}^{2}+\lambda_{2}^{2})m^{-1}+\left|\lambda_{1}^{2}-\lambda_{2}^{2}\right|m^{-1}\right)\\
 & =\max\{\lambda_{1}^{-2},\lambda_{2}^{-2}\}\max\{\lambda_{1}^{2},\lambda_{2}^{2}\}\frac{M}{m}\\
 & =\kappa(LL^{T})\kappa
\end{align*}
Therefore $\tilde{\kappa}>\kappa$ for non-orthogonal $L$.

\subsection{Proof of Proposition \ref{prop:symmetry_invariance}}\label{proof:symmetry_invariance}

For a given preconditioner $L\in GL_d(\mathbb{R})$ we define $\tilde{L} \in GL_d(\mathbb{R})$ to be the positive definite matrix that has eigenvalues equal to the singular
values of $L$ and eigenvectors equal to the right singular vectors of $L$. Recall the definition of the condition number after preconditioning
with $L$: $\tilde{\kappa}=\sup_{x}\|L^{-T}\nabla^{2}U(x)L^{-1}\|\sup_{x}\|L\nabla^{2}U(x)^{-1}L^{T}\|$.
We inspect the norm inside the first supremum after computing the
singular value decomposition of $L=U\Sigma V^{T}$ where $U$ and $V$ are orthogonal
and $\Sigma\in\mathbb{R}^{d\times d}$ is a diagonal matrix whose
diagonal elements are the singular values of $L$.
\begin{align*}
\|L^{-T}\nabla^{2}U(x)L^{-1}\| & =\|U\Sigma^{-1}V^{T}\nabla^{2}U(x)V^{T}\Sigma^{-1}U\|\\
 & =\|\Sigma^{-1}V^{T}\nabla^{2}U(x)V^{T}\Sigma^{-1}\|\\
 & =\|V\Sigma^{-1}V^{T}\nabla^{2}U(x)V^{T}\Sigma^{-1}V\|\\
 & =\|\tilde{L}^{-T}\nabla^{2}U(x)\tilde{L}^{-1}\|
\end{align*}
Now we inspect the norm inside the second supremum:
\begin{align*}
\|L\nabla^{2}U(x)^{-1}L^{T}\| & =\|U\Sigma V^{T}\nabla^{2}U(x)^{-1}V^{T}\Sigma U\|\\
 & =\|\Sigma V^{T}\nabla^{2}U(x)^{-1}V^{T}\Sigma\|\\
 & =\|V\Sigma V^{T}\nabla^{2}U(x)^{-1}V^{T}\Sigma V\|\\
 & =\|\tilde{L}\nabla^{2}U(x)^{-1}\tilde{L}^{T}\|
\end{align*}

\subsection{Implications from Assumption \ref{ass:explicit eigenvector}}\label{proof:v_control}

From the statement of Assumption \ref{ass:explicit eigenvector} it is immediate that $v_{i}(x)^Tv_{i}\geq1-\delta$.
Note that the assumption implies the following bound $\|v_{i}(x)-v_{i}\|\leq\sqrt{2}\left(1-\sqrt{1-\delta}\right)$.
For $i,j\in[d]$ such that $i\neq j$, the reverse triangle inequality
gives us that
\begin{align*}
\|v_{i}(x)-v_{j}\| & \geq\|v_{j}-v_{i}\|-\|v_{i}-v_{i}(x)\|\\
 & \geq\sqrt{2}-\sqrt{2}\left(1-\sqrt{1-\delta}\right)
\end{align*}
and so
\[
\sqrt{2(1-\langle v_{i}(x),v_{j}\rangle}\geq\sqrt{2}-\sqrt{2}\left(1-\sqrt{1-\delta}\right)
\]
hence $v_{i}(x)^Tv_{j}\leq\delta$ as required.

\subsection{Proof of Theorem \ref{thm:kappa_L_bound_1}}\label{proof:kappa_L_bound_1}

Perform the eigendecomposition $\nabla^{2}U(x)=O_{x}D_{x}O_{x}^{T}$
for $O_{x}\in O(d)$ with columns $v_{i}(x)$ and $D_{x}\in\mathbb{R}^{d\times d}$
diagonal with elements $\lambda_{i}(x)$. Perform the eigendecomposition $L = V\Sigma V^T$ where $V$ has columns $v_i\in\mathbb{R} ^ d$ for $i \in [d]$ and $\Sigma := \text{diag}\{\sigma_1,...,\sigma_d\}$. Defining $\mathcal{E}_{x}:=V^{T}O_{x}-\mathbf{I}_{d}$,
Assumption \ref{ass:explicit eigenvector} guarantees that the elements of $\mathcal{E}_{x}$ are at most $\delta$
in absolute value. Inspecting the first term in the definition of
$\tilde{\kappa}$, we have that
\begin{align*}
\|L^{-T}\nabla^{2}U(x)L^{-1}\| & =\|\Sigma^{-1}(\mathcal{E}_{x}+\mathbf{I}_{d})D_{x}(\mathcal{E}_{x}+\mathbf{I}_{d})^{T}\Sigma^{-1}\|\\
 & \leq\|\Sigma^{-1}\mathcal{E}_{x}D_{x}\mathcal{E}_{x}^{T}\Sigma^{-1}\|+2\|\Sigma^{-1}\mathcal{E}_{x}D_{x}\Sigma^{-1}\|+\|\Sigma^{-1}D_{x}\Sigma^{-1}\|\\
 & \leq\|\Sigma^{-1}\mathcal{E}_{x}D_{x}\mathcal{E}_{x}^{T}\Sigma^{-1}\|+2\|\Sigma^{-1}\mathcal{E}_{x}D_{x}\Sigma^{-1}\|+(1+\epsilon)
\end{align*}
where the second line is due to the triangle inequality of the matrix
2-norm, the last line due to Assumption \ref{ass:explicit eigenvalue}. Inspecting
the norm in the second term in the above:
\begin{align*}
\|\Sigma^{-1}\mathcal{E}_{x}D_{x}\Sigma^{-1}\|^{2} & =\sup_{\|v\|=1}\sum_{k=1}^{d}\left(\sum_{s=1}^{d}\frac{\lambda_{s}(x)}{\sigma_{s}\sigma_{k}}(\mathcal{E}_{x})_{ks}v_{s}\right)^{2}\\
 & \leq\delta^{2}\sup_{\|v\|=1}\sum_{k=1}^{d}\left(\sum_{s=1}^{d}\frac{\lambda_{s}(x)}{\sigma_{s}\sigma_{k}}v_{s}\right)^{2}\\
 & =\delta^{2}\sum_{k=1}^{d}\frac{1}{\sigma_{k}^{2}}\sup_{\|v\|=1}\left(\sum_{s=1}^{d}\frac{\lambda_{s}(x)}{\sigma_{s}}v_{s}\right)^{2}\\
 & \leq\delta^{2}(1+\epsilon)^{2}\sum_{k=1}^{d}\frac{1}{\sigma_{k}^{2}}\sup_{\|v\|=1}\left(\sum_{s=1}^{d}\sigma_{s}v_{s}\right)^{2}\\
 & =\delta^{2}(1+\epsilon)^{2}\sum_{k=1}^{d}\frac{1}{\sigma_{k}^{2}}\sum_{s=1}^{d}\sigma_{s}^{2}
\end{align*}
where the second line comes from Assumption \ref{ass:explicit eigenvector} and the fourth line comes from Assumption \ref{ass:explicit eigenvalue}. Looking
at the first term now:
\begin{align*}
\|\Sigma^{-1}\mathcal{E}_{x}D_{x}\mathcal{E}_{x}^{T}\Sigma^{-1}\| & =\|\Sigma^{-1}\mathcal{E}_{x}D_{x}^{\frac{1}{2}}\|^{2}\\
 & =\sup_{\|v\|=1}\sum_{k=1}^{d}\left(\sum_{s=1}^{d}\frac{\sqrt{\lambda_{s}(x)}}{\sigma_{k}}(\mathcal{E}_{x})_{ks}v_{s}\right)^{2}\\
 & \leq\delta^{2}\sum_{k=1}^{d}\frac{1}{\sigma_{k}^{2}}\sup_{\|v\|=1}\left(\sum_{s=1}^{d}\sqrt{\lambda_{s}(x)}v_{s}\right)^{2}\\
 & \leq\delta^{2}(1+\epsilon)\sum_{k=1}^{d}\frac{1}{\sigma_{k}^{2}}\sup_{\|v\|=1}\left(\sum_{s=1}^{d}\sigma_{s}v_{s}\right)^{2}\\
 & =\delta^{2}(1+\epsilon)\sum_{k=1}^{d}\frac{1}{\sigma_{k}^{2}}\sum_{s=1}^{d}\sigma_{s}^{2}
\end{align*}
where the third line comes from Assumption \ref{ass:explicit eigenvector} and the fourth line comes from Assumption \ref{ass:explicit eigenvalue}. Putting the terms together yields
\[
\|L^{-T}\nabla^{2}U(x)L^{-1}\|\leq(1+\epsilon)\left(1+\delta\sqrt{\sum_{i=1}^{d}\sigma_{i}^{2}\sum_{i=1}^{d}\sigma_{i}^{-2}}\right)^{2}
\]
Now we follow the same procedure for $\|L\nabla^{2}U(x)L^{T}\|$:
\begin{align*}
\|L\nabla^{2}U(x)L^{T}\| & \leq\|\Sigma\mathcal{E}_{x}D_{x}^{-\frac{1}{2}}\|^{2}+2\|\Sigma\mathcal{E}_{x}D_{x}^{-1}\Sigma\|+\|\Sigma D_{x}^{-1}\Sigma\|\\
 & \leq\|\Sigma\mathcal{E}_{x}D_{x}^{-\frac{1}{2}}\|^{2}+2\|\Sigma\mathcal{E}_{x}D_{x}^{-1}\Sigma\|+(1+\epsilon)
\end{align*}
starting with the second term:
\begin{align*}
\|\Sigma\mathcal{E}_{x}D_{x}^{-1}\Sigma\|^{2} & =\sup_{\|v\|=1}\sum_{k=1}^{d}\left(\sum_{s=1}^{d}\frac{\sigma_{s}\sigma_{k}}{\lambda_{s}(x)}(\mathcal{E}_{x})_{ks}v_{s}\right)^{2}\\
 & \leq\delta^{2}\sum_{k=1}^{d}\sigma_{k}^{2}\sup_{\|v\|=1}\left(\sum_{s=1}^{d}\frac{\sigma_{s}}{\lambda_{s}(x)}v_{s}\right)^{2}\\
 & \leq\delta^{2}(1+\epsilon)^{2}\sum_{k=1}^{d}\sigma_{k}^{2}\sup_{\|v\|=1}\left(\sum_{s=1}^{d}\frac{1}{\sigma_{s}}v_{s}\right)^{2}\\
 & \leq\delta^{2}(1+\epsilon)^{2}\sum_{k=1}^{d}\sigma_{k}^{2}\sum_{s=1}^{d}\frac{1}{\sigma_{s}^{2}}
\end{align*}
and the first term:
\begin{align*}
\|\Sigma\mathcal{E}_{x}D_{x}^{-\frac{1}{2}}\|^{2} & =\sup_{\|v\|=1}\sum_{k=1}^{d}\left(\sum_{s=1}^{d}\frac{\sigma_{k}}{\sqrt{\lambda_{s}(x)}}(\mathcal{E}_{x})_{ks}v_{s}\right)^{2}\\
 & \leq\delta^{2}\sum_{k=1}^{d}\sigma_{k}^{2}\sup_{\|v\|=1}\left(\sum_{s=1}^{d}\frac{1}{\sqrt{\lambda_{s}(x)}}v_{s}\right)^{2}\\
 & \leq\delta^{2}(1+\epsilon)\sum_{k=1}^{d}\sigma_{k}^{2}\sup_{\|v\|=1}\left(\sum_{s=1}^{d}\frac{1}{\sigma_{s}}v_{s}\right)^{2}\\
 & =\delta^{2}(1+\epsilon)\sum_{k=1}^{d}\sigma_{k}^{2}\sum_{s=1}^{d}\frac{1}{\sigma_{s}^{2}}
\end{align*}
from which follows
\[
\tilde{\kappa}\leq(1+\epsilon)^{2}\left(1+\delta\sqrt{\sum_{i=1}^{d}\sigma_{i}^{2}\sum_{i=1}^{d}\sigma_{i}^{-2}}\right)^{4}
\]

\subsection{Tight $\delta$ dependence in Theorem \ref{thm:kappa_L_bound_1}}\label{correct delta dependence}

Take the first norm of (\ref{condition_number}): $\|L^{-T}\nabla^{2}U(x)L^{-1}\|=\|Q_{\pi}GD_{\pi}^{1/2}G^{T}Q_{\pi}^{T}Q_{\pi}D_{\pi}^{-1}Q_{\pi}^{T}Q_{\pi}GD_{\pi}^{1/2}G^{T}Q_{\pi}^{T}\|=\|D_{\pi}^{1/2}G^{T}D_{\pi}^{-1}GD_{\pi}^{1/2}\|$.
Similarly the second norm is $\|D_{\pi}^{-1/2}G^{T}D_{\pi}GD_{\pi}^{-1/2}\|$.
Therefore $\tilde{\kappa}=\lambda_{1}(M)\lambda_{2}(M)^{-1}$ where $M=D_{\pi}^{1/2}G^{T}D_{\pi}^{-1}GD_{\pi}^{1/2}$.
Since $G$ is a perturbation of angle $\arccos(1-\delta)$ it has
the form
\[
G=\begin{pmatrix}1-\delta & -\sqrt{\delta(2-\delta)}\\
\sqrt{\delta(2-\delta)} & 1-\delta
\end{pmatrix}
\]
and so 
\[
M=\begin{pmatrix}(1-\delta)^{2}+\delta(2-\delta)\frac{\lambda_{1}}{\lambda_{2}} & \left(\sqrt{\frac{\lambda_{1}}{\lambda_{2}}}-\sqrt{\frac{\lambda_{2}}{\lambda_{1}}}\right)(1-\delta)\sqrt{\delta(2-\delta)}\\
\left(\sqrt{\frac{\lambda_{1}}{\lambda_{2}}}-\sqrt{\frac{\lambda_{2}}{\lambda_{1}}}\right)(1-\delta)\sqrt{\delta(2-\delta)} & (1-\delta)^{2}+\delta(2-\delta)\frac{\lambda_{2}}{\lambda_{1}}
\end{pmatrix}
\]
where $\lambda_{1}$ and $\lambda_{2}$ are the diagonal elements
of $D_{\pi}$. We have that $\text{tr}(M)=2(1-\delta)^{2}+\delta(2-\delta)(\lambda_{1}\lambda_{2}^{-1}+\lambda_{1}^{-1}\lambda_{2})$
and $\det(M)=1$. That the determinant is one means that $\tilde{\kappa}=\lambda_{1}(M)^{2}$.
Using the trace-determinant formulation of the eigenvalues of a $2\times2$
matrix we have that
\[
\lambda_{1}(M)=\frac{1}{2}\left(2(1-\delta)^{2}+\delta(2-\delta)\left(\frac{\lambda_{1}}{\lambda_{2}}+\frac{\lambda_{2}}{\lambda_{1}}\right)+\sqrt{\left(2(1-\delta)^{2}+\delta(2-\delta)\left(\frac{\lambda_{1}}{\lambda_{2}}+\frac{\lambda_{2}}{\lambda_{1}}\right)\right)^{2}-4}\right)
\]
so, unless $\lambda_{1}=\lambda_{2}$ and $\Sigma_{\pi}$ is a multiple
of the identity, $\tilde{\kappa}=O(\delta^{4})$.

\subsection{Assumption \ref{ass:hessian localisation} implies Assumption \ref{ass:explicit eigenvalue}}\label{proof:A3 implies A1}
Weyl's inequality implies that
\[
\frac{\lambda_{i}(\nabla^{2}U(x))}{\sigma_{i}^{2}}\leq\frac{\lambda_{i}(LL^{T})+\lambda_{1}(\nabla^{2}U(x)-LL^{T})}{\sigma_{i}^{2}}\leq1+\frac{\|\nabla^{2}U(x)-LL^{T}\|}{\sigma_{i}^{2}}
\]
and so $\|\nabla^{2}U(x)-LL^{T}\|\leq\sigma_{d}^{2}\epsilon$ implies
Assumption \ref{ass:explicit eigenvalue} with the same $\epsilon$.

\subsection{Proof of Theorem \ref{thm:kappa_L_bound_2}}\label{proof:kappa_L_bound_2}

Based on the intuition gained from Proposition \ref{prop:symmetry_invariance} we can assume that
$L$ is symmetric, and so its left and right singular vectors are
simply its eigenvectors. Using \citep[Corollary~1]{yu:15}
with $\hat{\Sigma}=\nabla^{2}U(x)$ and $\Sigma=LL^{T}$ we have that
$\|v_{i}(x)-v_{i}\|\leq2^{\frac{3}{2}}\gamma^{-1}\|\nabla^{2}U(x)-LL^{T}\|$.
Rearranging, the Assumption \ref{ass:hessian localisation} gives $\langle v_{i}(x),v_{i}\rangle\geq1-4\gamma^{-2}\sigma_{d}^{-4}\epsilon^{2}$.
From \ref{proof:A3 implies A1}, Assumption \ref{ass:hessian localisation} gives us Assumption \ref{ass:explicit eigenvalue}
with the same $\epsilon$, and hence we can apply Theorem \ref{thm:kappa_L_bound_1}
with $\delta=1-(1-2\gamma^{-1}\sigma_{d}^{-2}\epsilon)^{2}$.

\subsection{Proof of Theorem \ref{thm:kappa_L_bound_3}}\label{proof:kappa_L_bound_3}

Using Proposition \ref{prop:symmetry_invariance} we assume that $L$ is symmetric. For the first supremum in the definition of $\tilde{\kappa}$ note that $\|\nabla^{2}U(x)-L^{2}\|=\|L^{T}(L^{-T}\nabla^{2}U(x)L^{-1}-\mathbf{I}_{d})L\|$.
Using the fact that $\sigma_{i}(BA)\leq\|B\|\sigma_{i}(A)$ and $\sigma_{i}(AC)\leq\sigma_{i}(A)\|C\|$
for matrices $A,B,C$ of appropriate sizes and all $i\in[d]$ \citep[exercise~1.3.24]{tao:12}
we have
\begin{align*}
\|L^{T}(L^{-T}\nabla^{2}U(x)L^{-1}-\mathbf{I}_{d})L\| & \geq\frac{\sigma_{1}(L^{T}(L^{-T}\nabla^{2}U(x)L^{-1}-\mathbf{I}_{d})LL^{-1})}{\sigma_{1}(L^{-1})}\\
 & =\sigma_{d}(L)\|L^{T}(L^{-T}\nabla^{2}U(x)L^{-1}-\mathbf{I}_{d})\|\\
 & \geq\sigma_{d}(L)\frac{\sigma_{1}(L^{-T}L^{T}(L^{-T}\nabla^{2}U(x)L^{-1}-\mathbf{I}_{d}))}{\sigma_{1}(L^{-T})}\\
 & =\sigma_{d}(L)^{2}\|L^{-T}\nabla^{2}U(x)L^{-1}-\mathbf{I}_{d}\|
\end{align*}
Therefore we can bound $\|L^{-T}\nabla^{2}U(x)L^{-1}-\mathbf{I}_{d}\|\leq \epsilon$ using Assumption \ref{ass:hessian localisation}.
Using the reverse triangle inequality $\|L^{-T}\nabla^{2}U(x)L^{-1}-\mathbf{I}_{d}\|\geq|\|L^{-T}\nabla^{2}U(x)L^{-1}\|-1|$
we get $\|L^{-T}\nabla^{2}U(x)L^{-1}\|\leq1+\epsilon$.

For the second supremum in the definition of $\tilde{\kappa}$ we use the same technique as the first supremum,
first noting that $\|\nabla^{2}U(x)^{-1}-L^{-2}\|\leq\|\nabla^{2}U(x)^{-1}\|\|L^{-2}\|\|\nabla^{2}U(x)-L^{2}\|\leq m^{-1}\epsilon$.
Employing the technique from before:
\begin{align*}
\|\nabla^{2}U(x)^{-1}-L^{-2}\| & =\|L^{-1}(L\nabla^{2}U(x)^{-1}L^{T}-\mathbf{I}_{d})L^{-T}\|\\
 & \geq\sigma_{d}(L^{-1})^{2}\|L\nabla^{2}U(x)^{-1}L^{T}-\mathbf{I}_{d}\|
\end{align*}
and hence $\|L\nabla^{2}U(x)^{-1}L^{T}-\mathbf{I}_{d}\|\leq\sigma_1(L)^2m^{-1}\epsilon$.
Using the reverse triangle inequality again gives $\|L\nabla^{2}U(x)^{-1}L^{T}\|\leq1+\sigma_1(L)^2m^{-1}\epsilon$.

\subsection{Proof of Proposition \ref{prop:ostrowski_rectangular}}\label{proof:ostrowski_rectangular}

Applying the non-rectangular form of Ostrowski's theorem \citep[Theorem~3.2]{higham:98a} gives for any $x\in\mathbb{R}^d$
\[
\lambda_d(B^TB)\lambda_{n-d+1}(\Lambda(x)) \leq \lambda_1(B^T \Lambda(x) B) \leq \lambda_1(B^TB)\lambda_1(\Lambda(x)),
\]
and similarly
\[
\lambda_d(B^TB)\lambda_n(\Lambda(x)) \leq \lambda_d(B^T \Lambda(x) B) \leq \lambda_1(B^T B)\lambda_d(\Lambda(x))
\]
Since $\kappa := \sup_{x \in \mathbb{R}^d}\lambda_1(B^T \Lambda(x) B)/\inf_{x \in \mathbb{R}^d}\lambda_d(B^T \Lambda(x) B)$ then applying the upper/lower bound to $\lambda_1(B^T \Lambda(x) B)$ and the lower/upper bound to $\lambda_d(B^T \Lambda(x) B)$ point-wise gives the upper/lower bound on $\kappa$ as desired.

\subsection{Proof of Proposition \ref{prop:mult_dalalyan}}\label{proof:mult_dalalyan}

Setting $\tilde{X}^T = (X^TX)^{-1/2}X^T$ and applying Proposition \ref{prop:ostrowski_rectangular} gives the result, noting that $\tilde{X}^T\tilde{X}=\mathbf{I}_d$.

\subsection{Proof of Proposition \ref{prop:mult_Hessian_at_point}}\label{proof:mult_Hessian_at_point}

First note that the preconditioned Hessian can be written
\[
L^{-T}\nabla^2U(x)L^{-1} = (X^T\Lambda(x^*) X)^{1/2}X^T \Lambda(x^*)^{1/2}\Lambda(x^*)^{-1/2} \Lambda(x) \Lambda(x^*)^{-1/2}\Lambda(x^*)^{1/2} X(X^T\Lambda(x^*) X)^{1/2}
\]
Setting $\tilde{X}^T := (X^T\Lambda(x^*) X)^{1/2}X^T \Lambda(x^*)^{1/2}$ and then applying the upper bound of Proposition \ref{prop:ostrowski_rectangular} to the matrix $\tilde{X}^T \Lambda(x^*)^{-1/2} \Lambda(x) \Lambda(x^*)^{-1/2} \tilde{X}$ gives the first inequality. The second follows from applying the same bound again to $\Lambda(x^*)^{-1/2} \Lambda(x) \Lambda(x^*)^{-1/2}$ and noting that
\[
\kappa(\Lambda(x^*))\leq\frac{\sup_{x\in\mathbb{R}^d}\lambda_1(\Lambda(x))}{\inf_{x\in\mathbb{R}^d}\lambda_d(\Lambda(x))}
\]

\subsection{Proof of Theorem \ref{RWM_mixing_bounds}}\label{proof:RWM_mixing_bounds}

If we take $\sigma^{2}=\xi / (Md)$, then \citep[Theorem~1]{andrieu:22}
implies that the spectral gap $\gamma_{\kappa}$ of the RWM algorithm
on a target with a $m$-strongly convex, $M$-smooth potential is bounded as follows:
\[
C\xi\exp(-2\xi)\frac{1}{\kappa}\frac{1}{d}\leq\gamma_{\kappa}\leq\frac{\xi}{2}\frac{1}{d}
\]
We will modify the proof of \citep[Lemma 47]{andrieu:22} so that the upper
bound on $\gamma_{\kappa}$ subsequently depends on $\kappa$. The
spectral gap of the RWM algorithm on a target $\pi$ with kernel $P$
is defined as $\gamma_{k}:=\inf_{f\in L_{0}^{2}(\pi)}(\mathcal{E}(P,f)/\text{Var}_{\pi}(f))$
where $\mathcal{E}$ is the Dirichlet form associated with $\pi$.
Define $g(x):=\langle v_{\text{max}},x-\mathbb{E}_{\pi}[X]\rangle$ where $v_{\text{max}}\in\mathbb{R}^{d}$
is the eigenvector associated with the greatest eigenvalue of $E_{\pi}\left(\nabla^{2}U(X)\right)^{-1}$.
The Cramér-Rao inequality gives that
\begin{align*}
\text{Var}_{\pi}(g(X)) & \geq v_{\text{max}}^{T}\mathbb{E}_{\pi}\left[\nabla^{2}U(X)\right]^{-1}v_{\text{max}}\\
 & =\lambda_{1}\left(\mathbb{E}_{\pi}\left[\nabla^{2}U(X)\right]^{-1}\right)\|v_{\text{max}}\|^{2}\\
 & =\lambda_{d}\left(\mathbb{E}_{\pi}\left[\nabla^{2}U(X)\right]\right)^{-1}\|v_{\text{max}}\|^{2}
\end{align*}
The second equality comes from the fact that $\mathbb{E}_{\pi}\left[\nabla^{2}U(X)\right]^{-1}$
is positive definite, since it is the inverse of the expectation of
a matrix that is itself positive definite.

Say $v \in \mathbb{R}^{d}$ is the eigenvector associated with the
smallest eigenvalue $\lambda_{d}(y)$ of $\nabla^2U(y)$ for a given $y\in\mathbb{R}^d$.
Then
\begin{align*}
\lambda_{d}\left(\mathbb{E}_{\pi}\left[\nabla^{2}U(X)\right]\right) & =\inf_{\|v\|=1}v^{T}\mathbb{E}_{\pi}\left[\nabla^{2}U(X)\right]v\\
 & \leq v^{T}\mathbb{E}_{\pi}\left[\nabla^{2}U(X)\right]v\\
 & =\mathbb{E}_{\pi}\left[v^{T}\left(\nabla^{2}U(X)-\nabla^2U(y)+\nabla^2U(y)\right)v\right]\\
 & \leq\sup_{x\in\mathbb{R}^{d}}\|\nabla^{2}U(x)-\nabla^2U(y)\|+\lambda_d(y)\\
 & \leq m(1+2\epsilon)
\end{align*}
where in the final line we use Assumption \ref{ass:Hessian only localisation}, and the fact that
Assumption \ref{ass:Hessian only localisation} implies
\[
\frac{\lambda_d(y)}{\lambda_d(x)}\leq 1 + \epsilon
\]
for all $x,y\in\mathbb{R}^d$ (see \ref{proof:A3 implies A1}) and hence $\lambda_d(y)\leq(1+\epsilon)\lambda_{d}(x)\leq(1+\epsilon)m$.
Therefore $\text{Var}_{\pi}(g(X))\geq m^{-1}(1+2\epsilon)^{-1}\|v_{\text{max}}\|^{2}$.
Upper bounding the Dirichlet form in the same way as \citep[Lemma~47]{andrieu:22}
gives $\mathcal{E}(P,g)\leq(1/2)\sigma^{2}\|v_{\text{max}}\|^{2}$,
and so
\[
\gamma_{k}=\inf_{f\in L_{0}^{2}(\pi)}\frac{\mathcal{E}(P,f)}{\text{Var}_{\pi}(f)}\leq\frac{\mathcal{E}(P,g)}{\text{Var}_{\pi}(g)}\leq\frac{\frac{1}{2}\sigma^{2}\|v_{\text{max}}\|^{2}}{m^{-1}(1+2\epsilon)^{-1}\|v_{\text{max}}\|^{2}}=\frac{1}{2}\xi\kappa^{-1}d^{-1}(1+2\epsilon)
\]

\subsection{Proof of Corollary \ref{cor:improved_gap}}\label{proof:improved_gap}

The lower bound for the spectral gap post-preconditioning is $\gamma_{\tilde{\kappa}}\geq C\xi\exp(-2\xi)\tilde{\kappa}^{-1}d^{-1}$
due to \citep[Theorem~1]{andrieu:22}. The target satisfies Assumption \ref{ass:hessian localisation}
so we can use Theorem \ref{thm:kappa_L_bound_3} to modify the bound: $\gamma_{\tilde{\kappa}}\geq C\xi\exp(-2\xi)(1+\epsilon)^{-1}\left(1+m^{-1}\sigma_1(L)^2\epsilon\right)^{-1}d^{-1}$. So, applying the upper bound found in Theorem \ref{RWM_mixing_bounds} to
the spectral gap before preconditioning, we see that a condition number
$\kappa$ such that
\[
\frac{1}{2}\xi\kappa^{-1}d^{-1}(1+2\epsilon')\leq C\xi\exp(-2\xi)(1+\epsilon)^{-1}\left(1+\frac{\sigma_1(L)^2}{m}\epsilon\right)^{-1}d^{-1}
\]
guarantees that $\gamma_{\tilde{\kappa}}\geq\gamma_{\kappa}$ and so increases
the spectral gap.

\subsection{Proof of Proposition \ref{prop:optimal_OU}}\label{proof:optimal_OU}
The preconditioned O-U process has generator $A_{L}=\text{tr}\left(L^{-1}L^{-T}\nabla^{2}\right)+\left\langle -L^{-1}L^{-T}\Sigma_{\pi}^{-1}x,\nabla\right\rangle $
for $x\in\mathbb{R}^{d}$, where $\nabla$ is the grad operator and
\[
\nabla^{2}:=\sum_{i=1}^{d}\frac{\partial^{2}}{\partial x_{i}^{2}}
\]
The spectral gap of a generator of a stochastic process with generator $A$ is defined as the smallest distance from its spectrum (excluding 0) to 0.

Denoting spectrum of $-L^{-1}L^{-T}\Sigma_{\pi}^{-1}$ by $\lambda_{1},...,\lambda_{d}$ \citet[Theorem~3.1]{metafune:02} has that the spectrum of the generator
$A_{L}$ is $\{\lambda=\sum_{i=1}^{d}n_{i}\lambda_{i}:n_{i}\in\mathbb{N}\}$.
Therefore the spectral gap is 
\[
\min\left\{ \left|\sum_{i=1}^{d}n_{i}\lambda_{i}\right|:n_{i}\in\mathbb{N}\backslash\{0\}\text{ for }i\in [d]\right\} =\min_{i\in[d]}\left|\lambda_{i}\right|
\]
Say $\left|\lambda_{i}\right|>1$
for some $i$ without loss of generality. Then the constraint $\left|\det(-L^{-1}L^{-T}\Sigma_{\pi}^{-1})\right|=1$
implies that $\left|\lambda_{j}\right|<1$
for some $j\neq i$, and hence the resulting spectral gap is strictly
less than one. In the case that $\left|\lambda_{i}\right|=1$
for all $i$, the spectral gap is exactly one, and therefore optimal.
This is achieved when $L=\Sigma_{\pi}^{-1/2}$ for any notion
of the matrix square root.

\subsection{Proof of Proposition \ref{prop:localise_covariance}}\label{proof:localise_covariance}

We assume WLOG that $U(x^{*})=0$. Taylor's theorem with integral
remainder has that
\[
U(x)=\int_{0}^{1}(1-t)(x-x^{*})^{T}\nabla^{2}U(x^{*}+t(x-x^{*}))(x-x^{*})dt
\]
(since $U(x^{*})=0$, $\nabla U(x^{*})=0$) from which we can deduce
\[
\frac{1}{2}(x-x^{*})^{T}\Delta_{-}(x-x^{*})\leq U(x)\leq\frac{1}{2}(x-x^{*})^{T}\Delta_{+}(x-x^{*})
\]
and hence 
\[
\exp\left(-\frac{1}{2}(x-x^{*})^{T}\Delta_{+}(x-x^{*})\right)\leq\exp(-U(x))\leq\exp\left(-\frac{1}{2}(x-x^{*})^{T}\Delta_{-}(x-x^{*})\right)
\]

with 
\[
\frac{Z_{\Delta_{+}}}{Z}\frac{1}{Z_{\Delta_{+}}}\exp\left(-\frac{1}{2}(x-x^{*})^{T}\Delta_{+}(x-x^{*})\right)\leq\frac{1}{Z}\exp(-U(x))\leq\frac{Z_{\Delta_{-}}}{Z}\frac{1}{Z_{\Delta_{-}}}\exp\left(-\frac{1}{2}(x-x^{*})^{T}\Delta_{-}(x-x^{*})\right)
\]
where $Z_{A}:=\sqrt{(2\pi)^{d}\det A^{-1}}$. For an arbitrary
$v\in\mathbb{R}^{d}$ we have that
\begin{align*}
v^{T}\Sigma_{\pi}v & =\frac{1}{Z}\int(v^{T}(x-\mu_{\pi}))^{2}\exp(-U(x))dx\\
 & \leq\frac{Z_{\Delta_{-}}}{Z}\mathbb{E}_{\mathcal{N}(x^{*},(\Delta_{-})^{-1})}[(v^{T}(X-\mu_{\pi}))^{2}]\\
 & \leq\frac{Z_{\Delta_{-}}}{Z}\left(\mathbb{E}_{\mathcal{N}(x^{*},(\Delta_{-})^{-1})}[(v^{T}(X-x^{*}))^{2}]-(v^{T}(x^{*}-\mu_{\pi}))^{2}\right)\\
 & \leq\frac{Z_{\Delta_{-}}}{Z_{\Delta_{+}}}\left(v^{T}(\Delta_{-})^{-1}v-v^{T}(x^{*}-\mu_{\pi})(x^{*}-\mu_{\pi})^{T}v\right)
\end{align*}
where the last inequality follows from the fact that $Z_{\Delta_{+}}\leq Z\leq Z_{\Delta_{-}}$.
We can construct a similar lower bound to give $$c((\Delta_{+})^{-1}-(x^{*}-\mu_{\pi})(x^{*}-\mu_{\pi})^{T})\leq\Sigma_{\pi}\leq c^{-1}((\Delta_{-})^{-1}-(x^{*}-\mu_{\pi})(x^{*}-\mu_{\pi})^{T})$$
defining $c:=(Z_{\Delta_{+}}/Z_{\Delta_{-}})=\sqrt{\det\Delta_{-}\det\Delta_{+}^{-1}}\leq1$.
This gives $P_{-}\leq\Sigma_{\pi}^{-1}\leq P_{+}$ where 
\begin{align*}
P_{+} & :=c^{-1}\left((\Delta_{+})^{-1}-(x^{*}-\mu_{\pi})(x^{*}-\mu_{\pi})^{T}\right)^{-1}\\
P^{-} & :=c\left((\Delta_{-})^{-1}-(x^{*}-\mu_{\pi})(x^{*}-\mu_{\pi})^{T}\right)^{-1}
\end{align*}
and hence
\begin{align*}
P_{+} & =c^{-1}\left(\Delta_{+}+\left(1-(x^{*}-\mu_{\pi})^{T}\Delta_{+}(x^{*}-\mu_{\pi})\right)^{-1}\Delta_{+}(x^{*}-\mu_{\pi})(x^{*}-\mu_{\pi})^{T}\Delta_{+}\right)\\
P_{-} & =c\left(\Delta_{-}+\left(1-(x^{*}-\mu_{\pi})^{T}\Delta_{-}(x^{*}-\mu_{\pi})\right)^{-1}\Delta_{-}(x^{*}-\mu_{\pi})(x^{*}-\mu_{\pi})^{T}\Delta_{-}\right)
\end{align*}
using the Woodbury identity. The fact that $(x^{*}-\mu_{\pi})^{T}\Delta_{\pm}(x^{*}-\mu_{\pi}) = \text{Tr}(D_\pm)$ gives the result.

We have that $\|\nabla^{2}U(x)-\Sigma_{\pi}^{-1}\|:=\sup_{v}\left|v^{T}\nabla^{2}U(x)v-v^{T}\Sigma_{\pi}^{-1}v\right|$.
Say the quantity inside the absolute value is positive. Then $v^{T}\nabla^{2}U(x)v-v^{T}\Sigma_{\pi}^{-1}v\leq v^{T}\nabla^{2}U(x)v-v^{T}P_{-}v$.
Now say the quantity is negative, giving us $v^{T}\Sigma_{\pi}^{-1}v-v^{T}\nabla^{2}U(x)v\leq v^{T}P_{+}v-v^{T}\nabla^{2}U(x)v$.
In sum this gives
\begin{align*}
\|\nabla^{2}U(x)-\Sigma_{\pi}^{-1}\| & \leq\sup_{v:\|v\|=1}\max\left\{ v^{T}\nabla^{2}U(x)v-v^{T}P_{-}v,v^{T}P_{+}v-v^{T}\nabla^{2}U(x)v\right\} \\
 & \leq\sup_{v:\|v\|=1}\max\left\{ v^{T}\Delta_{+}v-v^{T}P_{-}v,v^{T}P_{+}v-v^{T}\Delta_{-}v\right\} \\
 & \leq\max\left\{\|\Delta_{+}-P_{-}\|,\|P_{+}-\Delta_{-}\|\right\}
\end{align*}

\subsection{Proof of Proposition \ref{cor:localise_covariance_additive}}\label{proof:localise_covariance_additive}

From proposition \ref{prop:localise_covariance} we have that $P_{-}\leq\Sigma_{\pi}^{-1}\leq P_{+}$where
\begin{align*}
P_{+} & =c^{-1}\left(A+\epsilon\mathbf{I}_{d}+\left(1-(x^{*}-\mu_{\pi})^{T}(A+\epsilon\mathbf{I}_{d})(x^{*}-\mu_{\pi})\right)^{-1}(A+\epsilon\mathbf{I}_{d})(x^{*}-\mu_{\pi})(x^{*}-\mu_{\pi})^{T}(A+\epsilon\mathbf{I}_{d})\right)\\
P_{-} & =c\left(A-\epsilon\mathbf{I}_{d}+\left(1-(x^{*}-\mu_{\pi})^{T}(A-\epsilon\mathbf{I}_{d})(x^{*}-\mu_{\pi})\right)^{-1}(A-\epsilon\mathbf{I}_{d})(x^{*}-\mu_{\pi})(x^{*}-\mu_{\pi})^{T}(A-\epsilon\mathbf{I}_{d})\right)
\end{align*}
since $\Delta_{-}=A-\epsilon\mathbf{I}_{d}$ and $\Delta_{+}=A+\epsilon\mathbf{I}_{d}$.
The bounds stated at the end of the proposition give
\begin{align*}
\|\nabla^{2}U(x)-\Sigma_{\pi}^{-1}\| & \leq\max\left\{ \|\Delta_{+}-P_{-}\|,\|P_{+}-\Delta_{-}\|\right\} \\
 & =\max\left\{ \|(1-c)A+(1+c)\epsilon\mathbf{I}_{d}-c\tilde{P}_{-}\|,\|(c^{-1}-1)A+(c^{-1}+1)\epsilon\mathbf{I}_{d}-c^{-1}\tilde{P}_{+}\|\right\} \\
 & \leq(c^{-1}-1)\|A\|+(c^{-1}+1)\epsilon+\max\left\{c\|\tilde{P}_{-}\|,c^{-1}\|\tilde{P}_{+}\|\right\}
\end{align*}
where in the final line we use the triangle inequality.

\section{}\label{appendix:B}

The full matrix from equation (\ref{gaussian_covariance}) in Section \ref{subsubsec:diagonal_counterproductive} is as follows:

\begin{equation*}
    \Sigma_{\pi}=
\begin{pmatrix}21.548973 & 5.678587 & 18.667787 & 4.463119 & 6.855300\\
\star & 2.028958 & 4.863393 & 1.208146 & 2.109502\\
\star & \star & 16.261735 & 3.926604 & 5.726388\\
\star & \star & \star & 1.405213 & 1.409477\\
\star & \star & \star & \star & 2.905902
\end{pmatrix}
\end{equation*}

\bibliographystyle{unsrtnat}






\end{document}